\theoremstyle{plain}
\newtheorem{lemma}{Lemma}
\theoremstyle{definition}
\theoremstyle{plain}
\title{Shrouded sin taxes\thanks{I am grateful for helpful comments and useful suggestions from Youssef Benzarti, Andreas Born, Fabrizio Colella, Markus Dertwinkel-Kalt, Markus Eyting, Gyozo Gyöngyösi, Simon Heß, Aljoscha Janssen, Christine Laudenbach, Alexander Ludwig, Vincenzo Pezone, James Reade, Emmanuel Saez, Patrick Schmidt, Lorenzo Schönleber, Johannes Spinnewijn, Tomasz Sulko, Dmitry Taubinsky, Alfons Weichenrieder and seminar participants at the Goethe University Frankfurt, University of Reading, Leibniz Institute for Financial Research SAFE, and the WHU Young Scholar Tax Conference. Financial support from the Leibniz Institute of Financial research SAFE is gratefully acknowledged. }}
\author{Johannes Kasinger\thanks{Tilburg School of Economics \& Management, Tilburg University \& Leibniz Institute for Financial Research SAFE, j.kasinger@tilburguniversity.edu}}
\date{November 2023}
\begin{document}

\maketitle
\centerline{Latest version available \href{https://1drv.ms/b/s!Ap4Fc3AbgunUxl7P8ryllgVC1MXi?e=yyYsa8}{here}.}

\vfill

\begin{abstract}
\noindent  
Strategic shrouding of taxes by profit-maximizing firms can impair the effectiveness of corrective taxes. 
This paper explores tax shrouding and its consequences after the introduction of a digital sin tax designed to discourage harmful overconsumption of online sports betting in Germany. 
In response to the tax reform, most firms strategically shroud the tax, i.e., exclude tax surcharges from posted prices.
Using an extensive novel panel data set on online betting odds, I causally estimate the effect of the tax on consumer betting prices.
Consumers bear, on average, 76\%  of the tax burden.
There is considerable and long-lasting heterogeneity in effects conditional on shrouding practices. Firms that shroud taxes can pass 90\% of the tax onto consumers, while the pass-through rate is 16\% for firms that directly post tax-inclusive prices.
To understand the results' underlying mechanisms and policy implications, I propose an optimal corrective taxation model where oligopolistic firms compete on base prices and can shroud additive taxes.
Tax shrouding is only attainable in equilibrium if (some) consumers underreact to shrouded attributes. 
According to the theoretical predictions, the empirically identified heterogeneity suggests that strategic tax shrouding significantly attenuates the positive corrective welfare effects of the tax.
The results prompt regulating shrouding practices in the context of corrective taxation.\\

\noindent {\bf JEL Classification}: H22, H23, D43, D83, Z28\\
\noindent {\bf Keywords}: shrouded attributes, optimal taxation, tax salience, digital taxes, partitioned pricing, online gambling, sin taxes. \\\bigskip
\end{abstract}



\vfill

\vfill

\thispagestyle{empty}

\newpage
\section{Introduction}\label{sec:intro}
\setcounter{page}{1}

Consumers often underreact to shrouded (i.e., non-salient or hidden) attributes of goods \citep{dellavigna2009psychology}. There are numerous examples where firms strategically shroud relevant price attributes to exploit these underreactions.\footnote{Examples of such practices include add-on pricing \citep[e.g.,][]{ellison2005model, gabaix2006}, hidden shipping and handling costs \citep[e.g.,][]{hossain2006plus, brown2010}, strategic price complexity \citep{carlin2009strategic} and complicated or confusing product descriptions \citep[e.g.,][]{ellison2009search, chioveanu2013price}.}
When firms can choose how to present and collect taxes, they may optimally shroud those taxes, e.g., by actively excluding tax surcharges from posted prices as frequently observed in the aviation sector \citep{bradley2020hidden}. 
Tax shrouding decreases the salience of taxes, which attenuates the behavioral responses to those taxes and thus influences their welfare effects \citep{chetty2009salience, taubinsky2018attention}. 
 Various theoretical studies on add-on pricing and strategic price complexity \citep[e.g.][]{gabaix2006, carlin2009strategic} further suggest that responses to shrouded attributes and equilibrium outcomes can substantially differ when salience is not a passive but an active decision by firms, depending on consumers' attention to shrouded price information (and the implied firms' motives).\footnote{For instance, Bayesian consumers who infer that products with strategically shrouded add-on prices are likely to be less attractive offers may or may not lead to unshrouding in equilibrium depending on the market structure \citep{gabaix2006}.} 
 
 Consequently, strategic tax shrouding and the underlying market structure have important welfare implications. These implications are essential for corrective taxation because its effectiveness in discouraging the harmful overconsumption of certain goods directly depends on (perceived) tax-induced price changes \citep{bernheim2018behavioral}.\footnote{From a social welfare perspective, goods can be overconsumed if individuals do not fully account for the negative effects of their consumption on others in society---externalities---or on their (future) selves---internalities \citep{allcott2019should}. Typical examples of such goods include environmentally harmful goods or sin goods, such as cigarettes, alcohol, or gambling. Corrective taxes are a central policy tool to limit their negative effects on social welfare.} 
Nevertheless, strategic tax shrouding and its consequences are understudied. The literature typically assumes that consumers optimize fully with respect to tax-inclusive prices, which rules out the relevance of tax shrouding. The tax salience literature, in contrast, relaxes this assumption but almost exclusively focuses on settings where tax salience is exogenous and independent of firms' strategic decisions. 

This article fills this gap by examining the prevalence, effects, and welfare implications of strategic tax shrouding by firms in the context of a corrective sin tax on (online) sports betting.
I document that tax shrouding is a widespread response by firms to the introduction of the 2012 German sports betting tax. 
The tax imposes a 5\% duty on betting turnovers generated by German customers and has to be remitted by all betting agencies that serve German customers, irrespective of their jurisdictional location.\footnote{Note that the average expected net return of a (random) 1\euro{} bet from a bookmaker's perspective is around 7\%, implying a considerable levy on firms' profits.}
By exploiting the induced quasi-experimental variation and a novel extensive panel data set on online betting prices, I estimate the pass-through rates of the tax onto consumers. The paper employs a difference-in-differences (DID) framework, comparing the changes in average consumer betting prices in the German market with changes outside the German market. 
Firms can pass around 76\% of the tax onto consumers, i.e., tax-inclusive consumer prices increased on average by about 3.8 percentage points after the tax reform. Estimated pass-through rates are heterogeneous depending on firms' shrouding practices, which is inconsistent with predictions of standard taxation models \citep{kotlikoff1987tax, weyl2013pass}. 
Firms that actively exclude taxes from posted prices can pass 90\% of the tax onto consumers, while the pass-through rate is 16\% for firms that directly post tax-inclusive prices.\footnote{Note that the pass-through rates refer to the effective consumer prices, i.e., posted prices plus shrouded tax surcharges. Accordingly, a pass-through rate of 90\% for shrouding agencies implies a decrease in posted prices of 10\%, given that the average applied shrouded tax surcharge is equal to the size of the tax.} 

To better understand the underlying mechanisms and policy implications, I propose a stylized sin tax model based on \citet{odonoghue2006optimal} and \citet{varian1980model} that allows for strategic tax shrouding and heterogeneously attentive consumers. According to the model, the empirical results imply that several consumers underreact to shrouded taxes.
The positive corrective effects of the sin tax are undermined by the strategic shrouding behavior of profit-maximizing firms. 
In addition, the employed homogeneous linear sin tax can only effectively implement the first-best outcome when attention to shrouded attributes is homogeneous across consumers. 
To ensure the effectiveness of the sin tax, policymakers should require firms to post tax-inclusive prices. While the study focuses on a sin tax, the policy implications also extend to the optimal design of externality correction policies, more generally.

The German sports betting tax provides a suitable quasi-experimental setting to study the effectiveness and strategic shrouding of sin taxes in a digital context. Importantly, the vast majority of betting agencies that target the German market followed the tax rules despite a missing jurisdiction in Germany. The tax effectively generated considerable tax revenues, which also has important implications for the functioning of digital service taxes.\footnote{Using data from a commercial provider that also serves the betting agencies and official bodies, I find that around 80\% and 90\% of online sports betting revenue are taxed. Until 2020, the German sports betting tax generated revenues of more than 2.6bn Euro according to administrative tax data, making it a significant source of tax revenue.} The primary rationale behind German sports betting regulation and taxation is to prevent betting addiction and problem gambling (German interstate gaming treaty §1). The tax rules do not restrict how and whether to present or collect the tax. In contrast to most other countries that deduct betting taxes from gross betting revenues, i.e., the total wagered amount minus all winnings by bettors \citep{haucap2021glucksspielstaatsvertrag}, Germany bases the sports betting tax on betting turnover. This peculiarity implies that the German tax is independent of a bet's price, analogous to a classical per unit sin tax (see section \ref{sec:bet_primer}). Moreover, there was no other effective taxation of gambling services before and after the reform, and the regulatory landscape in Germany for online sports betting remained essentially unchanged between 2009 and 2018 besides the tax reform. 

The data set combines web-scraped information on betting odds from 68 international betting agencies for more than 80,000 sports events between 2009 and 2018 with information from various other sources. The data set allows to infer implicit betting prices and identify heterogeneity in tax effects conditional on strategic tax shrouding by firms. 
I focus on the pass-through of the tax, which is a central measure to evaluate the policy’s corrective and welfare effects \citep{weyl2013pass}. The estimated average effective pass-through rate is stable over time and shows a slightly increasing trend. 
The betting prices before the tax reform develop very similar within and outside the German market. Treatment timing is homogeneous and plausibly exogenous to betting prices. Accordingly, the identifying DID assumptions hold and effects are not biased because of heterogeneous treatment timing \citep{goodman2021difference, sun2021estimating, borusyak2021revisiting}, allowing for a causal interpretation of the average effects.
The empirical results are robust to different specifications and more restrictive categorizations of the control and treatment groups. 


In response to the tax, the majority, but not all, betting agencies employ a shrouding policy that drives a wedge between posted and effective tax-inclusive betting prices.
Shrouding agencies advertise tax-exclusive prices but deduct taxes after a specific bet is selected---similar to hidden shipping surcharges. Non-shrouding agencies post tax-inclusive prices directly. The agencies introduced shrouding policies on different dates. All agencies that established a shrouding policy kept that policy until the end of the observation period.\footnote{Two out of ten agencies targeting the German market did not shroud taxes over the entire observation period. Six agencies started with tax shrouding within six months after the reform.}  I estimate heterogeneous pass-through rates by two approaches using a DID design: i) subsample analyses where I only consider a subset of German agencies depending on the established shrouding policy; ii) a framework where I interact ``treatment'' with a dummy variable that indicates whether a shrouding policy was in place for a given event and bookmaker.
While shrouding policies are endogenous decisions by firms, average prices and trends before the tax reform were similar for shrouding and non-shrouding agencies, suggesting that non-German agencies still provide a valid counterfactual for the subsample analyses. The estimated difference in pass-through rates using approach ii) is around 80\%. This difference is slightly higher than for the subsample analyses as average effects are not diluted by ``not yet shrouded'' events.\footnote{This result is confirmed by the assessment of one of the main competitors in their 2012 annual report, \citet{bwin2013annual}:  ``Following the introduction of a 5\% turnover tax on sports betting in Germany on 1 July 2012, along with the rest of the industry, we began to pass on the majority of the tax to customers by requiring that winning customers pay a 5\% withholding on their pay-out''.} The heterogeneity persists until the end of the observation period, even if the gap decreases over time.
Furthermore, the difference persistence of this difference implies that firms cannot profitably ``debias'' competitors' consumers \citep{gabaix2006}. 


The theoretical model I am proposing has two extensions to the standard optimal sin tax model. First, the sin good is not supplied under perfect competition but under imperfect Bertrand-Nash price competition in the spirit of \citet{varian1980model}. 
Firms set salient base prices and can shroud additive taxes that apply to the sin good. Second, consumers can perfectly observe the base prices but are heterogeneous regarding their attention toward the shrouded tax.\footnote{To keep the model tractable, I assume a static model where consumers are either fully attentive or completely inattentive to shrouded taxes and homogeneous otherwise.} 
Suppose all consumers are fully attentive to the tax surcharge. In that case, equilibrium prices are symmetric and equal to firms' marginal costs, implying a full pass-through of taxes onto consumers. The implications of classical corrective taxation models \citep{pigou1920economics, odonoghue2006optimal} carry over to the extended model: (i) the sin good is overconsumed compared to the welfare-optimal solution because consumers do not fully internalize costs on their (future) selves when making their decision due to self-control problems or other behavioral mistakes; ii) a corrective per-unit sin tax equal to the average mistake of the marginal consumer restores the welfare-optimal solution. iii) in the case of a homogeneous underreaction to taxes \citep{chetty2009salience, dellavigna2009psychology}, the optimal tax rate increases and equals the average marginal mistake divided by tax salience or attention parameter $\theta$ \citep{farhi2020optimal}.\footnote{Note that as consumers are still homogeneous, Bertrand competition will push firms' profits to zero. However, the corrective effect of the tax decreases.}

If (some) consumers are inattentive to shrouded taxes, (some) firms will optimally exploit this inattention and shroud taxes. Consumers overconsume the sin good in equilibrium, and sin taxes are less effective (or even completely ineffective) in restoring the welfare-optimal solution. If attention to shrouded taxes is heterogeneous, a linear corrective tax-transfer scheme that applies equivalently to all consumers is ineffective in restoring the first-best welfare-optimal solution. The second-best sin tax rate depends on the distribution of consumers' attention to shrouded surcharges. The government faces a tradeoff between ``overcorrecting'' attentive consumers and ``undercorrecting'' the consumption of inattentive consumers. However, banning tax shrouding practices is one simple and effective policy that restores the effectiveness of sin taxes in implementing the welfare-optimal solution independent of the fraction of attentive consumers. 

In the case where self-control problems are negatively correlated with attention (e.g. due to impulsive gambling behavior), the adverse welfare effects of tax shrouding are more pronounced as the individuals that would benefit from the sin tax the most react to it the least. Similarly, a correlation between attention and income can affect the regressivity of the tax and, thus, its welfare implication depending on the correlation's sign \citep[on the welfare implications of the regressivity of sin taxes see][]{gruber2004tax, goldin2013smoke, allcott2019regressive}.\footnote{The theoretical results imply that strategic shrouding behavior by firms undermines the effectiveness of (positive) corrective taxation. However, in the case of subsidies, say, on environmentally-friendly products, the motives of the government and firms align. It is optimal for firms to boost the subsidy salience to make their offers more attractive---amplifying the effectiveness of subsidy-based externality correction policies. Consequently, corrective subsidies may be preferable to taxes in settings where strategic shrouding by firms is significant for consumers' behavioral responses.} 


Following the theoretical implications, the observed pricing strategies and pass-through rates in the empirical part of the paper are only sustainable if (some) consumers underreact to shrouded tax surcharges. Consequently, the empirical results indicate that the average experienced increase in consumer prices is substantially higher than the perceived increase in prices. 
This implies that strategic tax shrouding considerably decreases the positive corrective welfare effect of the German sports betting tax. 
While these results are indicative of consumption adaptions by bettors, the betting revenue data is not granular enough to make a conclusive statement about the effect on actual consumption. However, overall betting revenues in Germany strongly increased and developed very similarly to other European countries before and after the tax reform, which suggests a limited corrective effect of the betting tax. Requiring agencies to directly post tax-inclusive odds can help restore the corrective effect. \citet{bradley2020hidden} provide evidence of the effectiveness of such a policy in the context of US air carriers. They find, in line with my empirical results and theoretical predictions, that introducing full-fare advertising rules, which required airlines and travel agents in the US to advertise tax-inclusive prices on their website, led to a decrease in consumers' tax incidence and demand.

My study contributes to different strands of economic literature. First, numerous behavioral public finance studies provide compelling evidence on the importance of behavioral frictions for markets and optimal economic policy \citep[e.g.][]{chetty2009salience, Finkelstein2009, sallee2011surprising, goldin2013smoke, allcott2015, Feldmann2015, taubinsky2018attention, benzarti2020goes, farhi2020optimal, bradley2020hidden, kroft2020salience}. 
In their seminal paper on tax salience, \citet{chetty2009salience} note that the tax incidence on consumers is lower when ``producers must actively ``shroud'' a tax levied on them in order to reduce its salience''. Yet, theoretical and empirical evidence on this channel is scarce. My study closes this gap in the context of a digital sin tax, bridging the behavioral public finance literature to a rich literature in industrial organization that examines the effect of shrouded attributes and strategic price obfuscation in different market settings \citep[e.g.][]{ellison2005model, gabaix2006, hossain2006plus,  carlin2009strategic, ellison2009search, brown2010, ellison2012search, piccione2012price, kosfeld2017add, heidhues2017}. 

Second, there is a growing literature in public finance, industrial organization and marketing that examines the pass-through and the effects of taxes on various traditional (physical) sin goods, such as sugar-sweetened beverages \citep{allcott2019regressive, allcott2019should, seiler2021impact, keller2023express}, alcohol \citep{kenkel2005alcohol, hindriks2019heterogeneity}, cigarettes \citep{barnett1995oligopoly, gruber2004tax, harding2012heterogeneous}, and marijuana \citep{hollenbeck2021taxation}. Some of these studies show full pass-through of these taxes, or even over-shifting---one possible outcome in models with imperfect competition \citep{anderson2001tax, weyl2013pass}.\footnote{There is also a renewed interest among industrial economists in studying tax pass-through to learn more about consumer preferences, market power, and competition in different markets \citep[e.g.][]{miravete2018market, pless2019pass, miravete2020one, hollenbeck2021taxation}.} 
My findings align with \citet{decicca2013pays}, who find that consumer price search affects the pass-through of cigarette excise taxes. I contribute to this literature by showing that firms can affect ``experienced'' and ``perceived'' pass-through rates and the corrective effect of sin taxes by strategic tax shrouding.
Moreover, to the best of my knowledge, this is the first study that examines the pass-through of sin taxes on online sports betting prices. Sin taxes on sports betting can be expected to become increasingly policy-relevant in the near future, given the recent legalization of sports betting in the US \citep{Supreme2017} and the persistent unprecedentedly growth in global sports betting revenues \citep{montone2021optimal}.

Third, my study builds on and contributes to the literature on taxation and imperfect competition \citep{stern1987effects, delipalla1992comparison, anderson2001tax, weyl2013pass}. In a recent paper, \citet{kroft2020salience} extend the standard tax salience model to imperfect competition. They find that tax salience and market structure interact so that higher tax salience can increase tax incidence under certain conditions. My study adds to their comprehensive analysis and related studies \citep{bradley2020hidden} by relaxing the assumption of exogenous tax salience in a price competition model inspired by \citet{varian1980model} with an application to corrective taxation.

The remainder of the paper is organized as follows. Section 2 gives more background on the tax reform, online sports betting markets, and how shrouding affects the difference between posted and effective prices. Sections 3 and 4 discuss the data and empirical strategy. Section 5 presents the empirical results. Section 6 lays out the theoretical model and discusses its policy implications. Section 7 concludes.

\section{Background}

\subsection{Institutional background}

This section gives an overview of the relevant institutional details of the German sports betting tax reform and regulations of the European and German online sports betting market. It further presents some descriptive statistics on the development of online betting revenues in Europe to put the study into perspective. Appendix \ref{sec:app_inst_details} describes the underlying regulatory framework in detail.

\subsubsection*{The German sports betting tax and online sports betting regulation}\label{sec:sports_tax}

In July 2012, Germany extended the geographical scope of its sports betting tax code, effectively introducing a new 5\% tax on online sports betting turnovers generated by German bettors.\footnote{According to the Racing Bets and Lottery Tax Act (RBLTA, §17), the term turnover includes all expenses incurred by the bettor to participate in the wager.}  Betting agencies must remit the tax on aggregated monthly turnovers, irrespective of their jurisdictional location and whether the provider operates lawfully in Germany. There is no regulation on how or whether to display or collect the tax from consumers. According to §1 of the German interstate Gaming treaty (``Glücksspielstaatsvertrag''), the main goal behind regulating and taxing online sports betting is to prevent problem gambling and betting addiction.

Before the reform, online sports betting generated virtually no tax revenue because the preceding \emph{Sports Bet Tax} only applied to bets that were "organized" in Germany. According to German tax law, the organizer of a bet is defined as the entity that fixes the odds and not the entity that brokers a bet. Consequently, betting agencies, even if established or with some physical presence in Germany, avoided taxation and the national sports betting ban by simply operating from liberal foreign EU entities, mostly from Gibraltar and Malta \citep{englisch2013taxation}.  The tax reform in 2012 extended the geographical scope of the tax code to bettors, making the Sports Bet Tax applicable if the bettor \emph{or} the organizer of a bet is domiciled in Germany.\footnote{Domiciled in Germany means that the bettor has a broadly defined residence or his habitual abode in Germany. If a German resident has placed the wager on an occasional stay outside of Germany, the tax does not apply. \citep{englisch2013taxation}.} Accordingly, all online betting agencies that provide their services to German customers became liable for the tax \citep{englisch2013taxation}. So, while the statutory tax rate actually decreased from 16.7\% to 5\%, the effective tax rate for online sports betting increased from zero to 5\%. In this study, we thus refer to a newly introduced sports betting tax. 

Before and after the reform, betting services are exempted from the German VAT if they are subject to the Sports Bet Tax \citep{englisch2013taxation}. The Sports Bet Tax does further take precedence over federal and state duties. Moreover, international betting agencies are usually not liable to business taxation as they operate from low-tax foreign entities, implying that the turnover tax was and is the only effective taxation of online sports betting before and after the reform. On the consumer side, there are no additional taxes in Germany, as gambling winnings, including profits from sports betting, are excluded from the personal income tax.

Except for the online sports betting tax, the regulatory landscape for online sports betting in Germany was in abeyance between 2008 and 2018 (see Appendix \ref{sec:app_inst_details} for details).\footnote{ A statement in the 2009 annual report of the betting provider \textit{sporting bet} summarizes the industry view (and strategy) on the German gambling regulation quite fittingly: "Furthermore, enforcement action against operators where they actively target German residents (including through local marketing) has been curbed due to the lack of clarity in the legal position. In our view, therefore, legislation that was intended to almost comprehensively block online gambling has had only limited effect and the general inability of the German government to block online gambling websites, coupled with the questionable legality of the legislation, has led to a continued supply of online gambling services, and an absence of extra-territorial enforcement against the activity."}  The German regulatory framework for online gambling and betting was characterized by a division of tax and regulatory competencies between the federal and the state level. This division of competencies led to opaque and ineffective legislation, which was at variance with European law \citep{eugh2010} and even German constitutional law. In this regulatory environment, betting agencies, domiciled in liberal (primarily European) jurisdictions outside Germany where their services were admissible, could serve the German demand for sports betting in a semi-legal (due to the free movement of services in the EU) and more or less unrestricted grey market tolerated by German authorities \citep{rebeggiani2017neue}. The regulatory framework effectively changed only in 2021. 

The German case provides an ideal (quasi-experimental) setup to study the incidence of a sports betting sin tax. First, the tax was effective in raising considerable considerable tax revenues and the vast majority of betting agencies that target the German market followed the tax rules despite a missing jurisdiction in Germany (see Section \ref{sec:revenues_sport}). Second, Germany taxes the betting turnover, which makes the tax analogous to a classical excise sin tax. In contrast, the vast majority of countries levy taxes on gross betting revenues, i.e., the total wagered amount minus all winnings by bettors, \citep{haucap2021glucksspielstaatsvertrag}. Note that a 5\% tax on turnover constitutes and considerable levy, given that the average expected gross betting revenue per \euro{} wagered on a single game is around 0.07 Euro cents (see section \ref{sec:sport_data}). Third, there was no other effective taxation of gambling services besides the turnover tax before and after the reform. So, the estimation of the tax incidence is not diluted by other taxes, such as corporate taxation or VATs. 

\subsubsection*{The rise of sports betting in Europe and Germany}

Thanks to the internet and deregulation, sports betting---and gambling in general---is now available to a broad audience, and the sector has grown unprecedentedly over the past years. The \citet{ec2012press} describes online gambling as "one of the fastest growing service activities in the EU". European online gross betting revenues, i.e., the total wagered amount minus all winnings by bettors, grew from around \euro{}2.4bn in 2008 to over \euro{}12.1bn in 2019 (H2 Gambling Capital 2021, see more details in section \ref{sec:revenues_sport}).\footnote{Revenues include onshore (i.e., regulated/taxed activity) and offshore (i.e., the revenue of operators targeting the market without holding a local license) revenues. Revenues on the completely unregulated black market are not included. The total wagered amount includes bonuses.}

As can be seen in Figure \ref{fig:bettingrevenues_countries}, German online betting revenues exhibit a similar trend to other European countries, growing from \euro{}206m in 2008 to \euro{}1,280m in 2019. European sports betting revenues account for around 36.3\% of global sports betting revenue in 2019, second to Asia with a share of 47.2\%. If only online sports betting services are considered, Europe had the largest share with over 50\% in revenues. In contrast, sports betting revenues in North America only account for 6.7\% of global sports betting revenues \citep{ibia2021}. However, after the legalization of sports betting by the \citeauthor{Supreme2017} in 2018, the sports betting revenues in the US seem to start following a similar trend to those in Europe \citep{forbes2022betting}.

		\begin{figure}[!ht]
		\caption{Total online gross betting revenues in different countries (in \euro{}M)}
		\centering	
	
		\includegraphics[width=\textwidth]{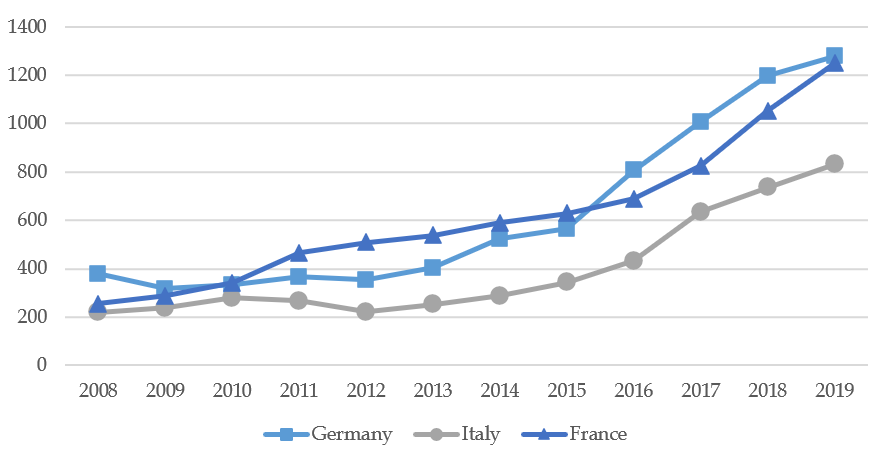}
		\begin{minipage}{15.8cm}  \footnotesize  \emph{Notes:} 
		This figure illustrates the annual total online gross betting revenues between 2008 and 2020 in Germany, Italy, and France (in \euro{}M). Gross betting revenues are equal to the total wagered amount (including bonuses) minus all winnings by bettors. Revenues include onshore (i.e., regulated/taxed activity) and offshore (i.e. revenue of operators targeting the market without holding a local license) revenues. Source: H2 Gambling Capital – July 2021.  
		\end{minipage}
		  \label{fig:bettingrevenues_countries}
	\end{figure}

\subsection{The price of a bet and tax shrouding}\label{sec:bet_primer}

Bets constitute simple contingent claims on different outcomes of a game with a well-defined asset and state price (the inverse of the decimal odds), which gives betting markets a direct application to financial markets. For this reason, betting markets are well-suited to serve as a natural asset pricing laboratory to test market efficiency and rationality \citep[e.g.][]{figlewski1979subjective, camerer1989does, brown1993fundamentals, woodland1994market, levitt2004gambling, snowberg2010explaining, moskowitz2021asset}. This paper does not consider these issues but focuses on the pass-through of betting taxes on the event's betting margin charged by the bookmakers. 

In the context of this paper, bets can be best understood as recreational consumption goods  that come with a price equal to the expected net return of a (random) bet on a specific event from a bookmaker's perspective. Goods are very homogeneous across providers. 
A price of zero implies an actuarial fair bet with an expected return of zero. 
Throughout this paper, we consider fixed odds betting markets, where prices---in contrast to parimutuel markets---are fixed at the time of ``purchase''.
For each event $i$, there are $n$ mutually-exclusive outcomes ($s=1,2,...n$). In this paper, the outcomes are defined by the game's result: i) home team wins; ii) away team wins; or iii) draw.\footnote{In Basketball, Hockey, and American Football, usually, games never end with a draw as teams play overtime(s) until one team wins the game. For this reason, we concentrate on ``two-way'' markets for these sports that exclude a draw as a potential outcome. In the infrequent event of a draw (only possible in American Football), bettors' wagers are paid back.} Bookmaker $b$ promises bettors to pay $r_{isb}$ for every \$1 wagered if (in event $i$) outcome $s$ is realized and zero otherwise. $r_{isb}$ are the effective odds, including all potential tax surcharges (see section \ref{sec:bet_primer}).\footnote{I use the decimal odds notation in this paper that directly refers to this definition. The other common notations are fractional (or English) odds and moneyline (or American) odds. The odds can be easily converted and contain the same information.}  The odd $r_{isb}$ for an arbitrary outcome $s$ can be expressed as:
\begin{equation}
\label{eqn:singlemargin}
    r_{isb}=\dfrac{1}{\theta_{ib} \pi_{is}},
\end{equation}
where $\theta_{ib}$ is bookmaker $b$'s markup for event $i$. $\pi_{is}$ represents the (objective) likelihood of state $s$, which is unobserved. However, as we observe the odds for all outcomes of an event we can derive $\theta_{ib}$ and the implied probabilities of each outcome, which is defined as $\Bar{\pi}_{is}=\frac{1}{ \theta_{ib} r_{isb}}$. To do that, we rearrange Equation \ref{eqn:singlemargin} and sum over all possible states:\footnote{Note that in this step I implicitly assume that the bookmaker chooses markups independently of the type of state or its likelihood.}
\begin{equation*}
   \theta_{i, b} \underbrace{\sum_{s=1}^n\pi_{i, s}}_{=1} = \sum_{s=1}^n \dfrac{1}{r_{i,s,b}}
\end{equation*}
$\theta_{i, b}$ can be understood as the amount a bettor has to pay to receive a safe payout at the end of the game. Note that a safe payout of 1\$ can be secured by betting $\frac{1}{r_{i,s,b}}$ \$ on outcome $s$ of each possible outcome of an event $i$. In the case of unbiased odds, i.e., the odds give an unbiased prediction of the true probabilities, the expected and certain payouts equate. I define the consumer price of a one-dollar bet on event $i$ as:
\begin{equation}
\label{eqn:price}
   p_{ib}= 1- \dfrac{1}{\theta_{ib}},
\end{equation}
which gives the expected net return of a 1\$ bet from the bookmaker's perspective. A lower price makes a bet ``cheaper'' to the bettor and less profitable to the betting agency. This notion has the advantage that it is directly applicable to standard microeconomic theory. A tax on betting turnover is independent of the price and can be directly treated as a standard per unit excise tax. For instance, a 5\% tax on turnover decreases the expected net revenue of the betting agency per \$1 wagered (i.e., the price of a bet) by \$0.05.

Note that betting agencies sometimes use different marketing tools to attract consumers that may indirectly affect the overall attractiveness of different providers. Examples include cash bonuses when a bettor deposits money on their account for the first time, loyalty bonuses that are usually ``paid'' out as free bets, and ``odds boost'' that improve the effective odds if bettors combine a large number of bets. Most of these bonuses do not affect the betting prices of single events. Historical data on these programs are not available. Using data from \textit{top100bookmaker} that provide a snapshot of bonus programs for several agencies in the year 2021, bonus programs seem not to vary considerably between betting agencies, so I abstract from differences in bonuses in this paper.

\subsubsection*{Posted and effective betting prices}\label{sec:tax_shrouding}

On a website of a betting agency, the bettors can choose from numerous events in different leagues and for varying sports. Figure \ref{fig:odds_example} presents a typical example from a website of one of the main competitors. The list on the left-hand side represents different leagues the bettor can choose. In the main panel, one can find different events---Premier League games in our example---with respective odds for a \emph{home win}, \emph{draw}, or \emph{away win}. There are other betting types for the same game, such as the number of scored goals or spread bets appearing once a bettor clicks on a particular game. 

		\begin{figure}[!ht]
		\caption{Example of odds presentation on a typical website}
		\centering	
	
		\includegraphics[width=\textwidth]{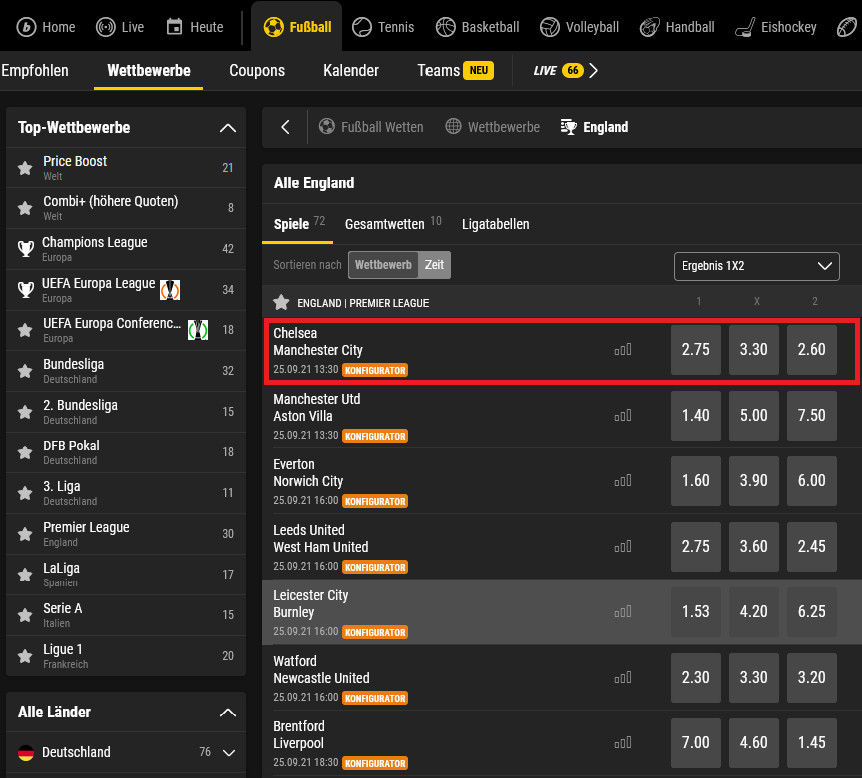}
		\begin{minipage}{15.8cm}  \footnotesize  \emph{Notes:} 
This figure presents an example of how tax-exclusive odds are displayed on a typical website of a betting agency (browser version) before the bettor chooses one of the outcomes. The list on the left represents the different Leagues a bettor can select. Outcome ``1'' represents a win of the home team, ``X'' a draw, and ``2'' a win of the away team. Source: Bwin.com  - September 2021. 
		\end{minipage}
		  \label{fig:odds_example}
	\end{figure}

Before bettors choose one particular bet, odds for different outcomes are always presented exclusive of a potential additional tax surcharge. All agencies refer to the surcharge as a ``tax'', similar to a value-added tax, which gives the impression that agencies must pass the tax onto consumers. In principle, there is, however, no direct link as the tax must be remitted by suppliers, irrespective of whether agencies are applying a surcharge. The tax authority does not directly receive the tax surcharge. In principle, the agencies can apply other surcharges common in other markets, such as in the aviation industry or the e-commerce sector more generally \citep{dertwinkel2020buy, bradley2020hidden}. However, there is no example of a betting agency applying additional surcharges. 

Let denote the posted, advertised surcharge-exclusive odds as $\Tilde{r}_{isb}$. $\Tilde{r}_{isb}$ may differ from the effective surcharge-inclusive odds $r_{i,s,b}$, depending on endogenous shrouding practices set by the betting agencies. Accordingly, the surcharge-inclusive betting prices, $p_{ib}$,  may also differ from the implied posted tax-exclusive betting prices $\Tilde{p}_{ib}$, which are equal to:
\begin{equation*}
   \Tilde{p}_{ib}=1- \dfrac{1}{\Tilde{\theta}_{ib}}
\end{equation*}
where $\Tilde{\theta}_{ib}=\sum_{s=1}^n \frac{1}{\Tilde{r}_{isb}}$ is the implied surcharge-exclusive markup for bookmaker $b$ and event $i$. Prior to the tax reform, there was no difference between posted and effective betting prices, implying that $p_{ib}=\Tilde{p}_{ib}$. Let's denote $\tau$ as the shrouded tax surcharge, i.e., the difference between the tax-inclusive and posted betting prices:
\begin{equation*}
\tau=p_{ib}-\Tilde{p_{ib}}=\dfrac{1}{\Tilde{\theta}_{ib}}-\dfrac{1}{\theta_{ib}}
\end{equation*} 

As evident from agencies' annual reports, shrouding policies are an active decision by betting agencies to increase the effective betting prices after the introduction of the sports betting tax in Germany. For instance, the 2012 annual report of \emph{Bwin}, one of the leading international betting agencies, makes the following statement: the ``gross win margin [...] increased sharply in the second half [...] also because from August 2012 we began to withhold 5\% of players' winning bets to cover the gaming tax due'' \citep{bwin2013annual}. 

The shrouding policies are irrelevant to the tax effect on consumer prices according to the standard theory of taxation \citep{kotlikoff1987tax}. In these models, the pass-through rates depend on tax-inclusive price elasticities of demand and supply and, in the case of imperfect competition, additionally on market power \citep{weyl2013pass}. See section \ref{sec:app_standard} in the appendix for details.

In principle, three different policies betting agencies set that affect the difference between posted surcharge-exclusive and effective surcharge-inclusive betting odds: 
\begin{itemize}
    \item[i)] \textbf{No further deductions}: The posted odds are equivalent to the effective odds: $r_{i,s,b}=\Tilde{r}_{isb}$. There is no difference between advertised and effective betting prices: $\tau=0$.
    \item[ii)] \textbf{Deducting tax surcharge from winnings}: In the case of a successful bet, the bettors receive only $(1-t)$ of the implied surcharge-exclusive payout. As a result, the effective tax-inclusive odds are equal to: $r_{i,s,b}=\Tilde{r}_{isb}(1-t)$. This implies: $\tau=\frac{t}{\Tilde{\theta}_{ib}}$ (See Appendix \ref{sec:sport_app_tau} on the derivation of $\tau$ for the different shrouding policies).
    \item[iii)] \textbf{Deducting tax surcharge from wager}:  The tax surcharge applies directly to the effective wager, such that the effective wager multiplied by $(1+t)$ equals the amount the bettor pays. This implies that the effective tax-inclusive odds are equal to: $r_{i,s,b}=\frac{\Tilde{r}_{isb}}{(1+t)}$. This policy is slightly more favorable to the bettor than policy ii), given that surcharge-exclusive odds equate (and $t>0$). In this case: $\tau=\frac{t}{(1+t)\Tilde{\theta}_{ib}}$.
\end{itemize}

In principle, betting agencies could also set policies in which the deductions are larger or smaller than $t$. However, there is no example of a betting agency that does that. To get a better idea about the different shrouding policies, turn again to the example above (Figure \ref{fig:odds_example}). 

Suppose a bettor wants to bet on a draw between Chelsea and Manchester City (marked by the red box). The posted tax-exclusive odds are equal to 3.30, implying that without further deductions, the bettor would receive \euro{}3.30 for each \euro{}1 wagered in the case the game ends with a draw. If the bettor clicks on one of the outcomes, draw in our example, there is a new panel popping up---the betting slip (``Wettschein'')---that has an input window for the amount a bettor wants to wager. After the amount is entered, the potential net payout, i.e., tax-inclusive odds multiplied by the wager, is calculated and presented to the bettor. This is the first time during the interaction with the website when tax-inclusive odds are directly posted to the bettor.

Figure \ref{fig:policies} illustrates examples of such a betting slip of three betting agencies using different shrouding policies for the same bet (draw in the game Chelsea vs. Manchester City). Panel ii) refers to the betting agency in our example above applying shrouding policy ii) (``deducting tax from winnings''). A bet of \euro{}100 would not result in a potential net payout of \euro{}330 as implied by the tax-exclusive odds of 3.30 but only in \euro{}313.50, which is equal to $330*(1-0.05)$.\footnote{The yellow button ``Einzahlen'' button translates to Deposit, as the bettor has no money in her account. If the bettor has sufficient funds, clicking the yellow button will place the bet.} In contrast, the betting agency in panel i) does not impose further deductions on the posted odd, implying that the potential net payout is equal to the wager times the posted odds, \euro{}330 in our example. To advertise that there are no further tax surcharges (``Gebühr'') on this website, the bookmaker additionally displays the surcharges a bettor would pay with a bookmaker that deducts 5\% of the winnings as in the previous example (\euro{}16.50), see the crossed-out red number. Panel iii) illustrates an example of an agency that employs the third policy (``deducting a surcharge from the effective wager'').\footnote{Please note that the bookmaker in this example uses $t=0.053$, the new tax rate after the reform in July 2021. Before the reform, the agency used $t=0.05$ as well.} The tax-exclusive odd for this agency is only equal to 3.15. Compared to the agency in Panel ii), the agency is more transparent about how the net payout (``Möglicher Gewinn'') is calculated, displaying the steps of the underlying calculation, including effective wager (``Effektiver Einsatz'') and the ``tax'' (``Steuer'') directly. The presentation of the betting slip for other agencies by shrouding policies is very similar. 

    \begin{figure}[!ht]
     	\caption{Example of different shrouding policies}
     	 \label{fig:policies}

        \begin{minipage}[b]{0.31\linewidth}
            \centering
            \caption*{i) No further deductions}
            \includegraphics[width=\textwidth]{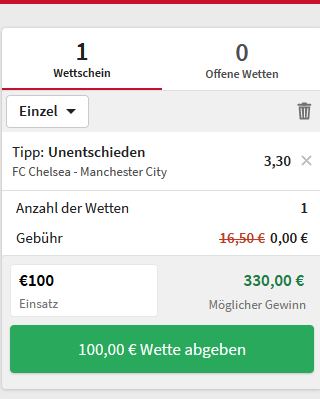}
            \label{fig:a}
        \end{minipage}
        \hspace{0.2cm}
        \begin{minipage}[b]{0.31\linewidth}
            \centering
            \caption*{ii) Deduction of winnings}
            \includegraphics[width=\textwidth]{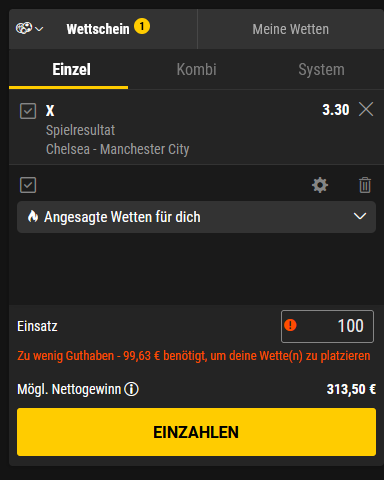}
            \label{fig:b}
        \end{minipage}
            \hspace{0.2cm}
        \begin{minipage}[b]{0.31\linewidth}
            \centering
            \caption*{iii) Deduction of wager}
            \includegraphics[width=\textwidth]{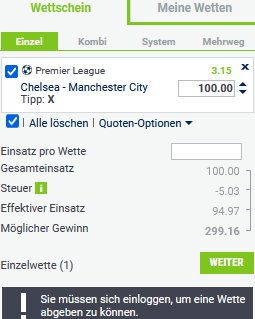}
            \label{fig:c}
        \end{minipage}
        \begin{minipage}{15.8cm}  \footnotesize  \emph{Notes:} 
This figure presents examples of different betting slips (browser version) after the bettor has chosen one of the potential outcomes: a draw in the game between Chelsea FC and Manchester City in our example. The gross odds on the top are presented, excluding the tax, for all three panels. All of the betting slips show the potential net payouts on the bottom. The size of the net payouts depends on different shrouding policies: i) No (additional) deduction of taxes, i.e., effective tax-inclusive odds are equal to posted tax-exclusive odds; ii) deduction of 5\% of the winning, i.e., effective tax-inclusive odds are equal to 95\% of the posted odds; iii) deduction of 5\% of the wager, i.e., the effective tax-inclusive odds are equal to $1/1.05$ of the posted odds. Note that example iii) already applies the new tax rate of 5.3\%, which was introduced in Juli 2021. Source: i) Tipico.com; ii) bwin.com; iii) Bet365.com  - September 2021.   
		\end{minipage}
    \end{figure}

\section{Data}\label{sec:sport_data}

The main data set covers fixed sports betting odds from 68 online betting agencies between 2008 and 2018 for more than 80,000 events in 16 different leagues. The events in the data set span leagues in six different countries and five different sports. The number of unique events per league and corresponding fractions are displayed in Table \ref{sum_comp}. For each event, I observe the pre-match closing odds by betting agencies on each potential outcome of the game, which allows me to calculate the price of a bet as defined above. If not stated otherwise, betting prices $p_{ib}$ always refer to the effective surcharge-inclusive betting prices. For each event, the data set further records the result, date, league, country of competition, and the sport of the event. The 68 agencies include agencies that share the same brand but operate from different domains. For instance, \textit{Bwin} operates from different domains in some countries, e.g. in Italy from \emph{.it}; and in Spain from \emph{.es}. I include each domain as a single agency because, typically, the prices differ across different domains for the same brand. In total, there are 55 unique bookmaker brands in the sample.\footnote{ A list of all betting agencies, the number of included events and observations per year, and the average price by betting agencies can be found in the online appendix.}

\subsection{Data sources}

This section gives an overview of the data sources used to construct the data set. More detailed information are provided in appendix.

\subsubsection*{Betting prices} 

The raw odds data is obtained from \textit{oddsportal.com}---a commercial website that provides odd comparison tools and an odds archive for many events, betting types, and periods to their users since 2008.\footnote{Data for the betting agency Tipico, one of the leading agencies in Germany, is scraped directly from the results archive from their website. The Tipico archive was available until its relaunch in 2020 under the following URL: https://www.tipico.com/de/ergebnisse/. Data were available from the start of 2012 onwards. All odds before 2012 refer to odds scraped from the website of their partner label Rivalo. Rivalo provides betting services for Swiss customers only, using identical odds as Tipico before the tax reform, and still uses the old website with an active scrapeable results archive that dates back to 2008. In line with this claim (motivated by personal observations), the odds in 2012 before the introduction of the tax available on the Tipico and the Rivalo webpage are identical. There is further no jump in betting prices between 2011 and 2012. All raw odds are surcharge-exclusive.} \textit{Oddsportal.com} collects the odds data directly from online betting agencies' websites---either via their Application Programming Interface (API) or xml (Extensible Markup Language) odds feeds (as confirmed by email exchange). Thus, one limitation of the data set is that agencies report only one set of odds for each event. However, betting agencies could, in principle, use different odds for different regions or countries conditional on regional IP addresses. Historical archived odds may thus not necessarily refer to the actual odds displayed to bettors in different countries. However, several points assures me that this is not a major concern for this study.
First, to the best of my knowledge, there is no evidence in the agencies' annual reports or on other websites that agencies have engaged in such behavior between 2009 and 2018. Second, while it is not possible to check which odds were displayed to German customers in the past, I have manually compared the "live" odds displayed on oddsportal.com with odds on the agencies' websites for a subsample of events in 2021. The comparisons were made from a German IP address. I could not find any discrepancies between the odds on oddsportal.com and the agencies' websites.\footnote{Exemplary screenshots can be provided upon request.}  Third, if the odds on oddsportal.com were considerably off from the true odds in different countries, their services would provide no value to their consumers. Fourth, there are several "local" agencies that operate from country-specific domains that are only accessible by domestic bettors.  

\begin{table}[!htbp]\centering
\caption{List of included competitions and number of unique matches}
\label{sum_comp}
\scalebox{0.95}{
\begin{threeparttable}

\begin{tabular}{ l|c|c|c|c|c } 

 League & Sport & Country  & \# Outcomes & \# Matches & Percent\\  \hline
 Bundesliga & Soccer & Germany & 3-way & 3,060 & 3.78  \\ 
 2. Bundesliga & Soccer & Germany & 3-way & 3,109 & 3.84 \\ 
 3. Liga & Soccer & Germany & 3-way & 3,802 & 4.70 \\ 
 Premier League & Soccer & England & 3-way &  3,801 &  4.70  \\ 
 Championship & Soccer & England & 3-way & 5,558 & 6.87   \\ 
 Primera Division & Soccer & Spain & 3-way & 3,807 & 4.71 \\ 
 Segunda Division & Soccer & Spain & 3-way & 4,691 & 5.80 \\ 
 Serie A & Soccer & Italy & 3-way & 3,819 &  4.72   \\ 
 Serie B & Soccer & Italy & 3-way & 4,668 & 5.77  \\
 Ligue 1 & Soccer & France & 3-way & 3,789 & 4.68  \\ 
Ligue 2 & Soccer & France & 3-way & 3,818 & 4.72  \\ 
Bundesliga & Handball & Germany & 3-way & 3,097 & 3.83 \\
Bundesliga & Basketball & Germany & 2-way & 3,305 & 4.08  \\
 NBA & Basketball & USA & 2-way & 13,702 & 16.94 \\ 
  NFL & Am. Football & USA & 2-way & 3,323 & 4.11  \\ 
   NHL & Hockey & USA & 2-way & 13,560 & 16.76  \\ 
   \hline
   Total & & & & 80,909 &100.00 \\
   \hline
 
\end{tabular}
\begin{tablenotes}[flushleft]
\item \emph{Notes:} The table lists the leagues included in the data set together with the respective sport, the country of the league, and the number of potential outcomes of the considered bet. The last two columns report the number of unique matches and respective frequencies for each league.
\end{tablenotes}
\end{threeparttable}
}
\end{table}

\subsubsection*{Agency characteristics and shrouding policies}

To determine the type and timing of the betting agency's "shrouding" policy, I match the odds data with hand-collected data from different sources. The sources include annual reports of the listed agencies, websites from betting agencies, web-scraped information from different commercial betting websites and forums, including \textit{Top100bookmakers.com}, as well as information from bodies that represent the interest of betting agencies, such as the  "Deutsche Sportwettenverband" (DSWV, German sports betting association) and the European Gaming and Betting Association (EGBA).\footnote{I use snapshots from \textit{Top100bookmakers.com} at two dates: September 2021, which I collected, and data that was already used in \citet{montone2021optimal}. I am immensely thankful to Maurizio Montone, who provided me with the data. Other websites include: \textit{www.wettforum.info}, \textit{www.wettbuero.de}, \textit{www.wettsteuer.com} and others.} I further use information from official regulatory bodies, including the Malta Gaming Authority (MGA). The matched data set further allows me to infer whether a betting agency targets the German market or not.

\subsubsection*{Online betting and tax revenues}

Data on online betting revenues was provided by H2 Gambling Capital (website: \url{https://h2gc.com/}). H2 Gambling Capital is a betting and gaming consultancy that is one of the leading market data providers in the gambling industry. Their data is regularly used by the main listed sports betting agencies in their annual reports \citep[e.g.][]{bwin2013annual} and by the International Betting Integrity Association -- IBIA \citep{ibia2021}. The data is primarily taken from information by official gambling regulatory bodies in different jurisdictions, governments, and tax authorities. It only includes revenues of operators licensed in at least one jurisdiction. Entirely unlicensed activities (black market) are not included. Collecting and disaggregating the data, H2 further uses information from national monopoly operators (where available for land-based and onshore interactive activity) and information derived from major operators and suppliers active in each market, including publicly available reports. To determine whether an agency targets a particular market, H2 undertakes a biannual audit of all operators, examining whether agencies proactively target domestic players, accept domestic players, or (not) prohibit domestic players from betting.\footnote{
According to email correspondence, H2 further considers any actual market information by companies and feedback from their users (they currently have well over 1,000 users with access to different local operators and suppliers) when allocating activity across markets. Furthermore, they also consider any restrictions in place, any onshore supply, and several generic measures ranging from GDP to broadband and smartphone penetration to check the size of each market. The author thanks H2 for providing the data on a pro-bono basis.} Tax revenues refer to administrative tax data of the German Federal Ministry of Finance.


\subsection{Descriptive statistics}

In total, there are around 3.3m observations in the data set. Each observation refers to a unique event and betting agency combination. The increasing number of observations over time (Table \ref{tab:sum_price}) can be explained by an increasing number of betting agencies included in the data set.\footnote{Missing observation for some agencies in specific years or specific events can have three explanations: i) the agencies were not operating in a given year; ii) the agencies were providing betting services for a specific event; or iii) oddsportal.com was not collecting the odds for a betting agency for a specific year or event. According to an email by oddsportal.com, the reasons are purely technical and primarily related to betting agencies' APIs.} Differences in observations across leagues are primarily due to a varying number of games per season. Betting agencies offer a wide range of leagues that usually cover the major leagues and sports considered in this paper, even at the start of their services.

\begin{table}[!htbp]\centering
\caption{Summary statistics - betting prices by league and year}
\label{tab:sum_price}
\scalebox{0.95}{
\begin{threeparttable}

    \begin{tabular}{l|ccc}
    \multicolumn{1}{c|}{League - Sport} & Mean  & Std.Dev. & Observations \\ \hline
     \textbf{All agencies:}     &       &       &  \\
         [0.5em]
    All events & 0.0706 & 0.0317 & 3,289,135 \\ 
    Soccer events     &  0.0734     &   0.0315    & 2,067,137  \\ \hline
    \textbf{Agencies with data for all years:}     &       &       &  \\
         [0.5em]
    All events & 0.0727& 0.0273 &  2,259,449 \\ 
    Soccer events     &  0.0770     & 0.0271      & 1,407,215 \\ \hline
    \textbf{All agencies by League:} &       &       &  \\
    [0.5em]
    Bundesliga - Soccer & 0.0621 & 0.0280 & 147,894 \\
    2 Bundesliga - Soccer & 0.0792 & 0.0302 & 147,501 \\
    3 Liga - Soccer & 0.0887 & 0.0289 & 161,139 \\
    Premier League - Soccer & 0.0576 & 0.0284 & 184,909 \\
    Championship - Soccer & 0.0740 & 0.0325 & 264,213 \\
    Primera Division - Soc.. & 0.0616 & 0.0284 & 183,514 \\
    Segunda Division - Soc.. & 0.0847 & 0.0311 & 220,208 \\
    Serie A - Soccer & 0.0627 & 0.0286 & 184,610 \\
    Serie B - Soccer & 0.0848 & 0.0307 & 214,155 \\
    Ligue 1 - Soccer & 0.0659 & 0.0276 & 179,113 \\
    Ligue 2 - Soccer & 0.0819 & 0.0285 & 179,881 \\
    Bundesliga - Handball & 0.0932 & 0.0321 & 115,624 \\
    Bundesliga - Basketball & 0.0752 & 0.0282 & 110,088 \\
    NBA - Basketball & 0.0615 & 0.0294 & 474,717 \\
    NFL - American Football & 0.0585 & 0.0280 & 112,485 \\
    NHL - Hockey & 0.0630 & 0.0314 & 409,084 \\ \hline
    \textbf{All agencies by Year:} &       &       &  \\
         [0.5em]
    2009  & 0.0745 & 0.0275 & 199,885 \\
    2010  & 0.0725 & 0.0264 & 229,491 \\
    2011  & 0.0747 & 0.0301 & 255,470 \\
    2012  & 0.0752 & 0.0315 & 288,038 \\
    2013  & 0.0770 & 0.0327 & 325,451 \\
    2014  & 0.0748 & 0.0329 & 353,585 \\
    2015  & 0.0708 & 0.0320 & 377,322 \\
    2016  & 0.0678 & 0.0323 & 391,238 \\
    2017  & 0.0649 & 0.0322 & 437,352 \\
    2018  & 0.0622 & 0.0312 & 431,303 \\ \hline
\textbf{German agencies by shrouding policy:} &       &       &  \\
         [0.5em]    
No "shrouding" (i) & 0.0737 & 0.0199 & 24,351 \\
Deduction from winnings (ii)  & 0.0726 &  0.0225 & 143,842 \\
Deduction from wager (iii) & 0.0745 & 0.0207 & 47,068 \\

    \hline \hline
    \end{tabular}%
\begin{tablenotes}[flushleft]
\item \emph{Notes:} The first four rows of the table display the mean and standard deviation of betting prices for all events and soccer events only, considering all agencies or only agencies for which odds data is available in all years. The two middle panels show the summary statistics for all agencies disaggregated by leagues and years. The last rows illustrate the mean and standard deviation of pre-reform betting prices of treated agencies grouped according to the respective implemented post-reform "shrouding" policy. The last column reports the respective number of observations. One observation refers to a price for a specific event and bookmaker.
\end{tablenotes}
\end{threeparttable}
}
\end{table}

Table \ref{tab:sum_price} illustrates the mean and standard deviation of (tax-inclusive) betting prices. The average betting price for all events is equal to 0.0706, which implies, as described above, that a betting agency makes around 0.0706 \euro{} per wagered \euro{}.
The mean of betting prices for soccer games is equal to 0.0734, and the standard deviation slightly decreases from 0.0317 to 0.0315. If we only consider agencies for which odds data is available for all years, the average betting prices slightly increase to 0.0727 and 0.0077, respectively. In contrast, the standard deviation decreases to around 0.027.

In general, the betting prices vary across different leagues. Prices are smaller for "higher" leagues, where the number of bettors and competition among agencies is usually higher. A larger standard deviation accompanies this difference. Furthermore, prices tend to be smaller for sports with two possible outcomes (Basketball, American Football, Hockey).
These findings align with results by \citet{montone2021optimal}, who rationalizes the findings with risk-averse bookmakers. The prices across different countries, controlling for the division, are very similar. Only English leagues tend to exhibit a comparably smaller price, which may relate to the larger market size and competition in the UK. Prices show a decreasing trend after 2013. Before 2013, betting prices are relatively constant over time.\footnote{Please note that betting prices from 2012 onward can be affected by the sports betting tax.}

Figure \ref{fig:hist_prices} in the appendix illustrates the distribution of betting prices across all observations for all sports and soccer games. The distribution is relatively symmetric around the mean of betting prices. There is, however, some excess mass around a betting markup of around 0.1, a standard assumption often made in the literature that studies relative betting prices \citep[e.g.][]{levitt2004gambling}. A log transformation considerably skews the data to the right, as prices are smaller than 1, which is, besides the direct interpretability, one of the main reasons why I concentrate on absolute betting prices in my empirical analyses.

\section{Methodology}

\subsection{Empirical strategy}\label{sec:empirical_strategy}

\subsubsection*{Average tax effect on consumer prices}

To estimate the effect of the betting tax on consumer prices, I use a DID framework. I compare changes in consumer betting prices (as defined above) around the 2012 tax reform between German and Non-German agencies. DID frameworks are commonly used to causally estimate the pass-through of taxes \citep[e.g.][]{doyle20082, harding2012heterogeneous, hindriks2019heterogeneity, harju2018firm, harju2022heterogeneous}. In the main specification, I estimate the following equation:

\begin{equation}
\label{eq:tax_avg_effect}
p_{i, m(t, c)}=\beta_0 + \beta_1 T_{i,t} + \alpha_i + \lambda_t + \psi_c +\epsilon_{i, m},
\end{equation}

where $p_{i,m(t,c)}$ is the betting price of agency $i$ for event $m$, taking place in week $t$ and league $c$. $T_{i,t}$ is a dummy variable that equals one if agency $i$ is in the treatment group, i.e., is active in the German market, and the event takes place after the tax was in place.  $\alpha_i$, $\lambda_t$, $\psi_c$ are agency, week and league fixed effects. I use robust standard errors clustered on the agency level.

The main identifying assumption is that the betting prices of treated and control agencies would have followed a parallel time trend in the absence of the tax reform, controlling for time, agency, and league fixed effects. If the parallel trend assumption holds, $\beta_1$ identifies the average causal effect of the betting tax on consumer betting prices with an equal weight on each observation. Note that the tax is applied to all treated agencies at the same time. Consequently, the issue of biased multi-period DID estimators (commonly referred to as "two-way fixed-effects" models) in settings with heterogenous treatment effects and timing does not apply.\footnote{ A detailed discussion of these issues can be found here: \citet{de2020two, goodman2021difference, sun2021estimating}.} In addition, the treatment can be expected to be exogenous to betting prices since the tax reform is mainly determined by unrelated political processes and rulings by courts (see section \ref{sec:sports_tax}). Yet, changing regulation after the German tax reform in other European countries may still confound our estimates. Note, however, that most European countries, including the UK, Italy, and France, comprehensively regulated online sports betting already before 2012, implying that effects of these regulation should already show up in the pre-trends.\footnote{\emph{Thomson Reuters} and the \emph{Global Legal Group} provide detailed overviews of online gambling law in different European countries: \url{https://uk.practicallaw.thomsonreuters.com/} and \url{https://iclg.com/practice-areas/gambling-laws-and-regulations}.}

As there was effectively no taxation on online betting services in Germany before the reform and no additional taxes after the reform, the change in consumer prices can be directly translated to the pass-through rate of the sports betting tax on consumer prices. The estimated pass-through of the tax is equal to $\frac{d p}{d t}=\frac{\beta_1}{0.05}$.\footnote{This notation implicitly assumes that the marginal (or local) pass-through rate is equal to the non-marginal (or global) pass-through rate, which is only true for the linear case. The global pass-through rate is equal to the average local pass-through rate over the respective tax change interval.} Besides the Sports Bet Tax, producer prices $q$ are assumed to follow the same trend as consumer prices before and after the tax reform, such that the incidence of producers is equal to $\frac{dq}{dt}=1-\frac{dp}{dt}$.

While it is not possible to test the parallel trends assumption directly, the assumption is more plausible if betting prices in the control and treatment groups showed a common trend prior to July 2012. Figure \ref{fig:avgprices_all} show average quarterly and weekly betting prices over time for all agencies in the sample. The circles signify average weekly betting prices in each group pooled and aligned by the respective quarter. The solid line illustrates the average quarterly betting prices. Prior to the introduction of the tax reform, which is illustrated by the vertical red line, the average betting prices moved very similarly over time in both groups. We observe common pre-trends if we consider all sports (Panel a), only soccer events are considered (Panel c) and if exclude cross-leagues, i.e., Non-German Leagues in the Treatment group and German games in the Control group, the trends prior to the tax introduction remain very similar (Panels b and d). The same holds true if we only consider agencies for which odds data is available for all periods (Figure \ref{fig:avgprices_compl}). Betting prices are higher on average before July 2012 for agencies with a foreign domain than in the treatment group (see Panels a and b of Figure \ref{fig:avgprices_all_robust}). At the same time, the pre-trends between these two groups are parallel, implying that the parallel trend assumption also holds in this case.

\begin{figure}

    \centering
      \caption{Avg. tax-inclusive consumer betting prices of all German and Non-German betting agencies over time}
    
    \begin{subfigure}[t]{0.48\textwidth}
        \centering
        \caption{ All sports }
        \includegraphics[width=\linewidth]{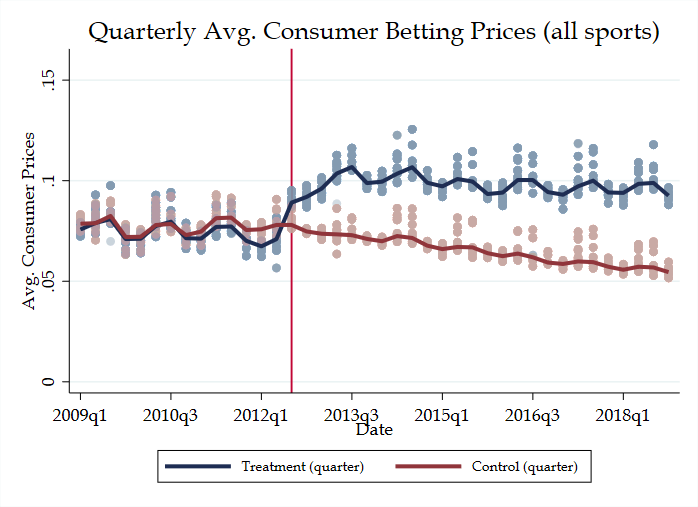} 
         \label{fig:timing1}
    \end{subfigure}
    \hfill
        \begin{subfigure}[t]{0.48\textwidth}
        \centering
        \caption{Soccer games}
        \includegraphics[width=\linewidth]{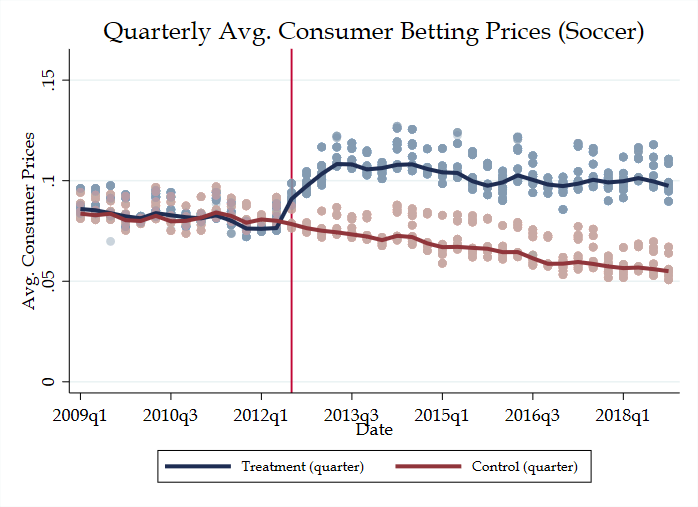} 
        \label{fig:timing1}
    \end{subfigure}

    \vspace{0.5cm}
    \begin{subfigure}[t]{0.48\textwidth}
        \centering
         \caption{ All sports - excluding cross leagues }
        \includegraphics[width=\linewidth]{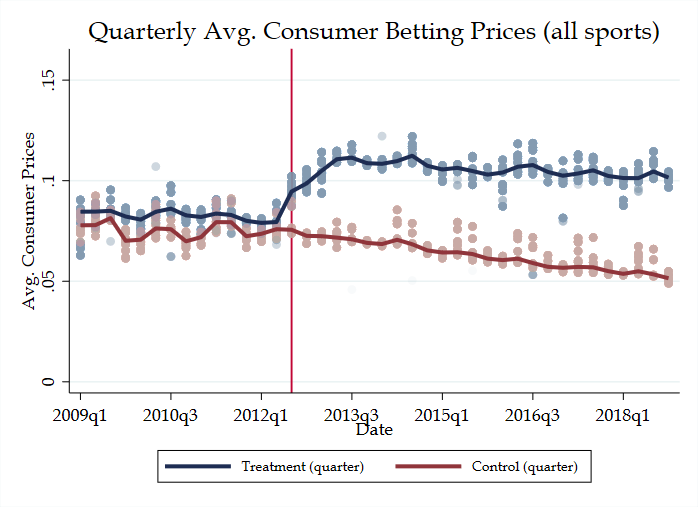} 
        \label{fig:timing2}
    \end{subfigure}
    \hfill
    \begin{subfigure}[t]{0.48\textwidth}
        \centering
        \caption{Soccer games - excluding cross leagues}
        \includegraphics[width=\linewidth]{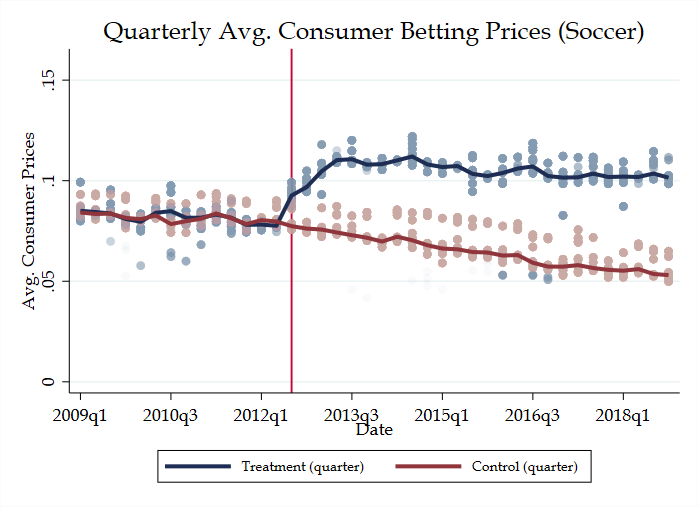} 
         \label{fig:timing2}
    \end{subfigure}
     		\begin{minipage}{15.8cm}  \footnotesize  \emph{Notes:} 
This figure shows the average quarterly and weekly betting prices over time for agencies that target the German market (Treatment) and for agencies that do not target the German market (Control). All agencies in the Control and Treatment group are considered, N=65 (Treatment=10/Control=55). The circles signify average weekly betting prices in each group, pooled and aligned by the respective quarter. The solid line illustrates the average quarterly betting prices. The odds for the graphs are trimmed at the quarterly 1- and 99-percentiles, considering all events and all agencies. Panel (a) and (c) consider all sports included in the data set, while Panel (b) and (d) only consider soccer events. Panel (c) and (d) exclude German leagues for the control group and Non-German leagues for the treatment group. The vertical red line illustrates the introduction of the tax reform.
		\end{minipage} 
		\label{fig:avgprices_all}
  
\end{figure} 

The observed seasonality in betting prices, considering all sports, stems from the fact that there are relatively more events with two potential outcomes in the winter, Basketball, Hockey, and American Football, that tend to have lower prices. Given that more than 80\% of all wagers are made on Soccer matches in the European market \citep{tipico2018social}, and soccer games are not affected by seasonality, I will mainly focus on soccer matches in the remainder of this paper. 
To provide a more structured test of the parallel trend assumption and examine how the pass-through to consumer prices evolve over time, I estimate the following multi-period event study design:

\begin{equation}
\label{eq:tax_dyn_effect}
p_{i, m(t, c)}=\alpha_i + \lambda_t  +  \psi_c + \sum_{k=-13}^{25} \beta_k D_{i,t}^k +\epsilon_{i,m},
\end{equation}

where $D_{i,t}^k$ is an indicator for event $m$ being $k$ quarters relative to the introduction of the tax when the agencies $i$ is in the treatment group. $D_{i,t}^k$ equals zero for all agencies in the control group. For better results readability, $t$ refers to a quarter instead of a week here. The results are robust if I use week-fixed effects instead. Insignificant coefficients for leading quarters, $k<0$, indicate that the parallel trend assumption is plausible.

\subsubsection*{Heterogeneity in tax incidence}

I use two approaches to estimate heterogeneity in pass-through rates conditional on different "shrouding" policies. First, I run specifications \ref{eq:tax_avg_effect} and \ref{eq:tax_dyn_effect} for different agencies' subsamples conditional on the applied shrouding policy. This categorization is consistent over time, as I observe no switching back to "No shrouding" once one of the shrouding policies is implemented or between different "shrouding" policies. 

Second, I use an interaction term that interacts treatment ($T_{i,m}$) with a $no Shroud_{i,m}$ indicator that equals one if the event took place after the tax reform and agency $i$ had a NO shrouding policy in place for event $m$. 
In detail, I estimate the following equations for the average tax effects on betting prices:

\begin{equation}
\label{eq:tax_avg_het}
p_{i, m(t, c)}= \beta_1 T_{i,t} + \beta_2 T_{i,t} \times no Shroud_{i,m} + \alpha_i + \lambda_t + \psi_c +\epsilon_{i, m},
\end{equation}

The coefficient on $T_{i,t}$ can be interpreted as the effect of the tax on betting prices when a shrouding policy was in place for a specific event and agency. In contrast, the coefficient of the interaction term estimates the difference between the tax effect when no shrouding policy was in place and the tax effect when a "shrouding" policy was in place. Accordingly, the t-statistics of the coefficient on the interaction term can directly serve as a test of a differential tax effect conditional on whether a shrouding policy was in place or not. The interpretation of the estimated effects between the subsample analyses and the regressions with the interaction term differ because "shrouding" policies, used for the subsample analyses, are defined on the agency level, while the $no Shroud_{i,m}$ dummy is defined on the agency-time dimension. This implies that coefficients of the subsample analyses may be diluted because it includes treated observations in "shrouding samples", in which agencies that switched to shrouding later have not put the shrouding policy in place yet. In contrast, the estimation with an interaction term compares prices for treated events not subject to an active shrouding policy with events for which an active shrouding was in place.

Similarly, I also interact the leads and lags of periods relative to the tax ($D_{i,t}^k$) with the $no Shroud_{i,m}$ indicator to capture the dynamics in the differential tax effects over time: 

\begin{equation}
\label{eq:tax_dyn_het}
p_{i, m(t, c)}=\alpha_i + \lambda_t  +  \psi_c + \sum_{k=-13}^{25} \beta_k D_{i,t}^k +\sum_{k=-13}^{25} \beta_k D_{i,t}^k \times no Shroud_{i,m} +\epsilon_{i,m},
\end{equation}

The heterogeneity analyses require somewhat stronger assumptions. The parallel trend assumption remains crucial to estimating the causal effects of the tax on consumer prices in the different subsamples. The difference is that the parallel trend assumption now applies to the respective treatment group subsamples (and not to all agencies in the treatment group) and the control group. For the heterogeneity analyses, we need to assume that the shrouding policies are exogenous to changes in price setting behavior and firms' characteristics that affect price setting behavior, conditional on time and agency fixed effects. Agencies seem to not differ significantly in price setting behavior before the reform (see Table \ref{tab:sum_price}), and pre-trends follow the same trends as the control group for the shrouding and non-shrouding subsamples (see Figure \ref{fig:coef_soccer_het}, Panel a). Yet, shrouding practices are endogenous by definition, and differences in pass-through rates should thus not be interpreted as being caused by the shrouding practices alone. Other characteristics correlated with the shrouding policies may also drive part of the observed heterogeneity.

\subsection{Defining the control and treatment group}

One of the critical issues for a clean identification of the tax incidence of the German sports betting tax is to define a consistent and reliable treatment and control group. The treatment group should consist of agencies that provide their services to German customers, i.e., who are active in the German market. The control group should ideally consist of agencies whose services are not available to German customers, i.e., that do not operate in the German online betting market. This categorization is challenging due to the opacity and the cross-border dimension of the online betting market.

I limit the potential set of treated bookmakers to agencies that are or were members of the DSWV, the German sports betting association. The DSWV, founded in 2014, represents the interest of leading German and European sports betting providers and describes itself on its website as "the public contact of their members to German politics, sports, and media". The members of the DSWV represented around 90 percent of the entire German sports betting market in August 2018 \citep{dswv2018}. According to the information from Top100bookmakers.com, used in \citet{montone2021optimal}, all of these agencies had a German language version as of 2021. I further use the hand-collect data on this sub-set of agencies (see above) to track their activities in the German betting market over time.  
Agencies only active in Germany for a limited period between 2009 and 2018 are excluded from the sample.\footnote{This approach excludes three betting agencies for the estimation: Expekt, Pinnacle, and Betclic. The empirical results are robust if we include these agencies in our sample.} This approach leaves me with a subsample of ten agencies in the Treatment group, henceforth also referred to as "German" agencies, to ease readability.\footnote{The set of German agencies: Betfair Sportsbook, Betsson, Betway, Interwetten, Sportingbet, Tipico, Unibet, Bet365, Bet-at-home, and Bwin. Betfair Sportsbook does not refer to the well-known betting exchange but to a subsidiary that offered fixed odds betting services to German customers after the exchange left the German market. Odds data for Betfair sportsbook are available from the third quarter of 2015. I included Betfair sportsbook in the treatment group as the agency was active on the German market over all periods where data is available. Data for all other treated agencies is available for the entire observation period.} While I am confident that this approach limits the treatment group to agencies that were active in the German market between 2009 and 2018, defining a stringent control group is more challenging, as this requires ruling out that these agencies were never serving German customers at any point. 

In the main specification, I include all other remaining agencies in the data set in the control group. However, I cannot entirely rule out that some German bettors bet with one of the foreign betting agencies that do not primarily target the German market. In this case,  the estimated pass-through rates are potentially biased.

I take two approaches to resolve these concerns. First, I restrict the control group to agencies that are linked to a foreign country-specific domain. Agencies that operate from country-specific domains are usually only available to customers in these countries. Furthermore, even if foreign customers could, in principle, access the webpage, e.g., through a VPN client, they should have no possibility to register with the respective agency. If I consider all agencies,  there are 65 agencies in the sample. Ten are treated, and 55 are in the control group. If I exclude agencies for which data is only available for a subset of years, the number of agencies decreases to 30, with nine agencies in the treatment and 21 agencies in the control group. If I only consider agencies in the control group with a foreign domain, the number of agencies in the control group decreases to 22 and 6, respectively. According to the domain identifier, 12 out of 22 agencies that use a country-specific domain are located in Eastern Europe (Slovakia, Czech Republic, Poland, and Russia). Four agencies use an Italian, three a French, and three a Spanish domain.

Second, suppose that bettors prefer to bet on local games. In this case, potential concerns can be alleviated by excluding "cross" leagues, i.e., German Leagues for Non-German agencies and Non-German games for German agencies. Without "cross" league events, the control group only includes betting markets on foreign events that are less likely to be populated with German bettors--- and thus less affected by potential confounders. Similarly, the treatment group then only includes domestic markets that are primarily populated with German bettors and thus more realistically represent (changes in) German consumer prices. Following the same argument, I also run robustness checks only considering lower-level leagues that are assumingly even less prone to cross-country bettors.

\subsubsection*{Heterogeneity in "shrouding" policy implementation }

As already discussed above, betting agencies mainly engage in three different "shrouding" policies that affect the difference between the advertised surcharge-exclusive and the effective surcharge-inclusive odds: i) no further deductions; ii) deducting tax surcharge from winnings; iii) deducting a surcharge from the wager. Six out of the ten treated agencies choose policy ii), while two choose policy iii), and two do not apply additional tax surcharges for the entire observation period. Most agencies that choose policy ii) or iii) implement "shrouding" policies within six months after the tax reform. There is one "outlier" that switches from no "shrouding" to policy iii) only in March 2016. The type and timing of "shrouding" policies for all treated agencies can be found in the Appendix in Table \ref{sum_policy}. No agency switches back to policy i) or between policies after one of the policies was established. The average pre-reform betting prices conditional on post-reform policies do not differ considerably and range from 0.0726 to 0.0745 (Table \ref{tab:sum_price}).

\section{Results}\label{sec:sport_results}

Before I show the estimated tax effects according to Section \ref{sec:empirical_strategy}, I examine whether agencies follow the tax rules, which is crucial for the validity of the estimation approach and has important implications for digital taxation in general. 

\subsection{Sports betting tax revenues} \label{sec:revenues_sport}

As noted above, most betting agencies operate from offshore locations and usually have no jurisdiction in Germany. Consequently, it is not straightforward to assume that agencies also pay the German sports betting tax---an issue that is common for many highly digitalized business models. In this section, I use administrative tax data and aggregated sports betting revenue data provided by a commercial provider to examine whether agencies follow the tax rules. 

Figure \ref{fig:tax_revenues} shows the aggregated yearly sports betting tax revenues, according to administrative tax data of the German Federal Ministry of Finance (2021). Tax revenues more than doubled from around \euro{}200m in 2013 to over \euro{}450m in 2019. In total, the German sports betting tax generated revenues of more than \euro{}2.6bn until 2020.

		\begin{figure}[!ht]
		\caption{German sports betting tax revenue 2012-2020 in \euro{}M}
		\centering	
	
		\includegraphics[width=\textwidth]{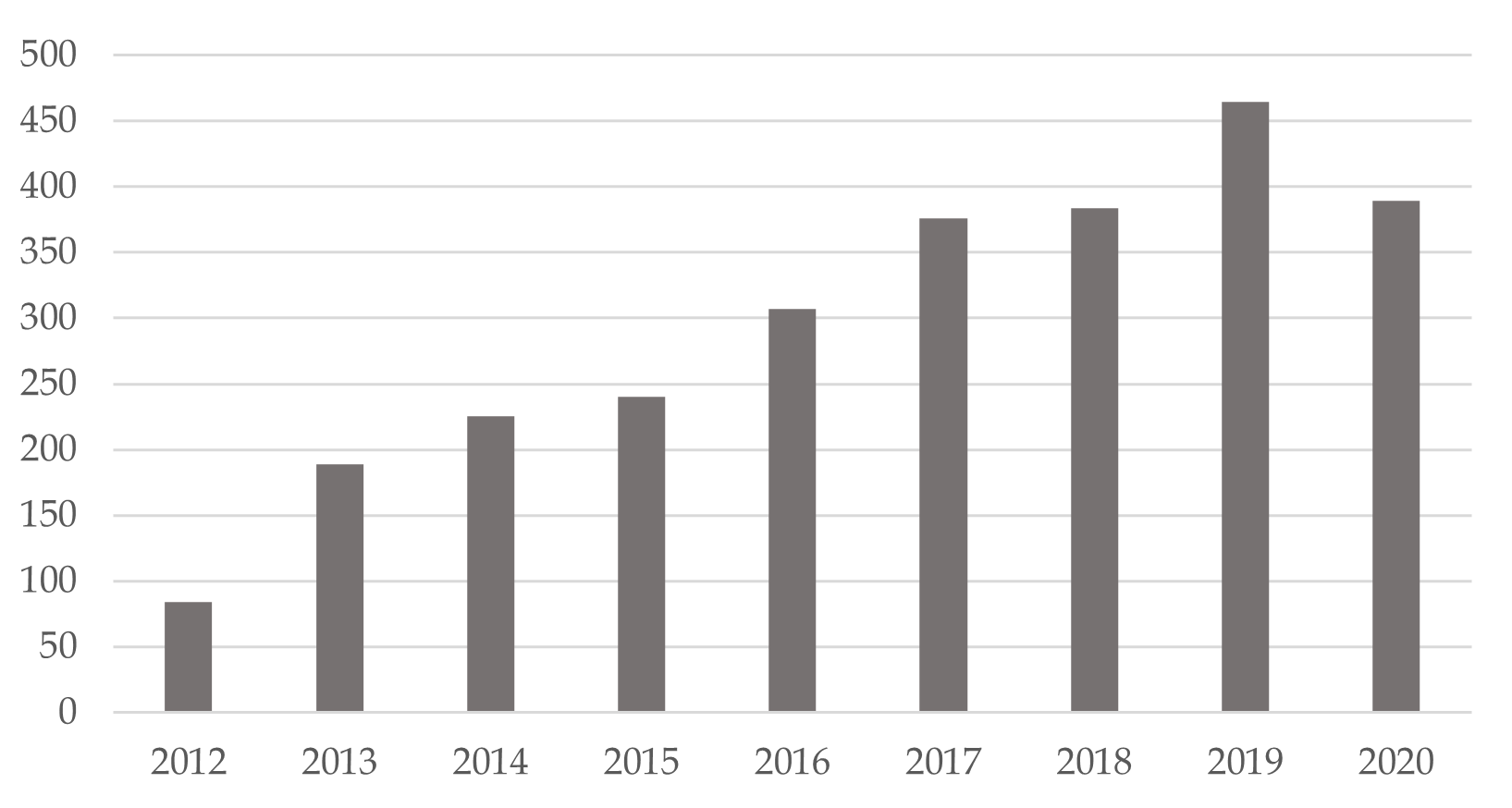}
		\begin{minipage}{15.8cm}  \footnotesize  \emph{Notes:} 
This figure illustrates the aggregated annual tax revenue (in \euro{}M) generated by the German sports betting tax between 2012 and 2020. The tax revenue in 2012 only covers a period of 6 months, as the tax was introduced on 1 July 2012. Source: German Federal Ministry of Finance, own representation.  
		\end{minipage}
		  \label{fig:tax_revenues}
	\end{figure}

Figure \ref{fig:bettingrevenues_ger} illustrates the annual total gross online betting revenues between 2008 and 2020 in Germany, disaggregated by onshore and offshore betting revenues.\footnote{For 2012, the onshore revenues refer to annual revenues by agencies that paid sports betting taxes after the tax reform in 2012.}  Onshore revenues refer to all online sports betting activities for which taxes were paid in Germany. Offshore revenues include all sports betting revenues by agencies that target the German market but do not pay the sports betting tax in Germany. Note that gross betting revenues are defined as the total wagered amount minus all winnings by bettors. Consequently, gross betting revenues' size directly depends on the price adaptions by agencies, ceteris paribus. In contrast, the tax base only indirectly depends on betting prices through behavioral responses by bettors. 

	\begin{figure}[!ht]
		\caption{Onshore vs. offshore gross online betting revenues in Germany in \euro{}M}
			\label{fig:bettingrevenues_ger}

		\centering	
	
		\includegraphics[width=\textwidth]{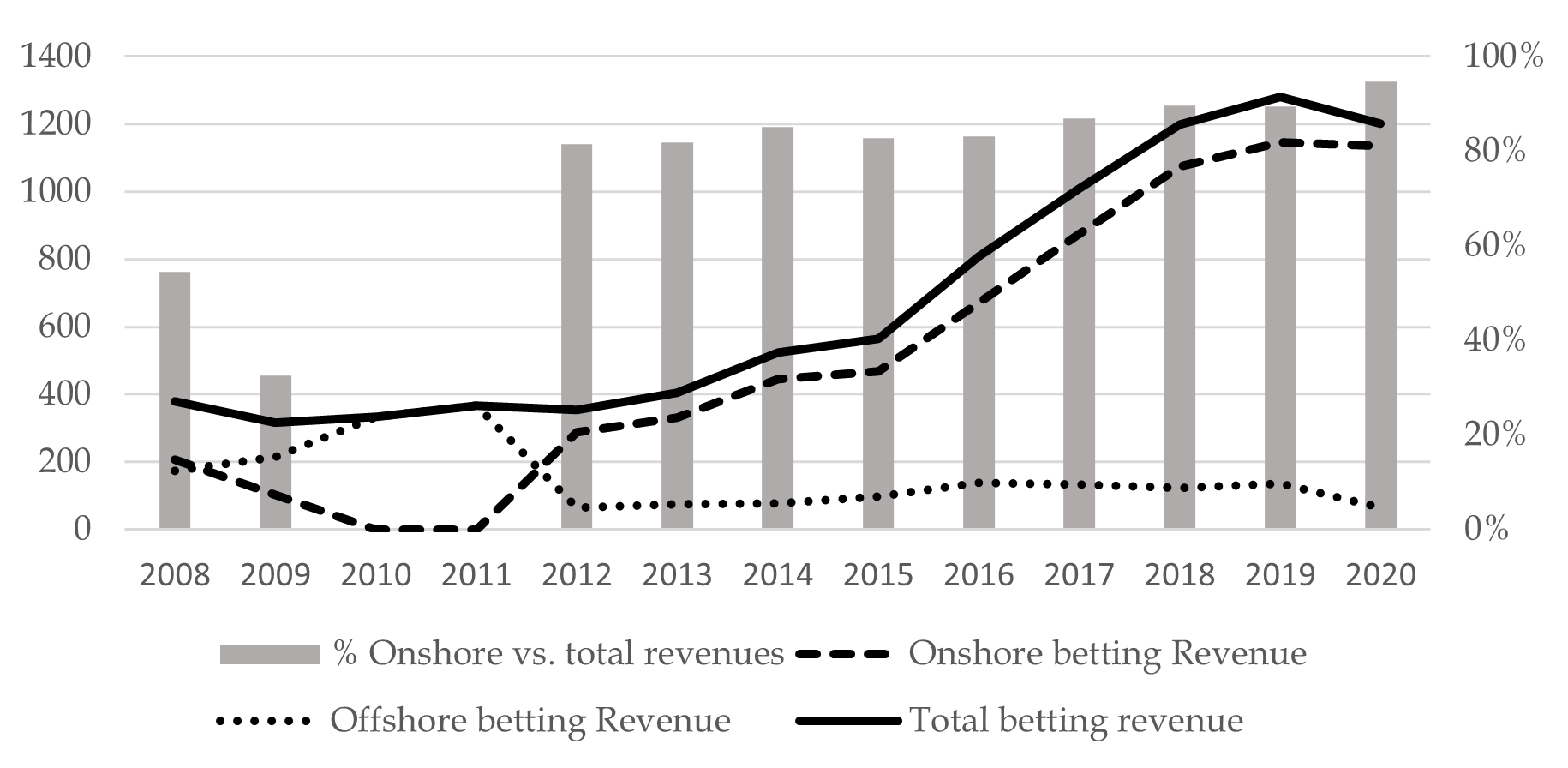}
		\begin{minipage}{15.8cm}  \footnotesize  \emph{Notes:} 
This figure illustrates the annual total gross online betting revenues between 2008 and 2020 in Germany (in \euro{}M), disaggregated into onshore and offshore betting revenues. The bars illustrate the share of onshore betting revenues. Onshore revenues refer to all sports betting activities for which taxes were paid in Germany. For 2012, the onshore revenues refer to all revenues by agencies that paid sports betting taxes in 2012. Offshore revenues include all sports betting revenues by agencies that target the German market but do not pay the sports betting tax in Germany. Gross betting revenues are equal to the total wagered amount (including bonuses) minus all winnings by bettors, disaggregated by onshore and offshore agencies. Data was provided by H2 Gambling Capital (July 2021) and is primarily based on official data from official gambling regulatory bodies, governments, and tax authorities. The data only includes revenues of operators licensed in at least one jurisdiction, and entirely unlicensed activities are not included.
		\end{minipage}
	\end{figure}

As illustrated by the bars in Figure \ref{fig:bettingrevenues_ger}, the share of taxed betting revenues is constantly over 80\% after the tax reform in 2012. The share shows a slightly increasing trend. One can conclude that the vast majority of betting agencies that target the German market also paid the sports betting tax, despite a missing jurisdiction in Germany. Annual reports of listed betting agencies regularly discuss the implications of the German sports betting tax for their business \citep[e.g.][]{bwin2013annual}, confirming this conclusion. 


While this finding seems rather trivial, it is noteworthy in the context of digital taxation. The German sports betting tax shares several features with digital service taxes that aim to tax domestic revenues of international digital businesses.\footnote{Online betting services are a good example of a highly digitalized business model. All three factors of highly digitalized business models, as defined by the \citet{oecd2014beps} apply to the online betting market: i) Cross-jurisdictional scale without mass, ii)  Reliance on intangible assets, including intellectual property (IP). iii)  Data, user participation, and their synergies with IP.} 
Several countries introduced such digital service taxes in recent years to protect their domestic tax base from extensive profit shifting by digital firms that typically do not require a physical presence in a specific country to serve domestic customers \citep{oecd2014beps}.\footnote{For an overview of the current state of digital services taxes in different countries, see \citet{kpmg2022tax}.}

Operating from an offshore location implies that monitoring and prosecution by public authorities are very difficult or even impossible. Thus, one objection to introducing digital taxes, at least in some sectors, is that digital taxes are unlikely to raise significant tax revenues as most firms that are not registered with the tax authorities will also not pay domestic taxes. In the setting at hand, firms do not only operate from jurisdiction to avoid taxes but also to circumvent national bans, making the argument particularly applicable in our setting. In fact, many observers of the debate questioned in the beginning whether 5\% on stakes would be low enough that private betting agencies would cooperate with the German tax authorities \citep{englisch2013taxation}.

My findings suggest that taxing online sports betting and digital services can generally work, in the sense, that they can generate considerable tax revenues even if the taxed entity has no jurisdiction within the respective country. Agencies seem not to evade the German sports betting tax on a large scale.\footnote{Tax evasion has important implications for the relationship between statutory incidence and the economic incidence of the tax \citep[e.g.][]{kopczuk2016does}.} 



\subsection{Overall effect on consumer prices}

The results for the average tax effects can be found in Table \ref{diffindiff_soccer} and \ref{diffindiff_allsports}. The estimated average pass-through ($\frac{dp}{dt}$) in our main specification, which considers all agencies and leagues for soccer games and includes week, league, and agency fixed effects (Column 2, Table \ref{diffindiff_soccer}), is equal to 76\% ($\beta_1$=0.038 with a standard error (SE) of 0.004). The average pass-through increases to 82\% if we exclude "cross" leagues. The estimated pass-through rates remain robust to only considering agencies with odds data available for all years or using betting prices for all sports events (Table \ref{diffindiff_allsports}). Furthermore, the average pass-through rate remains unchanged if we consider a TWFE model with only time and agency fixed effects (Columns 1 and 4, Table \ref{diffindiff_soccer}) or if we include league-by-agency fixed effects instead (Columns 3 and 6, Table \ref{diffindiff_soccer}).

\begin{table}[ !htbp]\centering
\def\sym#1{\ifmmode^{#1}\else\(^{#1}\)\fi}
\caption{Avg. effect of tax on consumer betting prices - soccer matches}
\label{diffindiff_soccer}
\scalebox{0.9}{
\begin{threeparttable}
\begin{tabular}{l*{7}{c}}
\hline\hline

                 &\multicolumn{3}{c}{All Leagues}     & &\multicolumn{3}{c}{Excl. "cross" leagues}     \\    \cmidrule{2-4} \cmidrule{6-8}
                    &\multicolumn{1}{c}{(1)}      &\multicolumn{1}{c}{(2)}      &\multicolumn{1}{c}{(3)}   &   &\multicolumn{1}{c}{(4)}      &\multicolumn{1}{c}{(5)}      &\multicolumn{1}{c}{(6)}      \\

\midrule

\textbf{Panel A: All agencies} &&&&&&& \\
[1em]
Tax effect on prices             &               0.038\sym{***}&               0.038\sym{***}&               0.038\sym{***}&   &            0.041\sym{***}&               0.041\sym{***}&               0.041\sym{***}\\
                    &             (0.004)         &             (0.004)         &             (0.004)       &  &             (0.005)         &             (0.005)         &             (0.005)         \\

\hline
Observations        &             1,936,322         &             1,936,322         &             1,936,322       &  &             1,290,843         &             1,290,843         &             1,290,843         \\
\(R^{2}\)           &               0.642         &               0.759         &               0.811      &   &               0.643         &               0.754         &               0.809         \\
\midrule
\textbf{Panel B: Comp. agencies} &&&&&&& \\
[1em]
Tax effect on prices           &               0.038\sym{***}&               0.038\sym{***}&               0.038\sym{***}&  &             0.041\sym{***}&               0.041\sym{***}&               0.041\sym{***}\\
                    &             (0.005)         &             (0.005)         &             (0.005)      &   &             (0.005)         &             (0.005)         &             (0.005)         \\

\hline
Observations        &             1,276,400         &             1,276,400         &             1,276,400         &    &          778,207         &              778,207         &              778,207         \\
\(R^{2}\)           &               0.521         &               0.699         &               0.747         &   &            0.467         &               0.660         &               0.711         \\

\midrule

Constant            &                 Yes         &                 Yes         &                 Yes       &  &                 Yes         &                 Yes         &                 Yes         \\

Time FE             &                 Yes         &                 Yes         &                 Yes    &     &                 Yes         &                 Yes         &                 Yes         \\

Agency FE           &                 Yes         &                 Yes         &                  No       &  &                 Yes         &                 Yes         &                  No         \\

League FE           &                  No         &                 Yes         &                  No     &    &                  No         &                 Yes         &                  No         \\

League-agency FE    &                  No         &                  No         &                 Yes      &   &                  No         &                  No         &                 Yes         \\
\hline \hline

\end{tabular}
\begin{tablenotes}[flushleft]
\item \emph{Notes:} This table reports the estimated average sports betting tax effects on consumer betting prices of soccer events according to Eq. \ref{eq:tax_avg_effect}, comparing changes in prices before and after the tax reform between German and Non-German agencies. Each agency-event combination (unique observation) is equally weighted. Columns 1-3 consider all leagues, while Columns 4-6 exclude events in German leagues for the control group and Non-German leagues for the treatment group. All estimations include time-fixed effects, and the columns differ in the included agency and league fixed effects. Estimations in Panel A consider observations from all agencies, and Panel B considers agencies with observations from all years only. Robust standard errors clustered at the agency level are reported in the brackets. * denotes significance at the 10-\%, ** at the 5-\%, and *** at the 1-\% level.
\end{tablenotes}
\end{threeparttable}
}
\end{table}

Results for the dynamic tax effects of the tax reform, illustrated by the connected black squares, can be found in Figure \ref{fig:coef_soccer}. The grey shaded area illustrates the respective 95\% confidence intervals. Coefficients for all periods before the tax reform are statistically insignificant. The average quarterly pass-through rates steadily increase in the first four quarters after the tax reform to around 0.035 (70\%). This development is associated with the staggered introduction of different shrouding policies that seem to help agencies pass a larger share of the tax onto consumers. After one year, the pass-through rate stays at around 80\% for several quarters with a slightly increasing trend. At the end of the observation period, the estimated $\beta_k$-coefficients reach around 0.045. In general, the pass-through rate of the tax seems very persistent over time. Similar to the average effects, the coefficients slightly increase if we exclude cross-leagues, showing the same general pattern.

\begin{figure}[!htbp]

    \centering
      \caption{Pre-trends and tax effects on consumer betting prices over time - all agencies}
    
    \begin{subfigure}[t]{0.8\textwidth}
        \centering
        \caption{ Soccer games}
        \includegraphics[width=\linewidth]{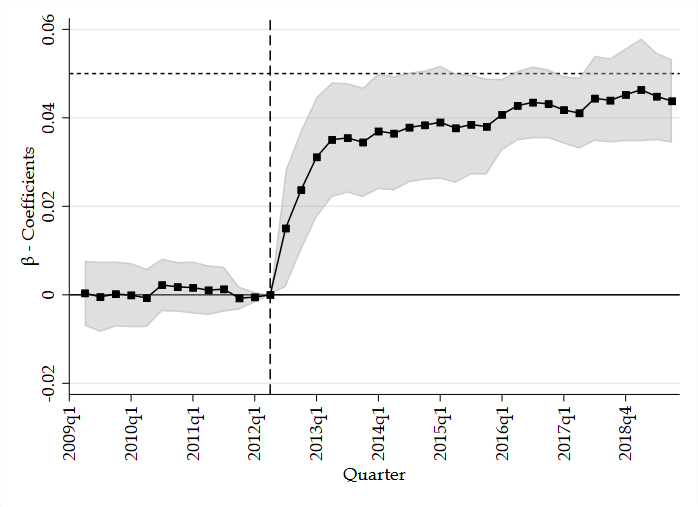} 
        \label{fig:timing1}
    \end{subfigure}
        \vspace{0.5cm}
    \begin{subfigure}[t]{0.8\textwidth}
        \centering
        \caption{Soccer games - excluding cross leagues}
        \includegraphics[width=\linewidth]{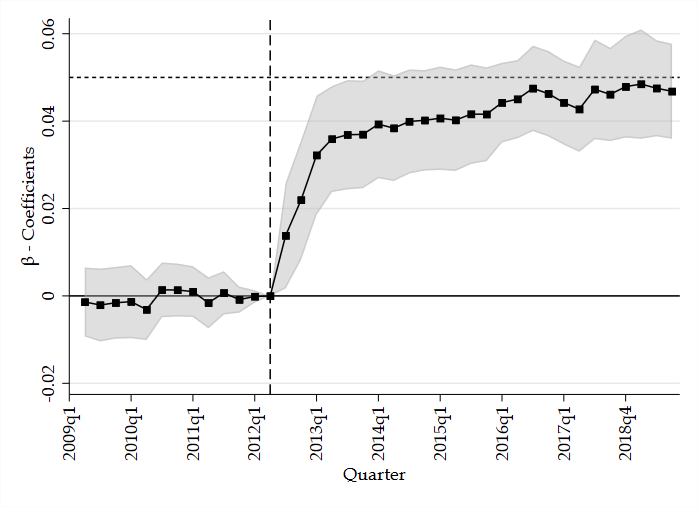} 
         \label{fig:timing2}
    \end{subfigure}
     		\begin{minipage}{15.8cm}  \footnotesize  \emph{Notes:} 
This figure shows the estimated effects (according to Eq. \ref{eq:tax_dyn_effect}) and respective 95-\% confidence intervals of the German sports-betting tax on consumers' betting prices for soccer events over time. All agencies are considered, N=65 (Treatment=10/Control=55). In contrast to Panel (a), Panel (b) excludes German leagues for the control group and Non-German leagues for the treatment group. The vertical dashed line illustrates the introduction of the tax reform. The horizontal dotted line signifies a full pass-through of the tax on consumers.
		\end{minipage} 
		\label{fig:coef_soccer}
  
\end{figure} 

Figure \ref{fig:coef robust} in the Appendix shows the estimated dynamic tax effects for samples that only considers agencies with a complete set of odds data (Panels c-f), and events for all sports (Panel a and b). In general, the pattern of the tax effect dynamics are similar to the main specification. The estimated coefficients tend to be slightly smaller if we consider all sports. Pass-through rates seem to be slightly higher for soccer games if we only consider "complete" agencies, reaching nearly a full-pass through ($\beta=0.05$) in 2018. In general, not only the coefficients, but also the confidence intervals are quite stable across different specifications and samples. To check whether the results are driven by outliers, I also estimate equation (5) based on trimmed data (1-99\% percentiles). The results are robust and can be found in Figure \ref{fig:coef_tr} in the Appendix.

The estimated average and dynamic pass-through rates are higher for specifications with the more restrictive control group that only includes agencies that operate from a foreign nation-specific domain. The pass-through rates are close to a full-pass-through for these specifications (see Table \ref{diffindiff_robust} in the Appendix). However, these slightly larger pass-through rates can be mainly attributed to higher estimates in later periods (Panel c and d of Figure \ref{fig:avgprices_all_robust}) that may be partly driven by specific developments in these local (control) markets. According to the domain identifier, 12 out of 22 agencies that use a country-specific domain are located in Eastern Europe (slovakia, Czech Republic, Poland and Russia), four agencies use an Italian, three a French and three a Spanish  domain.

The estimated average and dynamic pass-through rates are higher for specifications with the more restrictive control group that only includes agencies that operate from a foreign nation-specific domain. The pass-through rates are close to a full pass-through for these specifications (see Table \ref{diffindiff_robust} in the Appendix). However, these slightly larger pass-through rates can be mainly attributed to higher estimates in later periods (Panel c and d of Figure \ref{fig:avgprices_all_robust}) that may be partly driven by specific developments in these local (control) markets.

\subsection{Heterogeneity in tax effects}

Table \ref{diffindiff_het_pol} shows the results of the subsample analyses. There is large heterogeneity in pass-through rates depending on the agencies' shrouding policy (see Section \ref{sec:tax_shrouding}). The estimated average pass-through in the subsample that only considers treated agencies that do not shroud taxes for the entire observation period is about 16\% (Column 3, Table \ref{diffindiff_het_pol}). In contrast, agencies that shroud taxes can pass most of the tax onto consumers. Excluding all "transparent" agencies from the sample, I estimate a pass-through of 82\% (Column 2, Table \ref{diffindiff_het}. These differences get even more pronounced when excluding "cross" leagues or extending the data set to non-soccer events. For instance, the $\beta_1$ coefficients for the "No shrouding"-subsample is only slightly positive and insignificant (see Panel B, Table \ref{diffindiff_het_pol}) when all events are used in the estimations.

Interestingly, there is also heterogeneity in estimated pass-through rates along the two shrouding policies (columns 2 and 3, Table \ref{diffindiff_het_pol}). Agencies employing the shrouding policy that deducts the tax from the winnings increase their betting prices most significantly in response to the tax, with an estimated tax effect of 0.043. In contrast, the estimated tax effect for agencies that deduct the surcharge directly from the wager only equals 0.036. This difference is considerably larger than one would expect, given the policies' different $\tau$ (see Section \ref{sec:sport_app_tau}). The difference between the shrouding policies could be explained by a lower tax salience when the surcharge is framed as being only due in the case of winning. While it does not make a big difference for the effective betting prices, the framing may exploit a present bias where tax surcharges are weighted less if they are deducted only in the case of winning. 

\begin{table}[ !ht]\centering
\def\sym#1{\ifmmode^{#1}\else\(^{#1}\)\fi}
\caption{Avg. effect of tax on consumer betting prices - Subsamples different policies}
    \label{diffindiff_het_pol}
        \scalebox{0.9}{
\begin{threeparttable}
\begin{tabular}{l*{7}{c}}
\hline\hline

                 &\multicolumn{3}{c}{All Leagues}     & &\multicolumn{3}{c}{Excl. "cross" leagues}     \\    \cmidrule{2-4} \cmidrule{6-8}
                 
 "Shrouding" policy                   &\multicolumn{1}{c}{(ii)}      &\multicolumn{1}{c}{(iii)}      &\multicolumn{1}{c}{(i)}   &   &\multicolumn{1}{c}{(ii)}      &\multicolumn{1}{c}{(iii)}      &\multicolumn{1}{c}{(i)}      \\
 [0.3em]
                     &\multicolumn{1}{c}{(1)}      &\multicolumn{1}{c}{(2)}      &\multicolumn{1}{c}{(3)}   &   &\multicolumn{1}{c}{(4)}      &\multicolumn{1}{c}{(5)}      &\multicolumn{1}{c}{(6)}      \\

\midrule
\textbf{Panel A: Soccer} &&&&&&& \\
[1em]
Tax effect on prices              &               0.043\sym{***}&               0.036\sym{***}&               0.008\sym{***}& &              0.046\sym{***}&               0.043\sym{***}&               0.008\sym{**} \\
                    &             (0.003)         &             (0.007)         &             (0.002)         &  &           (0.003)         &             (0.005)         &             (0.002)         \\
[1em]
Constant            &               0.085\sym{***}&               0.085\sym{***}&               0.085\sym{***}&       &        0.091\sym{***}&               0.091\sym{***}&               0.091\sym{***}\\
                    &             (0.003)         &             (0.003)         &             (0.003)         &   &          (0.003)         &             (0.003)         &             (0.003)         \\

\hline
Observations        &             1,793,974         &             1,619,606         &             1,587,090         &   &          1,257,582         &             1,217,911         &             1,211,548         \\
\(R^{2}\)           &               0.764         &               0.743         &               0.728         &   &            0.754         &               0.743         &               0.737         \\

\midrule
\textbf{Panel B: All sports} &&&&&&& \\
[1em]
Tax effect on prices             &               0.044\sym{***}&               0.034\sym{***}&               0.002     &    &               0.041\sym{***}&               0.037\sym{***}&               0.001         \\
                    &             (0.002)         &             (0.007)         &             (0.002)         &   &          (0.002)         &             (0.007)         &             (0.002)         \\
[1em]
Constant            &               0.079\sym{***}&               0.084\sym{***}&               0.084\sym{***}&   &            0.088\sym{***}&               0.088\sym{***}&               0.088\sym{***}\\
                    &             (0.003)         &             (0.003)         &             (0.003)      &   &             (0.003)         &             (0.003)         &             (0.003)         \\

\midrule
Observations        &             2,815,902         &             2,512,623         &             2,466,796         &    &         1,974,102         &             1,912,848         &             1,901,791         \\
\(R^{2}\)           &               0.745         &               0.716         &               0.703         &    &           0.736         &               0.721         &               0.713         \\
\midrule
Time FE             &                 Yes         &                 Yes         &                 Yes         &         &        Yes         &                 Yes         &                 Yes         \\
Agency FE           &                 Yes         &                 Yes         &                 Yes         &      &           Yes         &                 Yes         &                 Yes         \\
League FE           &                 Yes         &                 Yes         &                 Yes         &   &              Yes         &                 Yes         &                 Yes         \\
\hline\hline

\end{tabular}
\begin{tablenotes}[flushleft]
\item \emph{Notes:} This table reports the estimated average sports betting tax effects on consumer betting prices according to Eq. \ref{eq:tax_avg_effect}, comparing changes in prices before and after the tax reform between German and Non-German agencies. Columns 3 and 6 only consider German agencies that do not shroud taxes. Likewise, columns 1 and 4 (2 and 5) only consider agencies that deduct a 5\% tax surcharge from the advertised winnings (from the advertised wager). Each agency-event combination (unique observation) is equally weighted. Columns 1-3 consider all leagues, while Columns 4-6 exclude events in German leagues for the control group and Non-German leagues for the treatment group. All estimations include time-, agency- and league-fixed effects. Estimations in Panel A consider observations from soccer matches, and Panel B considers observations from all sports. Robust standard errors clustered at the agency level are reported in the brackets. * denotes significance at the 10-\%, ** at the 5-\%, and *** at the 1-\% level.
\end{tablenotes}
\end{threeparttable}
}
\end{table}

The results from the subsample analyses are confirmed by estimating equation \ref{eq:tax_avg_het} (Columns 3 and 6, Table \ref{diffindiff_het}). The coefficient on $T_{i,m}$ can be interpreted as the effect of the tax on betting prices when a shrouding policy is in place for a specific event $m$ and bookmaker $i$. In contrast, the coefficient of the interaction term shows the difference between the coefficient on $T_{i,m}$ and the tax effect without shrouding. In line with previous findings that showed a strong positive association between the effect on consumer betting prices and active shrouding policies, the estimated tax effect is around 0.046 when a shrouding policy was in place for a specific event-bookmaker observation. If we exclude cross-leagues, the estimated coefficients increase to 0.05, implying a full pass-through of the tax onto consumers ($\frac{dp}{dt}=1$). The estimated coefficients on the interaction term are equal to -0.041 and -0.043, respectively, which implies a highly statistically significant difference in pass-through rates of over 80\%. The estimated effects are robust to only considering agencies, for which odds data is available for all years (Table \ref{diffindiff_het_compl} in the Appendix) or considering all sports (Table \ref{diffindiff_het_allsport} in the Appendix).

Panel (a) of Figure \ref{fig:coef_soccer_het} illustrates the estimation results of the event study design, considering only treated agencies that never shroud tax surcharges (blue line) and shrouding agencies (red line). The control group is the same for both subsample regressions. Importantly, in both subsamples, the pre-trends show a similar pattern as the control group, suggesting that the parallel trend assumption holds for both subsamples. Only after the tax reform estimated treatment effects diverge between the two subsamples. While other explanations cannot be entirely ruled out, this evidence highly suggests that shrouding policies cause differences in tax effects. 

Panel (b) shows the estimation results for specifications with leads and lags of treatment interacted with the $no Shroud_{i,m}$ dummy (equation \ref{eq:tax_dyn_het}). The coefficients are calculated such that the red connected dots show the tax effects over time for a specific event and bookmaker with an active shrouding policy. The blue connected dots show the dynamic tax effects for events where agencies did not employ shrouding policy was employed. In contrast to the subsample analysis, the estimated quarterly tax effects jump to around 0.045 directly after the tax reform and stay pretty constant until the end of the observation period. This immediate increase can be attributed to the fact that not all shrouding agencies immediately employ a shrouding policy after the reform. The tax effects for ``no shrouding''-events are insignificant in the first quarterly periods after the tax reform and steadily increase until the end of the observation period to around 0.02 in 2018, similar to the subsample analysis. The results are very similar if we exclude cross leagues or consider all events (Figure \ref{fig:coef_het_robust} in the Appendix). 

\begin{table}[!htbp]\centering
\def\sym#1{\ifmmode^{#1}\else\(^{#1}\)\fi}
\caption{Avg. effect of tax on consumer betting prices - soccer}
    \label{diffindiff_het}
    \scalebox{0.9}{
\begin{threeparttable}
\begin{tabular}{l*{7}{c}}
\hline\hline

                 &\multicolumn{3}{c}{All Leagues}     & &\multicolumn{3}{c}{Excl. "cross" leagues}     \\    \cmidrule{2-4} \cmidrule{6-8}
                                  &\multicolumn{2}{c}{Subsamples}     & Interact. & &\multicolumn{2}{c}{Subsamples} &  Interact.  \\    \cmidrule{2-3} \cmidrule{6-7}
                    &\multicolumn{1}{c}{No shrouding}      &\multicolumn{1}{c}{Shrouding}      &   &   &\multicolumn{1}{c}{No shrouding}      &\multicolumn{1}{c}{Shrouding}      &   \\
                    &\multicolumn{1}{c}{(1)}      &\multicolumn{1}{c}{(2)}      &\multicolumn{1}{c}{(3)}   &   &\multicolumn{1}{c}{(4)}      &\multicolumn{1}{c}{(5)}      &\multicolumn{1}{c}{(6)}      \\
\midrule

Tax effect on prices             &               0.008\sym{***}&               0.041\sym{***}&                             &   &            0.008\sym{**} &               0.045\sym{***}&                             \\
                 &             (0.002)         &             (0.003)         &                             &   &          (0.002)         &             (0.003)         &                             \\
[1em]
$T_{i,m}$           &                             &                             &               0.046\sym{***}&   &                          &                             &               0.050\sym{***}\\
                    &                             &                             &             (0.003)         &  &                           &                             &             (0.003)         \\
[1em]
$T_{i,m}$ x $no Shroud_{i,m}$            &                             &                             &              -0.041\sym{***}&  &                           &                             &              -0.043\sym{***}\\
                    &                             &                             &             (0.002)         &  &                           &                             &             (0.002)         \\
[1em]
Constant            &               0.085\sym{***}&               0.084\sym{***}&               0.084\sym{***}& &              0.091\sym{***}&               0.091\sym{***}&               0.091\sym{***}\\
                    &             (0.003)         &             (0.003)         &             (0.003)         & &            (0.003)         &             (0.003)         &             (0.003)         \\
\midrule
Time FE             &                 Yes         &                 Yes         &                 Yes         &  &               Yes         &                 Yes         &                 Yes         \\
Agency FE           &                 Yes         &                 Yes         &                 Yes         &  &               Yes         &                 Yes         &                 Yes         \\
League FE           &                 Yes         &                 Yes         &                 Yes         &   &              Yes         &                 Yes         &                 Yes         \\
\hline
Observations        &             1,587,090         &             1,881,406         &             1,936,322         &   &          1,211,548         &             1,277,394         &             1,290,843         \\
\(R^{2}\)           &               0.728         &               0.767         &               0.779         &  &             0.737         &               0.757         &               0.761         \\
\hline\hline

\end{tabular}
\begin{tablenotes}[flushleft]
\item \emph{Notes:} Columns 1, 2, 4, and 5 of the table report the estimated average sports betting tax effects on consumer betting prices for soccer events according to Eq. \ref{eq:tax_avg_effect}, comparing changes in prices before and after the tax reform between German and Non-German agencies. Columns 1 and 4 only consider German agencies that do not shroud taxes, while columns 2 and 5 only consider shrouding agencies (either policy ii or iii). Columns 3 and 6 show the estimation results for Eq. \ref{eq:tax_avg_het}. The coefficient on $T_{i,m}$ can be interpreted as the effect of the tax on betting prices when a shrouding policy is in place for a specific event $m$ and bookmaker $i$. In contrast, the coefficient of the interaction term shows the difference between the coefficient on $T_{i,m}$ and the tax effect without shrouding. Each agency-event combination (unique observation) is equally weighted. Columns 1-3 consider all leagues, while Columns 4-6 exclude events in German leagues for the control group and Non-German leagues for the treatment group. All estimations include time-, agency- and league-fixed effects. Robust standard errors clustered at the agency level are reported in the brackets. * denotes significance at the 10-\%, ** at the 5-\%, and *** at the 1-\% level.
\end{tablenotes}
\end{threeparttable}
}
\end{table}

\begin{figure}[H]

    \centering
      \caption{Pre-trends and tax effects on consumer betting prices over time for different "shrouding" policies- all agencies}
    
    \begin{subfigure}[t]{0.7\textwidth}
        \centering
        \caption{Subsample analysis (agency level)}
        \includegraphics[width=\linewidth]{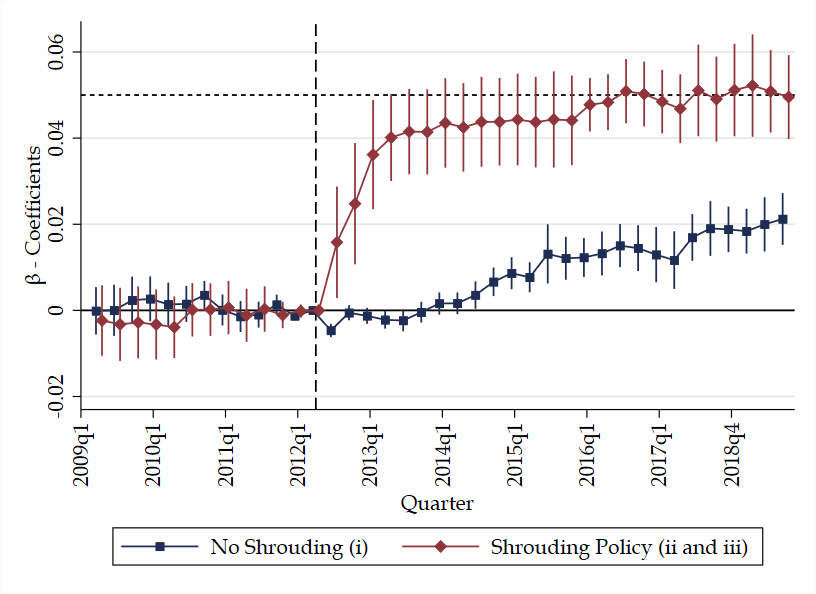} 
        \label{fig:timing1}
    \end{subfigure}
        \vspace{0.5cm}
    \begin{subfigure}[t]{0.7\textwidth}
        \centering
        \caption{Interaction terms (event-agency level)}
        \includegraphics[width=\linewidth]{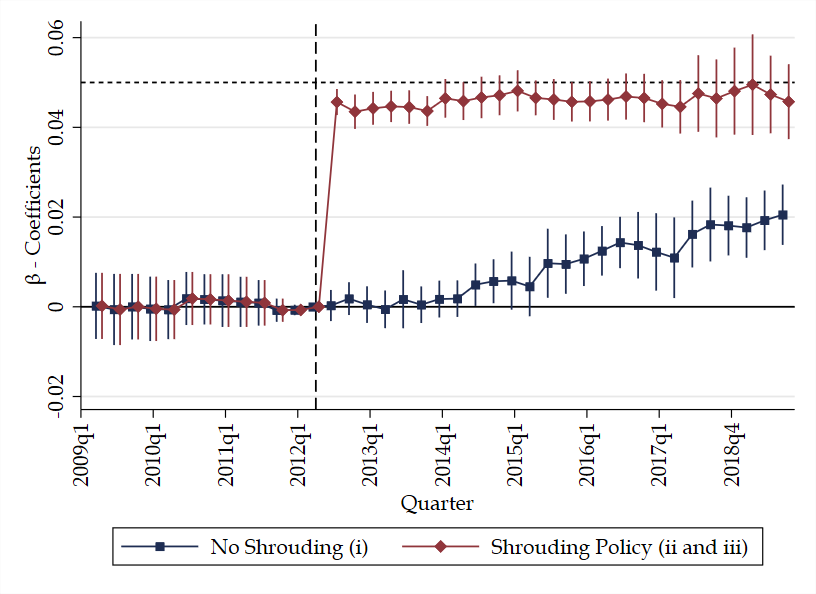} 
        \label{fig:timing1}
    \end{subfigure}
     		\begin{minipage}{13.8cm}  \footnotesize  \emph{Notes:} 
This figure shows the estimated effects and respective 95-\% confidence intervals of the German sports-betting tax on consumers' betting prices over time. Panel (a) shows the estimates for the event study design according to equation \ref{eq:tax_dyn_effect}. The red estimates only consider German agencies that employ a shrouding policy at some point, and the blue estimates only consider German agencies that never employ a shrouding policy. The control group is the same for both regressions. Panel (b) shows the coefficients estimating Eq \ref{eq:tax_dyn_het} that interacts leads and lags of treatment with a dummy that equals one if the event took place after the tax reform and agency $i$ had a NO shrouding policy in place for event $m$. Coefficients are transformed such that both groups can be directly compared. All agencies are considered, N=65 (Treatment=10/Control=55). The vertical dashed line illustrates the introduction of the tax reform. The horizontal dotted line signifies a full pass-through of the tax on consumers.
		\end{minipage} 
		\label{fig:coef_soccer_het}
  
\end{figure} 

\section{Optimal sin taxation}\label{sec:theory_sin}

The goal of this section is to derive the policy and welfare implications of the empirical results and understand the underlying mechanism behind and effects of strategic tax shrouding. I use the optimal sin taxation framework by \citet{odonoghue2006optimal} and relax their assumptions of perfect competition and fully attentive consumers to accommodate the main features of online sports betting markets. 

\subsection{Optimal sin tax model with homogeneous consumers and perfect competition}

The main rationale behind sin taxes is that certain goods are ``overconsumed'' because (some) consumers do not internalize negative externalities on one's (future) self because of self-control problems, e.g. due to time-inconsistent preferences \citep{laibson1997}, or other frictions when making a decision. In these cases, consumer do not optimize their own true ''experienced'' utility or welfare $u$ with respect to this good's consumption but base their decision on some ``decision'' utility $\Tilde{u}$ that differ from the true utility. Following the arguments for Pigouvian taxes \citep{pigou1920economics}, taxing the consumption of sin goods can thus induce consumers to internalize negative consumption effects and increase social welfare. 

Consider a variant of the model by \citet{odonoghue2006optimal}. There is a unit mass of consumers $i \in  [0,1]$ that consume two goods: a sin good $x$, say sports betting, and a composite good $z$. The composite good is produced with constant return to scale and supplied in a perfectly competitive market. The price of the composite good is normalized to one. Sin good $x$ gives consumer $i$ an (immediate) enjoyment from consumption $v_i(x)$ but come with (future) costs due to potential issues related to problem gambling, $c_i(x)$. Assume following quasi-linear utility function: $u_i=v_i(x_i)-c_i(x_i)+z$. Here, $u_i$ refers to``long-run'' utility of consumption, which eliminates the time dimension. $v_i$ is increasing in $x$ with decreasing marginal benefits. Potential costs $c$ are increasing in $x$ with no constraints on the marginal costs beside that $v_{xx}-c_{xx}<0$, guaranteeing well-behaved behavior. 

Suppose that consumers may not fully internalize the costs from consumption of the sin good and discount the long-run costs of the sin good with $0 \leq \gamma_i \leq 1$, implying that $\Tilde{u}_i=v_i(x_i)-\gamma_i c_i(x_i)+z$. Consumers are endowed with per period income $W$ that is large relative to the sin good consumption. The government can impose a per unit tax $t$ on the consumption of $x$ remitted by firms that finances a lump sum transfer $L$ to the consumers. The effective tax-inclusive consumer price is $p$ whereas the producers' price is $q=p-t$. The sin good is produced at constant marginal costs $m$. Assume for now that all consumers are homogeneous. Accordingly, each consumer solves following maximization problem:

\begin{equation}\label{eq:dec_util}
   \underset{x, z}{max\,} { \Tilde{u}=v(x)-\gamma c(x)+z}
\end{equation}
\centerline{s.t. $p x + z \leq W + L$}\\

If the market for the sin good is perfectly competitive, producer prices will equal marginal costs (including $t$), such that $p^*=m+t$. Accordingly, the tax will be fully passed onto consumers ($\frac{dp^*}{dt}=1$).\footnote{The full pass-through of the tax is the result of the constant marginal costs assumption, which implies a perfectly elastic supply. In the standard partial equilibrium model of taxation under perfect competition, the pass-through rate of the tax will be a function of demand and supply elasticities, where the relatively less elastic side of the market will bear most of the tax burden (see Appendix \ref{sec:app_standard} for details).} 

Without taxes ($t=0$) and no self-control problems ($\gamma=1$) the optimal consumption bundle $(x^*, z^*)$ is determined by the first order conditions: $v_x(x^*)-c_x(x^*)=m$ and $z^*=W-m x^*$.\footnote{Note that the optimal consumption bundle also depends on $m$, I simplify the notation in the interest of better readability.} This case provides the first-best solution that maximizes social welfare. Without taxes but $\gamma<1$, consumers ignores $1-\gamma$ of the costs. They set $\Tilde{x}^*(t)$ according to their decision utility, such that: $v_x(\Tilde{x}^*(t))-\gamma c_x(\Tilde{x}^*(t))=m+t$ and $z^*(t)=W+L-(m+t) \Tilde{x}^*(t)$. This implies that consumers overconsume the sin good if $t=L=0$: $\Tilde{x}^*(0)>x^*$. Similar to a simple Pigouvian tax-transfer scheme designed to correct for externalities, a tax $t^*=(1-\gamma)c_x(x^*)$ on sports betting and a uniform transfer of the tax revenue to consumers would implement the first-best solution, where all consumers choose $(x^*, z^*)$. The proof is straightforward and provided in Appendix \ref{sec:app_proof_perfcomp}, Lemma \ref{lem:app_proof_perfcomp}.

Suppose now that consumers underreact to taxes and misperceive the tax being equal to $\theta t$. $\theta<1$ can be interpreted as an exogeneous tax salience parameter similar to \citet{chetty2009salience, dellavigna2009psychology}. In this case consumers choose $\Tilde{x}_{\theta}^*(t)$ such that: $v_x(\Tilde{x}_{\theta}^*(t))-\gamma c_x(\Tilde{x}_{\theta}^*(t))=m+\theta t$ and $\Tilde{z}_{\theta}^*(t)=W+L-(m+t) \Tilde{x}_{\theta}^*(t)$.\footnote{Note that for $\Tilde{z}_{\theta}^*(t)$ there is no $\theta$ on the right-hand side of the equation, as consumers will spend the remaining ``true'' income on the composite good. $W$ is assumed to be large enough such that the consumers do not breach the budget constraint for any $\theta$ and $\gamma$.} The corrective effect of the tax is attenuated and the optimal tax rate increases. The optimal sin tax that reinstates the social optimal solution is equal to (see Appendix \ref{sec:app_proof_perfcomp}, Lemma \ref{lem:app_proof_perfcomp}): 
\begin{equation}\label{eq:opt_salient}
t_{\theta}^*=\dfrac{(1-\gamma)c_x(x^*)}{\theta}  
\end{equation} 
The sin tax is completely ineffective in correcting behavior if $\theta=0$. As tax revenues are completely transferred back to consumers and profits are zero irrespective of the tax, the sin tax is irrelevant from a social welfare perspective in this extreme case. 

The optimal tax formula also applies if there is an overreaction to the tax $\theta>1$ and the maximization problem of the social planner has an interior solution. In addition, all results of the homogeneous case also hold if additional to---or instead of---internalities, sports betting exhibits externalities to other consumers. The Pigou \emph{dollar principle} still applies: A dollar of internalities and/or externalities must be corrected with $\frac{1}{\theta}$ tax dollars \citep{farhi2020optimal}. The basic results hinge on the assumption that consumers are homogeneous. \citet{odonoghue2006optimal} study optimal sin taxation with perfectly attentive consumers that may vary in their tastes and $\gamma$. I will focus on consumers that are heterogenous in terms of their attention towards shrouded taxes in the next section.

\subsection{Optimal sin tax model with heterogeneous consumers and tax shrouding}

Now suppose that the sin good is not supplied under perfect competition but under imperfect Bertrand-Nash price competition. The supply side of the model follows \citeauthor{varian1980model}'s model of sales (\citeyear{varian1980model}) and shares several features with the model by \citet{armstrong2009inattentive} that study the optimal choice of binary product quality in the presence of inattentive consumers. 

$k=1,2,...,N$ firms produce the sin good at the same constant marginal costs $m$. Firms set a salient posted price $p^s_k$ and a shrouded tax surcharge $\tau_k$, which can be either zero or equal to the unit tax $t$ ($\tau_k \in \{0, t\}$). 
Accordingly, the firm $k$'s consumer price is defined as $p_k=p^s_k+\tau_k$ and producers prices are $p_k-t$. Supply is completely elastic.

On the demand side, suppose that only a fraction $\lambda$ of consumers are attentive. They perfectly observe all effective prices (including shrouded tax surcharges) and the shrouding decision by firms. Buying from a shrouding firm may come with a fixed disutility $s\geq0$, which may refer to attention or search costs, or behavioral costs related to the attempt of being ripped off. 
The remaining fraction of consumers ($1-\lambda$) are inattentive to shrouded taxes. They perfectly observe posted prices but misperceive the shrouded tax surcharge being equal to $\theta t$ if $\tau_k=t$. In the case where all consumers are entirely inattentive to shrouded taxes ($\theta=0$), consumers act on the posted price alone, i.e., they mistakenly believe that there are no tax surcharges. Demand from inattentive consumers is split equally among firms with the lowest posted price.
Firms cannot discriminate between consumer types and set one pricing strategy for all consumers. All demand function will further depend on competitors' prices and shrouding decisions. 

\subsubsection*{Homogeneous (in)attention}

As a benchmark consider following three boundary cases which results in some sort of marginal cost pricing. First, without taxes both types of consumers have perfect information about relevant prices and will simply buy from the firm with the lowest posted (i.e., effective) price. Following a price undercutting argument \`{a} la \citet{bertrand1883theorie}, this implies a unique symmetric equilibrium where all firms set prices equal to marginal costs $m$. Firm's market shares are equal to $\frac{1}{N}$ and profits are equal to zero independent of $\gamma$. With no self-control problems ($\gamma=1$), the optimal consumption bundle $(x^*,z^*)$ is chosen. With self-control problems, the demand for the sin good is higher, i.e., $x$ will be overconsumed from a social welfare perspective.

Second, suppose all consumers are fully attentive to the tax surcharge ($\lambda=1$). In that case, all firms set symmetric equilibrium prices equal to firms' marginal costs $p^*=m+t$. Sin taxes are fully passed through onto consumer prices ($\frac{dp^*}{dt}=1$). All firms will split the market equally and equlibrium profits are zero. If $s>0$, firms do not shroud taxes in equilibrium. 
The implications of the classical corrective taxation model under perfect competition discussed above carry over to the extended model in this case.  

Third, if all consumers underreact to shrouded taxes, firms optimally shroud taxes ($\tau_k=t$) in equilibrium. As consumers are homogeneous, the Bertrand price undercutting argument applies, and firms will set symmetric posted prices equal to $m$, implying zero equilibrium profits. Firms effectively still fully pass through taxes onto consumers (through shrouded taxes); the corrective effect of the tax is, however, attenuated by $\theta$. The optimal sin tax results under a homogeneous underreaction and perfectly competitive markets apply, and the optimal tax rate is given by Eq. \ref{eq:opt_salient}.

\subsubsection*{Heterogeneous attention}

With both types of consumers ($0<\lambda<1$) and $\theta<1$, there is no equilibrium in pure strategies where all firms unshroud taxes. (Lemma \ref{lem:nopure} in Appendix \ref{sec:app_proof_imperfcomp}). This implies that tax shrouding prevails in equilibrium with some inattentive consumers, and the sin tax is less effective, on average, as in the base case. If $s>0$, there is further no symmetric pure strategy equilibrium (Lemma \ref{lem:nopure} in Appendix \ref{sec:app_proof_imperfcomp}). 

However, if $N\geq4$, a ``segmented'' market equilibrium exists (Lemma \ref{lem:segment} in Appendix \ref{sec:app_proof_imperfcomp}). In the first market segment, firms shroud taxes, set $p^{s*}_k=m$, and split the demand of all inattentive consumers. In the second market segment, firms do not shroud taxes, $p^{s*}_k=m+t$, and sell their products to all attentive consumers. The equilibrium profits are zero in both segments, as in the boundary cases with homogeneous attention discussed above. \footnote{Segmented market equilibria can also arise in differentiated consumer search models in the spirit of \citet{wolinsky1986true} where firms set prices and produce either high or low-quality products \citep{gamp2022competition}. Low quality can be interpreted as a product with shrouded surcharges. While the equilibrium predictions are richer, the main policy implications concerning optimal corrective taxation qualitatively still apply. In \citet{gamp2022competition}, firms earn positive profits in both segments. Low-quality firms only serve inattentive consumers but earn higher markups. High-quality firms serve both types of consumers, which implies higher market shares but lower effective markups. The equilibrium outcomes will primarily depend on the share of inattentive consumers, the search costs, and the degree of product heterogeneity.} The tax will have a different corrective effect in each market segment. In this case, a (consumer-)type-specific sin tax can implement the first-best welfare optimal solution. However, type-specific tax-transfer schemes are usually not feasible. In the remainder of this section, I thus investigate optimal sin taxation when the government is restricted to a linear tax-transfer scheme that is homogeneous across consumers. 

Throughout the analysis, I assume equal welfare weights for all people. If $t>0$, $\theta<1$ and $\gamma<1$, inattentive consumers consume more of the sin good than attentive consumers, even if self-control problems are homogeneous. Thus, no linear tax-transfer scheme implements the first-best solution from a social planner's perspective.  If the tax is set equal to $t*$, attentive consumers perfectly internalize the internalities and/or externalities but inattentive consumers still overconsume the sin good: $\Tilde{x}_{\theta}^*(t^*)>\Tilde{x}^*(t^*)=x^*$. On the other hand, when the sin tax is set to $t_{\theta}^*$, the sin good consumption of the inattentive consumers will be optimal but attentive consumers ``underconsume'' the sin good compared to the social optimum: $\Tilde{x}_{\theta}^*(t_{\theta}^*)=x^*>\Tilde{x}^*(t_{\theta}^*)$. These two considerations are optimally traded off by the government when setting the sin tax rate. The optimal tax rate increases with the fraction of inattentive consumers. In the stylized model, requiring firms to post tax-inclusive prices would eliminate differences in attention to tax-induced price changes, implying that $t^*$ would effectively implement the social optimum. 

I focus on the segmented market equilibrium when deriving the implications for the optimal sin tax. In related models,  there is usually also a symmetric mixed strategy equilibrium that involves random pricing and implies positive profits for firms. \citet{armstrong2009inattentive} study a similar setup where firms can ``cheat'' inattentive consumers by selling low-quality products. Low quality can be interpreted as a product with shrouded attributes. While their setup assumes uniform demand for the product and that the low-quality product is entirely worthless, the implications of their model provide some important insights for the optimal corrective policy in a mixed strategy equilibrium. 

In the symmetric mixed strategy equilibrium, firms draw prices from a bounded probability distribution involving a cutoff price. Below the cutoff price, a firm targets inattentive consumers by choosing a low quality and lower price. Above the price cutoff, the firm chooses high quality, higher prices and sells to both types of consumers. In this equilibrium, firms make positive profits, implying that the presence of inattentive consumers harms attentive consumers, in contrast to \citet{gabaix2006}. There is always a positive probability that firms choose low quality in equilibrium as long as $\lambda>1$. An education policy that increases the number of attentive consumers tends to decrease the number of low-quality firms in equilibrium. However, the effects of such a policy are ambiguous as profits and aggregate consumer surplus is non-monotonic in the share of attentive consumers. Increasing the number of firms drives firms' profits to zero but does not prevent ``cheating by firms'' by firms. If low quality refers to a product with shrouded tax surcharges, there is, thus, always a positive probability that firms shroud taxes in equilibrium as long as $\lambda>1$, attenuating the average corrective effect of a sin tax. Again, requiring firms to reveal their quality, i.e., posting tax-inclusive prices, would prevent ``cheating'' in equilibrium and increases total welfare.

\section{Conclusion}

One aspect often disregarded when designing tax policies is that firms cannot only change prices but also decide whether to hide or shroud taxes levied on them from consumers. 
If firms strategically shroud taxes, effective tax-inclusive prices may rise, but the welfare effect of corrective taxes would be impaired as a part of the increase is not salient. 
In this paper, I show that the strategic shrouding of sin taxes, which drives a wedge between posted and effective tax-inclusive consumer prices, is widespread in online betting markets. 
According to the theoretical model, the observed shrouding strategies and tax effects are only attainable if (some) consumers misperceive and underreact to shrouded tax surcharges. 
Consequently, the corrective effect of the tax is likely to be small because most of the high pass-through rates are attributed to increases in shrouded tax surcharges. At the same time, bettors bear most of the ``true'' tax burden due to higher effective prices. 
The best policy response depends on the distribution of attention toward shrouded taxes, and homogeneous corrective tax-transfer schemes are unsuccessful in restoring the welfare optimum if attention is heterogeneous. Banning tax shrouding practices successfully restores the effectiveness of sin taxes independent of these considerations (see section \ref{sec:theory_sin}).

The theoretical and empirical results can be directly applied to externality correction policies, such as carbon taxes or environmental subsidies. Similar to sin taxes, the effectiveness of these policies depends on perceived rather than experienced changes in incentives. If consumers do not internalize shrouded taxes or do not know about subsidies, they will underreact to those policy incentives. However, a critical difference between taxes and subsidies is that firms have the incentive to increase, instead of decrease, the salience of a subsidy because it makes their offers more attractive. The motives of the government and firms align. Consequently, consumers' limited attention to surcharges and firms' strategic incentives to manipulate policy's salience provides a novel argument for why environmental subsidies can be socially preferable to taxes.

One common objection against sin taxes is their regressivity, i.e., low-income consumers tend to bear most of their tax burden \citep{allcott2019regressive}. Past studies suggest that regressivity is not a major concern for (online) sports betting taxes as sports bettors, in contrast to other forms of gambling, tend to be relatively wealthy, and income is not a predictor for excessive gambling \citep{humphreys2012bets, wicker2012examining, hing2016demographic}. However, a correlation between attention and income can have significant implications depending on the correlation's sign. \citet{goldin2013smoke} find that low-income consumers pay more attention to less salient cigarette taxes. In our case, this attention gap would imply that the negative effects of shrouding tend to fall disproportionately strong on high-income individuals, making the sin taxes less regressive. Similarly, if attention and self-control problems are correlated, this can affect the policy implications. For instance, impulsive bettors are likely more prone to self-control problems and, at the same time, less attentive to shrouded price attributes \citep{hing2018spur}. In this case, tax shrouding is even more harmful as the individuals that would benefit from the corrective effect the most react to it the least. A thorough analysis of these effects goes beyond the scope of this paper and provides a promising avenue for future research.\footnote{The theoretical model assumes that self-control problems and income is homogeneous across consumers and thus independent of attention toward shrouded taxes.\citet{odonoghue2006optimal} provide an analysis of optimal sin taxes when consumers are heterogeneous in their tastes and self-control problems concerning sin good consumption.}    

As predicted by the theoretical model, the tax reform has led to market segmentation between non-shrouding and shrouding firms. I do not find significant differences between non-shrouding and shrouding agencies before the reform. According to the model, the tax has introduced a new dimension of product differentiation of an otherwise homogeneous good. In equilibrium, non-shrouding firms target attentive consumers, while shrouding firms target inattentive consumers. Following predictions from related models on binary quality choice \citep{armstrong2009inattentive, gamp2022competition}, profit-maximizing firms trade off higher market share with higher markups. Anecdotal evidence from the annual reports of listed agencies is in line with this conjecture. While the leading agency that did not shroud taxes was able to considerably increase its market share in Germany to over 50\% since the tax reform \citep{tipico2018social}, the market share, implied by market size and revenues of the prior market leader (that chose to shroud taxes in response to the reform) decreased over time \citep{bwin2013annual, bwin2018}.  

Aggregated betting revenues in Germany strongly increased, analogous to other European countries before and after the tax reform. Together with the observed equilibrium prices, this development suggests a limited corrective effect of the betting tax on online sports betting consumption. However, more granular betting revenue data within and outside the German market is needed to make a conclusive empirical statement about the precise corrective effect on consumption. Unfortunately, online betting markets remain very opaque, and granular revenue data is not available. Future studies that conduct surveys with bettors about their betting behavior and attention to the tax or structured interviews with industry experts may address this limitation of the study.

The study examines the pass-through on the average betting prices across all outcomes of an event. However, the setting also offers a promising natural laboratory to study the effects of transaction taxes on asset prices and market efficiency. Betting markets on a single event can be understood as small and complete financial markets with state-contingent claims (bets) that have well-defined state prices (odds) and termination points \citep{thaler1988anomalies, moskowitz2002returns}. Accordingly, the effect of the tax on relative betting prices in single-event markets can have important implications for financial markets. For instance, one may use the unique setting presented in this study to examine whether the sports betting tax lead to less efficient betting markets.


\newpage

\pagebreak

\clearpage

\bibliographystyle{plainnat}
\bibliography{diss_bib}

\begin{thebibliography}{97}
\providecommand{\natexlab}[1]{#1}
\providecommand{\url}[1]{\texttt{#1}}
\expandafter\ifx\csname urlstyle\endcsname\relax
  \providecommand{\doi}[1]{doi: #1}\else
  \providecommand{\doi}{doi: \begingroup \urlstyle{rm}\Url}\fi

\bibitem[Allcott and Taubinsky(2015)]{allcott2015}
Hunt Allcott and Dmitry Taubinsky.
\newblock Evaluating behaviorally motivated policy: Experimental evidence from
  the lightbulb market.
\newblock \emph{American Economic Review}, 105\penalty0 (8):\penalty0 2501--38,
  2015.

\bibitem[Allcott et~al.(2019{\natexlab{a}})Allcott, Lockwood, and
  Taubinsky]{allcott2019regressive}
Hunt Allcott, Benjamin~B Lockwood, and Dmitry Taubinsky.
\newblock Regressive sin taxes, with an application to the optimal soda tax.
\newblock \emph{The Quarterly Journal of Economics}, 134\penalty0 (3):\penalty0
  1557--1626, 2019{\natexlab{a}}.

\bibitem[Allcott et~al.(2019{\natexlab{b}})Allcott, Lockwood, and
  Taubinsky]{allcott2019should}
Hunt Allcott, Benjamin~B Lockwood, and Dmitry Taubinsky.
\newblock Should we tax sugar-sweetened beverages? an overview of theory and
  evidence.
\newblock \emph{Journal of Economic Perspectives}, 33\penalty0 (3):\penalty0
  202--27, 2019{\natexlab{b}}.

\bibitem[Anderson et~al.(2001)Anderson, De~Palma, and Kreider]{anderson2001tax}
Simon~P Anderson, Andre De~Palma, and Brent Kreider.
\newblock Tax incidence in differentiated product oligopoly.
\newblock \emph{Journal of Public Economics}, 81\penalty0 (2):\penalty0
  173--192, 2001.

\bibitem[Armstrong and Chen(2009)]{armstrong2009inattentive}
Mark Armstrong and Yongmin Chen.
\newblock Inattentive consumers and product quality.
\newblock \emph{Journal of the European Economic Association}, 7\penalty0
  (2-3):\penalty0 411--422, 2009.

\bibitem[Barnett et~al.(1995)Barnett, Keeler, and Hu]{barnett1995oligopoly}
Paul~G Barnett, Theodore~E Keeler, and Teh-Wei Hu.
\newblock Oligopoly structure and the incidence of cigarette excise taxes.
\newblock \emph{Journal of Public Economics}, 57\penalty0 (3):\penalty0
  457--470, 1995.

\bibitem[Benzarti et~al.(2020)Benzarti, Carloni, Harju, and
  Kosonen]{benzarti2020goes}
Youssef Benzarti, Dorian Carloni, Jarkko Harju, and Tuomas Kosonen.
\newblock What goes up may not come down: asymmetric incidence of value-added
  taxes.
\newblock \emph{Journal of Political Economy}, 128\penalty0 (12):\penalty0
  4438--4474, 2020.

\bibitem[Bernheim and Taubinsky(2018)]{bernheim2018behavioral}
B~Douglas Bernheim and Dmitry Taubinsky.
\newblock Behavioral public economics.
\newblock \emph{Handbook of behavioral economics: Applications and Foundations
  1}, 1:\penalty0 381--516, 2018.

\bibitem[Bertrand(1883)]{bertrand1883theorie}
Joseph Bertrand.
\newblock Th{\'e}orie math{\'e}matique de la richesse sociale.
\newblock \emph{Journal des Savants}, 67\penalty0 (1883):\penalty0 499--508,
  1883.

\bibitem[Borusyak et~al.(2022)Borusyak, Jaravel, and
  Spiess]{borusyak2021revisiting}
Kirill Borusyak, Xavier Jaravel, and Jann Spiess.
\newblock Revisiting event study designs: Robust and efficient estimation.
\newblock \emph{arXiv preprint arXiv:2108.12419}, 2022.
\newblock \doi{10.48550/ARXIV.2108.12419}.

\bibitem[Bradley and Feldman(2020)]{bradley2020hidden}
Sebastien Bradley and Naomi~E Feldman.
\newblock Hidden baggage: Behavioral responses to changes in airline ticket tax
  disclosure.
\newblock \emph{American Economic Journal: Economic Policy}, 12\penalty0
  (4):\penalty0 58--87, 2020.

\bibitem[Brown et~al.(2010)Brown, Hossain, and Morgan]{brown2010}
Jennifer Brown, Tanjim Hossain, and John Morgan.
\newblock Shrouded attributes and information suppression: {E}vidence from the
  field.
\newblock \emph{Quarterly Journal of Economics}, 125\penalty0 (2):\penalty0
  859--876, 05 2010.

\bibitem[Brown and Sauer(1993)]{brown1993fundamentals}
William~O Brown and Raymond~D Sauer.
\newblock Fundamentals or noise? evidence from the professional basketball
  betting market.
\newblock \emph{Journal of Finance}, 48\penalty0 (4):\penalty0 1193--1209,
  1993.

\bibitem[Bulow and Pfleiderer(1983)]{bulow1983note}
Jeremy~I Bulow and Paul Pfleiderer.
\newblock A note on the effect of cost changes on prices.
\newblock \emph{Journal of political Economy}, 91\penalty0 (1):\penalty0
  182--185, 1983.

\bibitem[{Bundesverfassungsgericht}(2006)]{bverfg2006}
{Bundesverfassungsgericht}.
\newblock Urteil des ersten senats vom 28. m\"arz 2006 - 1 bvr 1054/01 -, rn.
  1-162,.
\newblock 2006.
\newblock
  \url{https://www.bundesverfassungsgericht.de/SharedDocs/Entscheidungen/DE/2006/03/rs20060328_1bvr105401.html}.
  Last accessed August 28, 2022.

\bibitem[{Bwin.party digital entertainment}(2013)]{bwin2013annual}
{Bwin.party digital entertainment}.
\newblock Annual report \& accounts 2012.
\newblock 2013.
\newblock
  \url{https://entaingroup.com/wp-content/uploads/2017/12/2012\_Annual\_Report.pdf}.
  Last accessed March 29, 2022.

\bibitem[Camerer(1989)]{camerer1989does}
Colin~F Camerer.
\newblock Does the basketball market believe in the hot hand?
\newblock \emph{American Economic Review}, 79\penalty0 (5):\penalty0
  1257--1261, 1989.

\bibitem[Carlin(2009)]{carlin2009strategic}
Bruce~I Carlin.
\newblock Strategic price complexity in retail financial markets.
\newblock \emph{Journal of Financial Economics}, 91\penalty0 (3):\penalty0
  278--287, 2009.

\bibitem[Chetty et~al.(2009)Chetty, Looney, and Kroft]{chetty2009salience}
Raj Chetty, Adam Looney, and Kory Kroft.
\newblock Salience and taxation: Theory and evidence.
\newblock \emph{American Economic Review}, 99\penalty0 (4):\penalty0 1145--77,
  2009.

\bibitem[Chioveanu and Zhou(2013)]{chioveanu2013price}
Ioana Chioveanu and Jidong Zhou.
\newblock Price competition with consumer confusion.
\newblock \emph{Management Science}, 59\penalty0 (11):\penalty0 2450--2469,
  2013.

\bibitem[De~Chaisemartin and d'Haultfoeuille(2020)]{de2020two}
Cl{\'e}ment De~Chaisemartin and Xavier d'Haultfoeuille.
\newblock Two-way fixed effects estimators with heterogeneous treatment
  effects.
\newblock \emph{American Economic Review}, 110\penalty0 (9):\penalty0 2964--96,
  2020.

\bibitem[DeCicca et~al.(2013)DeCicca, Kenkel, and Liu]{decicca2013pays}
Philip DeCicca, Donald Kenkel, and Feng Liu.
\newblock Who pays cigarette taxes? the impact of consumer price search.
\newblock \emph{Review of Economics and Statistics}, 95\penalty0 (2):\penalty0
  516--529, 2013.

\bibitem[Delipalla and Keen(1992)]{delipalla1992comparison}
Sofia Delipalla and Michael Keen.
\newblock The comparison between ad valorem and specific taxation under
  imperfect competition.
\newblock \emph{Journal of Public Economics}, 49\penalty0 (3):\penalty0
  351--367, 1992.

\bibitem[DellaVigna(2009)]{dellavigna2009psychology}
Stefano DellaVigna.
\newblock Psychology and economics: Evidence from the field.
\newblock \emph{Journal of Economic literature}, 47\penalty0 (2):\penalty0
  315--72, 2009.

\bibitem[Dertwinkel-Kalt et~al.(2020)Dertwinkel-Kalt, K{\"o}ster, and
  Sutter]{dertwinkel2020buy}
Markus Dertwinkel-Kalt, Mats K{\"o}ster, and Matthias Sutter.
\newblock To buy or not to buy? price salience in an online shopping field
  experiment.
\newblock \emph{European Economic Review}, 130:\penalty0 103593, 2020.

\bibitem[{Deutscher Sportwettenverband}(2018)]{dswv2018}
{Deutscher Sportwettenverband}.
\newblock Playtech und kindred verst\"arken deutschen sportwettenverband.
\newblock 2018.
\newblock
  \url{https://dswv.de/playtech-und-kindred-verstaerken-deutschen-sportwettenverband/}.
  Last accessed August 28, 2022.

\bibitem[Doyle and Samphantharak(2008)]{doyle20082}
Joseph~J Doyle and Krislert Samphantharak.
\newblock \$2.00 gas! studying the effects of a gas tax moratorium.
\newblock \emph{Journal of Public Economics}, 92\penalty0 (3-4):\penalty0
  869--884, 2008.

\bibitem[Ellison(2005)]{ellison2005model}
Glenn Ellison.
\newblock A model of add-on pricing.
\newblock \emph{Quarterly Journal of Economics}, 120\penalty0 (2):\penalty0
  585--637, 2005.

\bibitem[Ellison and Ellison(2009)]{ellison2009search}
Glenn Ellison and Sara~Fisher Ellison.
\newblock Search, obfuscation, and price elasticities on the internet.
\newblock \emph{Econometrica}, 77\penalty0 (2):\penalty0 427--452, 2009.

\bibitem[Ellison and Wolitzky(2012)]{ellison2012search}
Glenn Ellison and Alexander Wolitzky.
\newblock A search cost model of obfuscation.
\newblock \emph{RAND Journal of Economics}, 43\penalty0 (3):\penalty0 417--441,
  2012.

\bibitem[Englisch(2013)]{englisch2013taxation}
Joachim Englisch.
\newblock Taxation of online gambling in {G}ermany.
\newblock \emph{Gaming Law Review and Economics}, 17\penalty0 (1):\penalty0
  20--32, 2013.

\bibitem[{European Commission}(2012{\natexlab{a}})]{ec2012faq}
{European Commission}.
\newblock Online gambling in the internal market - frequently asked questions.
\newblock 2012{\natexlab{a}}.
\newblock
  \url{https://ec.europa.eu/commission/presscorner/detail/en/MEMO_12_798}. Last
  accessed March 3, 2022.

\bibitem[{European Commission}(2012{\natexlab{b}})]{ec2012press}
{European Commission}.
\newblock Commission sets out an action plan for online gambling.
\newblock 2012{\natexlab{b}}.
\newblock
  \url{https://ec.europa.eu/commission/presscorner/detail/en/IP_12_1135}. Last
  accessed March 3, 2022.

\bibitem[{European Commission}(2014)]{ec2014}
{European Commission}.
\newblock Online gambling: Commission recommends principles to ensure effective
  protection of consumers.
\newblock 2014.
\newblock
  \url{https://ec.europa.eu/commission/presscorner/detail/en/IP_14_828}. Last
  accessed March 3, 2022.

\bibitem[{European Court of Justice}(2010)]{eugh2010}
{European Court of Justice}.
\newblock Judgement of the court (grand chamber) -- 8 september 2010.
\newblock 2010.
\newblock
  \url{https://curia.europa.eu/juris/liste.jsf?language=en&num=C-316/07}. Last
  accessed August 28, 2022.

\bibitem[Farhi and Gabaix(2020)]{farhi2020optimal}
Emmanuel Farhi and Xavier Gabaix.
\newblock Optimal taxation with behavioral agents.
\newblock \emph{American Economic Review}, 110\penalty0 (1):\penalty0 298--336,
  2020.

\bibitem[Feldman and Ruffle(2015)]{Feldmann2015}
Naomi~E. Feldman and Bradley~J. Ruffle.
\newblock The impact of including, adding, and subtracting a tax on demand.
\newblock \emph{American Economic Journal: Economic Policy}, 7\penalty0
  (1):\penalty0 95--118, 2015.

\bibitem[Figlewski(1979)]{figlewski1979subjective}
Stephen Figlewski.
\newblock Subjective information and market efficiency in a betting market.
\newblock \emph{Journal of Political Economy}, 87\penalty0 (1):\penalty0
  75--88, 1979.

\bibitem[Finkelstein(2009)]{Finkelstein2009}
Amy Finkelstein.
\newblock E-ztax: Tax salience and tax rates.
\newblock \emph{The Quarterly Journal of Economics}, 124\penalty0 (3):\penalty0
  969--1010, 2009.

\bibitem[{Forbes}(2022)]{forbes2022betting}
{Forbes}.
\newblock Where is sports betting legal? a guide to all 50 states.
\newblock 2022.
\newblock
  \url{https://www.forbes.com/sites/willyakowicz/2022/01/07/where-is-sports-betting-legal-america-2022/?sh=462e796e2342}.
  Last accessed August 28, 2022.

\bibitem[Gabaix and Laibson(2006)]{gabaix2006}
Xavier Gabaix and David Laibson.
\newblock Shrouded attributes, consumer myopia, and information suppression in
  competitive markets.
\newblock \emph{Quarterly Journal of Economics}, 121\penalty0 (2):\penalty0
  505--540, 05 2006.

\bibitem[Gamp and Kr{\"a}hmer(2022)]{gamp2022competition}
Tobias Gamp and Daniel Kr{\"a}hmer.
\newblock Competition in search markets with naive consumers.
\newblock \emph{The RAND Journal of Economics}, 2022.

\bibitem[{Gl\"ucksspielaufsichtsbeh\"orden der L\"ander}(2021)]{hessen2021eval}
{Gl\"ucksspielaufsichtsbeh\"orden der L\"ander}.
\newblock Jahresreport 2020 der gl\"ucksspielaufsichtsbeh\"orden der l\"ander.
\newblock 2021.
\newblock
  \url{https://innen.hessen.de/sites/innen.hessen.de/files/2021-12/jahresreport_2020.pdf}.
  Last accessed August 28, 2022.

\bibitem[Goldin and Homonoff(2013)]{goldin2013smoke}
Jacob Goldin and Tatiana Homonoff.
\newblock Smoke gets in your eyes: Cigarette tax salience and regressivity.
\newblock \emph{American Economic Journal: Economic Policy}, 5\penalty0
  (1):\penalty0 302--336, 2013.

\bibitem[Goodman-Bacon(2021)]{goodman2021difference}
Andrew Goodman-Bacon.
\newblock Difference-in-differences with variation in treatment timing.
\newblock \emph{Journal of Econometrics}, 225\penalty0 (2):\penalty0 254--277,
  2021.

\bibitem[Gruber and K{\H{o}}szegi(2004)]{gruber2004tax}
Jonathan Gruber and Botond K{\H{o}}szegi.
\newblock Tax incidence when individuals are time-inconsistent: the case of
  cigarette excise taxes.
\newblock \emph{Journal of Public Economics}, 88\penalty0 (9-10):\penalty0
  1959--1987, 2004.

\bibitem[{GVC Holdings}(2018)]{bwin2018}
{GVC Holdings}.
\newblock Annual report2017.
\newblock 2018.
\newblock
  \url{https://entaingroup.com/wp-content/uploads/2018/04/GVC_Annual-Report-2017.pdf}.
  Last accessed March 29, 2022.

\bibitem[H{\"a}berling(2012)]{haberling2012internet}
George H{\"a}berling.
\newblock Internet gambling policy in europe.
\newblock In \emph{Routledge International Handbook of Internet Gambling},
  pages 304--319. Routledge, 2012.

\bibitem[Harding et~al.(2012)Harding, Leibtag, and
  Lovenheim]{harding2012heterogeneous}
Matthew Harding, Ephraim Leibtag, and Michael~F Lovenheim.
\newblock The heterogeneous geographic and socioeconomic incidence of cigarette
  taxes: evidence from nielsen homescan data.
\newblock \emph{American Economic Journal: Economic Policy}, 4\penalty0
  (4):\penalty0 169--98, 2012.

\bibitem[Harju et~al.(2018)Harju, Kosonen, and Skans]{harju2018firm}
Jarkko Harju, Tuomas Kosonen, and Oskar~Nordstr{\"o}m Skans.
\newblock Firm types, price-setting strategies, and consumption-tax incidence.
\newblock \emph{Journal of Public Economics}, 165:\penalty0 48--72, 2018.

\bibitem[Harju et~al.(2022)Harju, Kosonen, Laukkanen, and
  Palanne]{harju2022heterogeneous}
Jarkko Harju, Tuomas Kosonen, Marita Laukkanen, and Kimmo Palanne.
\newblock The heterogeneous incidence of fuel carbon taxes: Evidence from
  station-level data.
\newblock \emph{Journal of Environmental Economics and Management}, page
  102607, 2022.

\bibitem[Haucap(2021)]{haucap2021glucksspielstaatsvertrag}
Justus Haucap.
\newblock Gl{\"u}cksspielstaatsvertrag: Endlich reguliert.
\newblock \emph{Wirtschaftsdienst}, 101\penalty0 (8):\penalty0 583--583, 2021.

\bibitem[Heidhues et~al.(2016)Heidhues, Kőszegi, and Murooka]{heidhues2017}
Paul Heidhues, Botond Kőszegi, and Takeshi Murooka.
\newblock Inferior products and profitable deception.
\newblock \emph{Review of Economic Studies}, 84\penalty0 (1):\penalty0
  323--356, 08 2016.

\bibitem[Hindriks and Serse(2019)]{hindriks2019heterogeneity}
Jean Hindriks and Valerio Serse.
\newblock Heterogeneity in the tax pass-through to spirit retail prices:
  Evidence from belgium.
\newblock \emph{Journal of Public Economics}, 176:\penalty0 142--160, 2019.

\bibitem[Hing et~al.(2016)Hing, Russell, Vitartas, and
  Lamont]{hing2016demographic}
Nerilee Hing, Alex~MT Russell, Peter Vitartas, and Matthew Lamont.
\newblock Demographic, behavioural and normative risk factors for gambling
  problems amongst sports bettors.
\newblock \emph{Journal of gambling studies}, 32\penalty0 (2):\penalty0
  625--641, 2016.

\bibitem[Hing et~al.(2018)Hing, Li, Vitartas, and Russell]{hing2018spur}
Nerilee Hing, En~Li, Peter Vitartas, and Alex~MT Russell.
\newblock On the spur of the moment: Intrinsic predictors of impulse sports
  betting.
\newblock \emph{Journal of Gambling Studies}, 34\penalty0 (2):\penalty0
  413--428, 2018.

\bibitem[Hollenbeck and Uetake(2021)]{hollenbeck2021taxation}
Brett Hollenbeck and Kosuke Uetake.
\newblock Taxation and market power in the legal marijuana industry.
\newblock \emph{The RAND Journal of Economics}, 52\penalty0 (3):\penalty0
  559--595, 2021.

\bibitem[Hossain and Morgan(2006)]{hossain2006plus}
Tanjim Hossain and John Morgan.
\newblock ... plus shipping and handling: Revenue (non) equivalence in field
  experiments on ebay.
\newblock \emph{Advances in Economic Analysis \& Policy}, 5\penalty0
  (2):\penalty0 1--27, 2006.

\bibitem[Humphreys and Perez(2012)]{humphreys2012bets}
Brad~R Humphreys and Levi Perez.
\newblock Who bets on sports? characteristics of sports bettors and the
  consequences of expanding sports betting opportunities.
\newblock \emph{Studies of Applied Economics}, 30\penalty0 (2):\penalty0
  579--598, 2012.

\bibitem[{International Betting Integrity Association}(2021)]{ibia2021}
{International Betting Integrity Association}.
\newblock An optimum betting market: A regulatory, fiscal \& integrity
  assessment.
\newblock 2021.
\newblock
  \url{https://ibia.bet/wp-content/uploads/2021/08/IBIA-An-Optimum-Betting-Market.pdf}.
  Last accessed March 3, 2022.

\bibitem[Keller et~al.(2023)Keller, Guyt, and Grewal]{keller2023express}
Kristopher~O Keller, Jonne~Y Guyt, and Rajdeep Grewal.
\newblock Soda taxes and marketing conduct.
\newblock \emph{Journal of Marketing Research}, page 00222437231195551, 2023.

\bibitem[Kenkel(2005)]{kenkel2005alcohol}
Donald~S Kenkel.
\newblock Are alcohol tax hikes fully passed through to prices? evidence from
  alaska.
\newblock \emph{American Economic Review}, 95\penalty0 (2):\penalty0 273--277,
  2005.

\bibitem[Kopczuk et~al.(2016)Kopczuk, Marion, Muehlegger, and
  Slemrod]{kopczuk2016does}
Wojciech Kopczuk, Justin Marion, Erich Muehlegger, and Joel Slemrod.
\newblock Does tax-collection invariance hold? evasion and the pass-through of
  state diesel taxes.
\newblock \emph{American Economic Journal: Economic Policy}, 8\penalty0
  (2):\penalty0 251--86, 2016.

\bibitem[Kosfeld and Sch{\"u}wer(2017)]{kosfeld2017add}
Michael Kosfeld and Ulrich Sch{\"u}wer.
\newblock Add-on pricing in retail financial markets and the fallacies of
  consumer education.
\newblock \emph{Review of Finance}, 21\penalty0 (3):\penalty0 1189--1216, 2017.

\bibitem[Kotlikoff and Summers(1987)]{kotlikoff1987tax}
Laurence~J Kotlikoff and Lawrence~H Summers.
\newblock Tax incidence.
\newblock In \emph{Handbook of Public Economics}, volume~2, pages 1043--1092.
  Elsevier, 1987.

\bibitem[{KPMG}(2022)]{kpmg2022tax}
{KPMG}.
\newblock Taxation of the digitalized economy.
\newblock 2022.
\newblock
  {\url{https://tax.kpmg.us/content/dam/tax/en/pdfs/2022/digitalized-economy-taxation-developments-summary.pdf}.
  Last accessed March 10, 2022}.

\bibitem[Kroft et~al.(2020)Kroft, Lalibert{\'e}, Leal~Vizca{\'\i}no, and
  Notowidigdo]{kroft2020salience}
Kory Kroft, Jean-William Lalibert{\'e}, Ren{\'e} Leal~Vizca{\'\i}no, and
  Matthew Notowidigdo.
\newblock Salience and taxation with imperfect competition.
\newblock \emph{NBER Working Paper}, \penalty0 (27409), 2020.

\bibitem[Laffey et~al.(2016)Laffey, Della~Sala, and Laffey]{laffey2016patriot}
Des Laffey, Vincent Della~Sala, and Kathryn Laffey.
\newblock Patriot games: The regulation of online gambling in the european
  union.
\newblock \emph{Journal of European Public Policy}, 23\penalty0 (10):\penalty0
  1425--1441, 2016.

\bibitem[Laibson(1997)]{laibson1997}
David Laibson.
\newblock Golden eggs and hyperbolic discounting.
\newblock \emph{Quarterly Journal of Economics}, 112:\penalty0 443--477, 1997.

\bibitem[Levitt(2004)]{levitt2004gambling}
Steven~D Levitt.
\newblock Why are gambling markets organised so differently from financial
  markets?
\newblock \emph{Economic Journal}, 114\penalty0 (495):\penalty0 223--246, 2004.

\bibitem[Miravete et~al.(2018)Miravete, Seim, and Thurk]{miravete2018market}
Eugenio~J Miravete, Katja Seim, and Jeff Thurk.
\newblock Market power and the laffer curve.
\newblock \emph{Econometrica}, 86\penalty0 (5):\penalty0 1651--1687, 2018.

\bibitem[Miravete et~al.(2020)Miravete, Seim, and Thurk]{miravete2020one}
Eugenio~J Miravete, Katja Seim, and Jeff Thurk.
\newblock One markup to rule them all: Taxation by liquor pricing regulation.
\newblock \emph{American Economic Journal: Microeconomics}, 12\penalty0
  (1):\penalty0 1--41, 2020.

\bibitem[Montone(2021)]{montone2021optimal}
Maurizio Montone.
\newblock Optimal pricing in the online betting market.
\newblock \emph{Journal of Economic Behavior \& Organization}, 186:\penalty0
  344--363, 2021.

\bibitem[Moskowitz(2021)]{moskowitz2021asset}
Tobias~J Moskowitz.
\newblock Asset pricing and sports betting.
\newblock \emph{Journal of Finance}, 76\penalty0 (6):\penalty0 3153--3209,
  2021.

\bibitem[Moskowitz and Vissing-J{\o}rgensen(2002)]{moskowitz2002returns}
Tobias~J Moskowitz and Annette Vissing-J{\o}rgensen.
\newblock The returns to entrepreneurial investment: A private equity premium
  puzzle?
\newblock \emph{American Economic Review}, 92\penalty0 (4):\penalty0 745--778,
  2002.

\bibitem[O'Donoghue and Rabin(2006)]{odonoghue2006optimal}
Ted O'Donoghue and Matthew Rabin.
\newblock Optimal sin taxes.
\newblock \emph{Journal of Public Economics}, 90\penalty0 (10-11):\penalty0
  1825--1849, 2006.

\bibitem[{OECD}(2014)]{oecd2014beps}
{OECD}.
\newblock Addressing the tax challenges of the digital economy, oecd/g20 base
  erosion and profit shifting project.
\newblock 2014.
\newblock OECD Publishing, \url{https://doi.org/10.1787/9789264218789-en}. Last
  accessed March 10, 2022.

\bibitem[Paton et~al.(2002)Paton, Siegel, and
  Vaughan~Williams]{paton2002policy}
David Paton, Donald~S Siegel, and Leighton Vaughan~Williams.
\newblock A policy response to the e-commerce revolution: The case of betting
  taxation in the uk.
\newblock \emph{Economic Journal}, 112\penalty0 (480):\penalty0 296--314, 2002.

\bibitem[Piccione and Spiegler(2012)]{piccione2012price}
Michele Piccione and Ran Spiegler.
\newblock Price competition under limited comparability.
\newblock \emph{Quarterly Journal of Economics}, 127\penalty0 (1):\penalty0
  97--135, 2012.

\bibitem[Pigou(1920)]{pigou1920economics}
Arthur Pigou.
\newblock \emph{The economics of welfare}.
\newblock Macmillan and Company, 1920.

\bibitem[Pless and van Benthem(2019)]{pless2019pass}
Jacquelyn Pless and Arthur~A van Benthem.
\newblock Pass-through as a test for market power: An application to solar
  subsidies.
\newblock \emph{American Economic Journal: Applied Economics}, 11\penalty0
  (4):\penalty0 367--401, 2019.

\bibitem[Rebeggiani and Breuer(2017)]{rebeggiani2017neue}
Luca Rebeggiani and Markus Breuer.
\newblock Neue ordnung, neues gl{\"u}ck?
\newblock \emph{Wirtschaftsdienst}, 97\penalty0 (9):\penalty0 655--663, 2017.

\bibitem[Sallee(2011)]{sallee2011surprising}
James~M Sallee.
\newblock The surprising incidence of tax credits for the toyota prius.
\newblock \emph{American Economic Journal: Economic Policy}, 3\penalty0
  (2):\penalty0 189--219, 2011.

\bibitem[Seiler et~al.(2021)Seiler, Tuchman, and Yao]{seiler2021impact}
Stephan Seiler, Anna Tuchman, and Song Yao.
\newblock The impact of soda taxes: Pass-through, tax avoidance, and
  nutritional effects.
\newblock \emph{Journal of Marketing Research}, 58\penalty0 (1):\penalty0
  22--49, 2021.

\bibitem[Snowberg and Wolfers(2010)]{snowberg2010explaining}
Erik Snowberg and Justin Wolfers.
\newblock Explaining the favorite--long shot bias: Is it risk-love or
  misperceptions?
\newblock \emph{Journal of Political Economy}, 118\penalty0 (4):\penalty0
  723--746, 2010.

\bibitem[Stern(1987)]{stern1987effects}
Nicholas Stern.
\newblock The effects of taxation, price control and government contracts in
  oligopoly and monopolistic competition.
\newblock \emph{Journal of Public Economics}, 32\penalty0 (2):\penalty0
  133--158, 1987.

\bibitem[Sun and Abraham(2021)]{sun2021estimating}
Liyang Sun and Sarah Abraham.
\newblock Estimating dynamic treatment effects in event studies with
  heterogeneous treatment effects.
\newblock \emph{Journal of Econometrics}, 225\penalty0 (2):\penalty0 175--199,
  2021.

\bibitem[Taubinsky and Rees-Jones(2018)]{taubinsky2018attention}
Dmitry Taubinsky and Alex Rees-Jones.
\newblock Attention variation and welfare: Theory and evidence from a tax
  salience experiment.
\newblock \emph{The Review of Economic Studies}, 85\penalty0 (4):\penalty0
  2462--2496, 2018.

\bibitem[Thaler and Ziemba(1988)]{thaler1988anomalies}
Richard~H Thaler and William~T Ziemba.
\newblock Anomalies: Parimutuel betting markets: Racetracks and lotteries.
\newblock \emph{Journal of Economic perspectives}, 2\penalty0 (2):\penalty0
  161--174, 1988.

\bibitem[{Tipico}(2019)]{tipico2018social}
{Tipico}.
\newblock Corporate responsibility report 2018.
\newblock \emph{Valetta, Malta}, 2019.
\newblock \url{http://tipicolabs.de/magazin_epaper/CR_Report_Sep2019/\#6}. Last
  accessed August 29, 2022.

\bibitem[{US Supreme Court}(2017)]{Supreme2017}
{US Supreme Court}.
\newblock Corporate responsibility report 2018.
\newblock \emph{Washington, USA}, 2017.
\newblock \url{https://www.supremecourt.gov/opinions/17pdf/16-476_dbfi.pdf}.
  Last accessed September 29, 2022.

\bibitem[Varian(1980)]{varian1980model}
Hal~R Varian.
\newblock A model of sales.
\newblock \emph{American Economic Review}, 70\penalty0 (4):\penalty0 651--659,
  1980.

\bibitem[Weyl and Fabinger(2013)]{weyl2013pass}
E~Glen Weyl and Michal Fabinger.
\newblock Pass-through as an economic tool: Principles of incidence under
  imperfect competition.
\newblock \emph{Journal of Political Economy}, 121\penalty0 (3):\penalty0
  528--583, 2013.

\bibitem[Wicker and Soebbing(2012)]{wicker2012examining}
Pamel Wicker and Brian~P Soebbing.
\newblock Examining participation in sports betting in germany.
\newblock \emph{The Journal of Gambling Business and Economics}, 6\penalty0
  (3):\penalty0 17--33, 2012.

\bibitem[Wolinsky(1986)]{wolinsky1986true}
Asher Wolinsky.
\newblock True monopolistic competition as a result of imperfect information.
\newblock \emph{The Quarterly Journal of Economics}, 101\penalty0 (3):\penalty0
  493--511, 1986.

\bibitem[Woodland and Woodland(1994)]{woodland1994market}
Linda~M Woodland and Bill~M Woodland.
\newblock Market efficiency and the favorite-longshot bias: the baseball
  betting market.
\newblock \emph{Journal of Finance}, 49\penalty0 (1):\penalty0 269--279, 1994.

\bibitem[Zborowska et~al.(2012)Zborowska, Kingma, and
  Brear]{zborowska2012regulation}
N~Zborowska, SF~Kingma, and Ph~Brear.
\newblock Regulation and reputation: The gibraltar approach.
\newblock In \emph{The International Handbook of Internet Gambling}, pages
  84--100. Routledge, 2012.

\end{thebibliography}

\newpage

\appendix

\setcounter{page}{1}
\clearpage

\section{Institutional background - details}\label{sec:app_inst_details}

\subsection{Regulation and of online sports betting in Europe}


A coherent regulation of online sports betting has been a complicated task in nearly all European countries. Competencies regarding taxes and regulation of online gambling activities are divided between the European, federal, and state levels, sometimes even at the municipality level. This diffusion of competencies further widened the existing time lag between the regulation of online services and the development of (Internet) technology \citep{laffey2016patriot}. There is no sector-specific EU regulation, and EU member states are, in principle, autonomous in regulating online betting services as long as they respect the fundamental freedoms established under the Treaty on the Functioning of the European Union (TFEU) \citep{haberling2012internet}. In the early 2000s, the common response of European countries was either the prohibition of commercial online gambling, including sports betting, or introducing very restrictive monopoly models that only allowed one or few, primarily state-owned, operators to provide online betting services or lotteries \citep{haberling2012internet}.\footnote{The UK was an exception that updated its regulatory sports betting framework to the new digital era in 2001 \citep[][]{paton2002policy}. } However, these very restrictive regulatory models proved largely unsuccessful in limiting the provision and substantial growth of commercial cross-border online betting services in the EU \citep{ec2012faq}.

The main reason behind the limited success of the sports betting bans was that most online betting providers operate(d) in a grey market where they could meet national demands mostly independent of national regulations. Most international agencies hold sports betting concessions from and are domiciled in low-tax jurisdictions that are parts of the European Union, such as Malta and Gibraltar. This offshore operation has several advantages from their perspective: i) possibility to claim free movement of services in the EU to circumvent national and sub-federal regulations (Article 56 - Treaty on the Functioning of the EU, TFEU); ii) prevent national authorities to monitor and prosecute agencies; iii) minimizing the tax liability of their services, both regarding direct and indirect taxation; iv) added credibility by holding a European license \citep{zborowska2012regulation}. 

The provision of cross-border online gambling services is an economic activity that falls within the scope of the fundamental freedoms of the TFEU, which implies that the cross-border provision of gambling services authorized in one member state is legal \citep{ec2012press}. Member states can, however, impose restrictions justified by public interest objectives, such as consumer protection or fraud prevention. Importantly, these restrictions must be proportionate, non-discriminatory, applied consistently and transparently, and suitable to achieve the pursued objective \citep{haberling2012internet}. These strict principles triggered several judgments by the Court of Justice of the European Union (CJEU) and actions by the \citet[e.g.][]{ec2014} that pushed several member states to update their national online sports betting regulations.\footnote{A list of relevant judgments can be found here: \url{https://ec.europa.eu/growth/sectors/online-gambling/gambling-case-law_en}.} As a consequence of this pressure and the limited success of prohibition of online gambling and resulting missed tax revenue from private sports betting providers, the vast majority of European countries started to liberalize online sports betting. This process to a practical regulatory framework often took several years, as the (admittedly special) example of Germany shows, where an effective sports betting law was only passed in 2021.

\subsection{Regulation in Germany}

Like other European countries, Germany traditionally banned the private provision of sports betting.\footnote{Some exceptions existed for horse racing, where a selected group of private horse racing clubs and registered bookmakers were allowed to organize events and provide parimutuel betting \citep{englisch2013taxation}.} However, the state monopoly on sports betting and gambling had already been challenged in the course of German reunification by a debate on how to deal with four private sports betting licenses granted under GDR law \citep{rebeggiani2017neue}. 
The new Gaming Treaty in 2021 was the final step of unsuccessful attempts to adjust the German gambling regulatory framework to European law and the digital era.

In 2006, the German constitutional court ruled that the state monopoly in its traditional form was at variance with German constitutional law \citep{bverfg2006}, which triggered the first Glücksspielstaatsvertrag (interstate Gaming Treaty) in 2008 that imposed an even stricter state monopoly \citep{rebeggiani2017neue}. This initial gaming treaty was under immediate pressure from European authorities, legal scholars and the private sector. In September 2010, the European Court of Justice ruled that national monopolies are incompatible with European law if they do not serve the pursued public interest objective, requiring a more coherent gambling regulation in Germany. This ruling led to an intense debate among German states and to the first amendment of the German interstate Gaming Treaty ("Gl\"{u}cksspiel\"{a}nderungsstaatsvertrag") in July 2012. The first amendment of the interstate Gaming Treaty was accompanied by the introduction of the new "Federal Sports Bets Tax" (see section \ref{sec:sports_tax}).

The amendment stipulated that online sports betting remains prohibited but defined a 7-year "experimental phase", in which private operators could apply for a maximum of 20 state-issued concessions.\citep{englisch2013taxation}.\footnote{The northern state of Schleswig-Holstein (SH) initially opted out of the interstate Gaming Treaty and applied a more liberal gambling regulation, particularly concerning online poker and casinos. By the end of December, SH had granted more than 20 betting licenses. These liberalization efforts were only short-lasted, as in January 2013, the new government of SH reversed this decision and rejoined the interstate Gaming Treaty \citep{rebeggiani2017neue}.} However, several court rulings halted the process of granting betting concessions because the process was not in compliance with the law, erroneous, and non-transparent. \citep{rebeggiani2017neue}. 
The ratification of an additional amendment of the Treaty failed due to resistance from Schleswig-Holstein and North Rhine-Westphalia in March 2017. By the end of 2020, there was still no concession granted for legal sports betting in Germany \citep[for more details see:][]{hessen2021eval}. The deadlock in German online gaming regulation between 2006 and 2021 was further fueled by betting agencies that actively challenged the Gaming treaties in court, leading to rulings that eventually delayed an effective gambling regulation in Germany until 2021. 

In sum, gambling regulation in Germany was in abeyance until the new "Glücksspielstaatsvertrag" (interstate Gaming Treaty) was passed in 2021 \citep{haucap2021glucksspielstaatsvertrag}. The German regulatory framework for online gambling and betting was characterized by a division of tax and regulatory competencies between the federal and the state level. This division of competencies led to opaque and ineffective legislation, which was at variance with European law \citep{eugh2010} and even German constitutional law. In this regulatory environment, betting agencies, domiciled in liberal (primarily European) jurisdictions outside Germany where their services were admissible, could serve the German demand for sports betting in a semi-legal and more or less unrestricted grey market tolerated by German authorities\citep{rebeggiani2017neue}.\footnote{ A statement in the 2009 annual report of the betting provider \textit{sporting bet} summarizes the industry view (and strategy) on the German gambling regulation quite fittingly: "Furthermore, enforcement action against operators where they actively target German residents (including through local marketing) has been curbed due to the lack of clarity in the legal position. In our view, therefore, legislation that was intended to almost comprehensively block online gambling has had only limited effect and the general inability of the German government to block online gambling websites, coupled with the questionable legality of the legislation, has led to a continued supply of online gambling services, and an absence of extra-territorial enforcement against the activity."}

\section{The difference between posted and effective betting prices}\label{sec:sport_app_tau}

We defined $\tau$ as the difference between the tax-inclusive and posted betting prices (i.e., the shrouded tax surcharge):
\begin{equation}
\tau=p_{ib}-\Tilde{p_{ib}}=\dfrac{1}{\Tilde{\theta}_{ib}}-\dfrac{1}{\theta_{ib}}
\end{equation} 

For the three different policies that betting agencies set, $\tau$ can be derived as follows:
\begin{itemize}
    \item[i)] \textbf{No further deductions}: $\tau=0$. The derivation of $\tau$ is trivial as: $p_{ib}=\Tilde{p_{ib}}$
    \item[ii)] \textbf{Deducting tax surcharge from winnings}:

    Using that $\theta_{ib} = \sum_{s=1}^n \dfrac{1}{r_{i,s,b}}$ and $r_{i,s,b}=\Tilde{r}_{isb}(1-t)$ we can rewrite equation (4) to:
    \begin{equation*}
        \tau=\dfrac{1}{\Tilde{\theta}_{ib}}-\dfrac{1}{\sum_{s=1}^n\dfrac{1}{(1-t)\Tilde{r}_{isb}}}=\dfrac{1}{\Tilde{\theta}_{ib}}-\dfrac{1-t}{\Tilde{\theta}_{ib}}=\dfrac{t}{\Tilde{\theta}_{ib}}
    \end{equation*}
    \item[iii)] \textbf{Deducting tax surcharge from wager}:  Noting that $r_{i,s,b}=\dfrac{\Tilde{r}_{isb}}{(1+t)}$ and $\theta_{ib} = \sum_{s=1}^n \dfrac{1}{r_{i,s,b}}$ we can rewrite equation (4) to:
    \begin{equation*}
        \tau=\dfrac{1}{\Tilde{\theta}_{ib}}-\dfrac{1}{\sum_{s=1}^n\dfrac{(1+t)}{\Tilde{r}_{isb}}}=\left(1-\dfrac{1}{(1+t)}\right)\dfrac{1}{\Tilde{\theta}_{ib}}=\dfrac{t}{(1+t)\Tilde{\theta}_{ib}}
    \end{equation*}
\end{itemize}

\newpage
\section{Appendix figures}

\begin{figure}[!ht]

    \centering
      \caption{Distribution of consumer betting prices}
    
    \begin{subfigure}[t]{0.49\textwidth}
        \centering
        \caption{ All sports }
        \includegraphics[width=\linewidth]{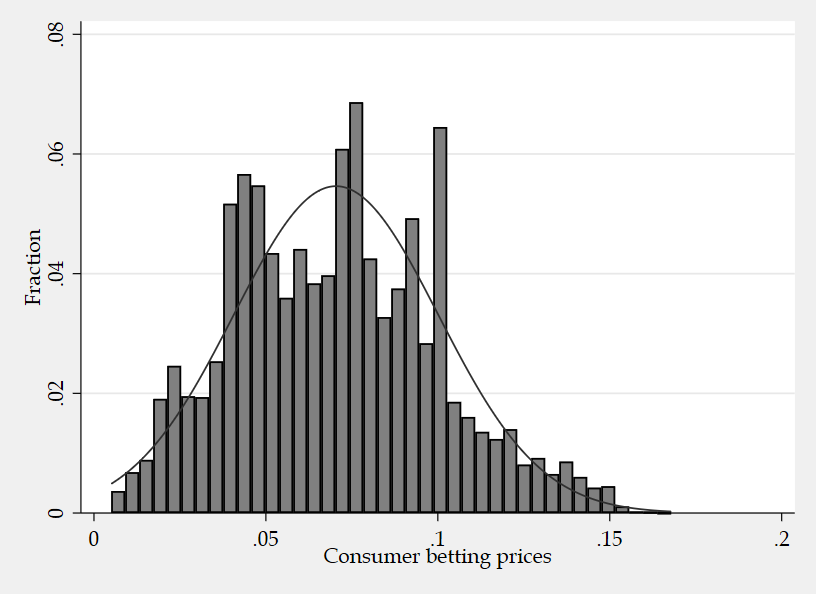} 
         \label{fig:timing1}
    \end{subfigure}
    \hfill
        \begin{subfigure}[t]{0.49\textwidth}
        \centering
        \caption{Soccer games}
        \includegraphics[width=\linewidth]{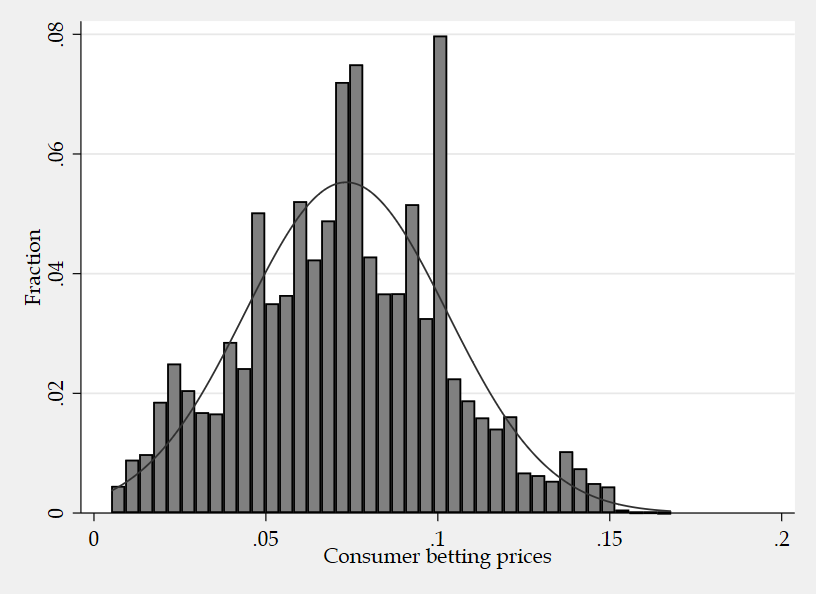} 
        \label{fig:timing1}
    \end{subfigure}

     		\begin{minipage}{13.8cm}  \footnotesize  \emph{Notes:} 
This figures shows the distribution of betting prices and the corresponding normal distribution with the mean and standard deviation of the sample. Panel a) considers all sports, Panel (b) only soccer games. The prices considered for the graphs are trimmed by quarter at the quarterly 1- and 99-percentiles, considering all events and all agencies. The number of bins is fixed to 40 and the first bin starts at 0.005, which includes also the smallest price. Width per bin is equal to 0.0040721.
		\end{minipage} 
		\label{fig:hist_prices}
  
\end{figure} 

\begin{figure}[!ht]

    \centering
      \caption{Avg. tax-inclusive consumer betting prices of German and Non-German betting agencies over time with data for all years}
    
    \begin{subfigure}[t]{0.48\textwidth}
        \centering
        \caption{ All sports }
        \includegraphics[width=\linewidth]{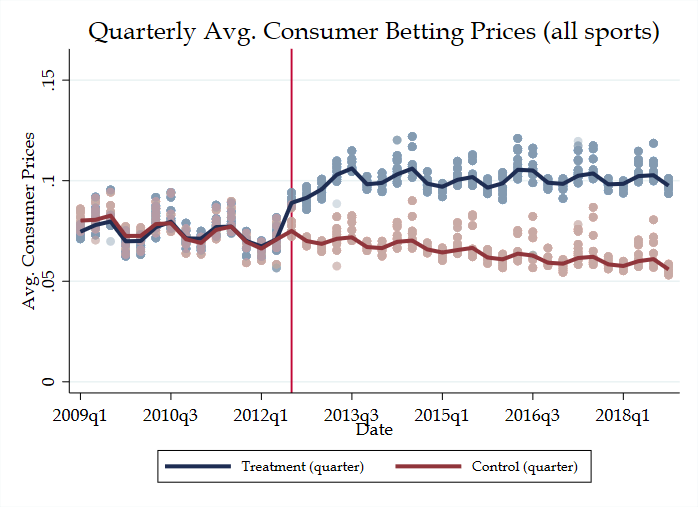} 
         \label{fig:timing1}
    \end{subfigure}
    \hfill
        \begin{subfigure}[t]{0.48\textwidth}
        \centering
        \caption{ Only soccer games}
        \includegraphics[width=\linewidth]{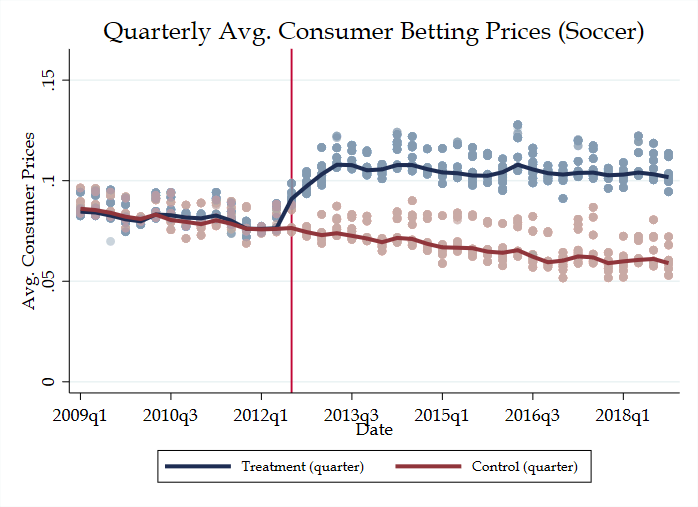} 
        \label{fig:timing1}
    \end{subfigure}

    \vspace{0.5cm}
    \begin{subfigure}[t]{0.48\textwidth}
        \centering
         \caption{ All sports - excluding cross leagues }
        \includegraphics[width=\linewidth]{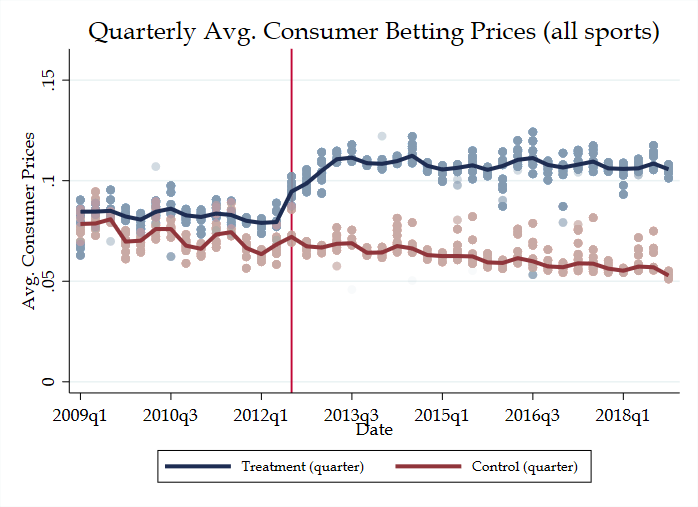} 
        \label{fig:timing2}
    \end{subfigure}
    \hfill
    \begin{subfigure}[t]{0.48\textwidth}
        \centering
        \caption{ Only soccer games - excluding cross leagues}
        \includegraphics[width=\linewidth]{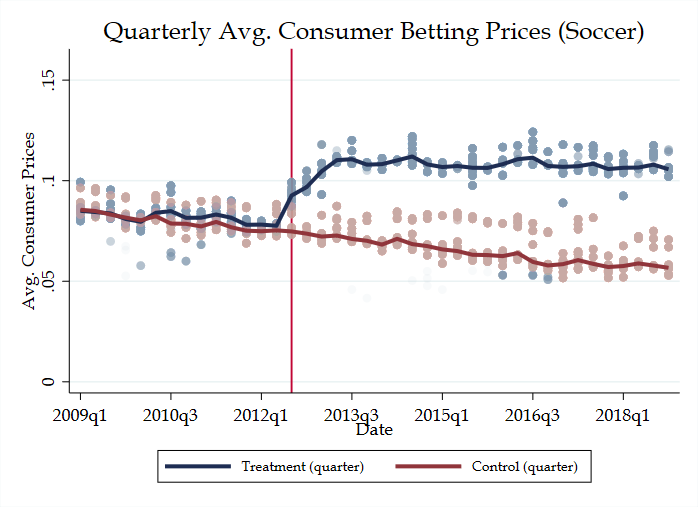} 
         \label{fig:timing2}
    \end{subfigure}
     		\begin{minipage}{15.8cm}  \footnotesize  \emph{Notes:} 
This figures shows the average quarterly and weekly betting prices over time for agencies that target the German market (Treatment) and for agencies that do not target the German market (Control). Only agencies for which odds data in all years is available are considered, N=30 (Treatment=9/Control=21). The circles signify average weekly betting prices in each group pooled and aligned by the respective quarter. The solid line illustrates the average quarterly betting prices. The odds considered for the graphs are trimmed by quarter at the quarterly 1- and 99-percentiles, considering all events and all agencies. Panel (a) and (c) consider all sports included in the data set, while Panel (b) and (d) only consider soccer events. Panel (c) and (d) excludes German leagues for the control group and Non-German leagues for the treatment group.
		\end{minipage} 
		\label{fig:avgprices_compl}
  
\end{figure} 

\begin{figure}

    \centering
      \caption{Avg. tax-inclusive consumer betting prices and estimated tax effects over time - treated agencies vs. agencies with a foreign domain}

        \begin{subfigure}[t]{0.48\textwidth}
        \centering
        \caption{Soccer games}
        \includegraphics[width=\linewidth]{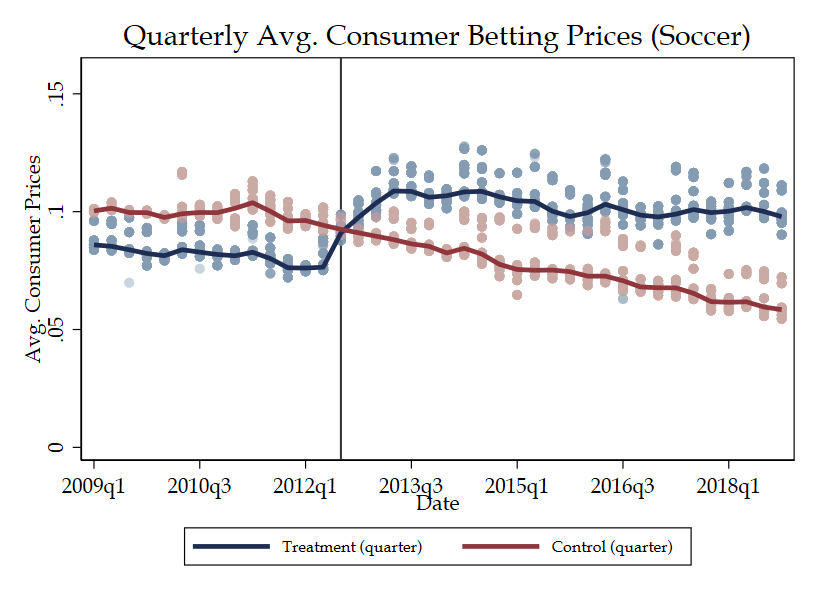} 
        \label{fig:timing1}
    \end{subfigure}
    \hfill
    \begin{subfigure}[t]{0.48\textwidth}
        \centering
        \caption{Soccer games - excluding cross leagues}
        \includegraphics[width=\linewidth]{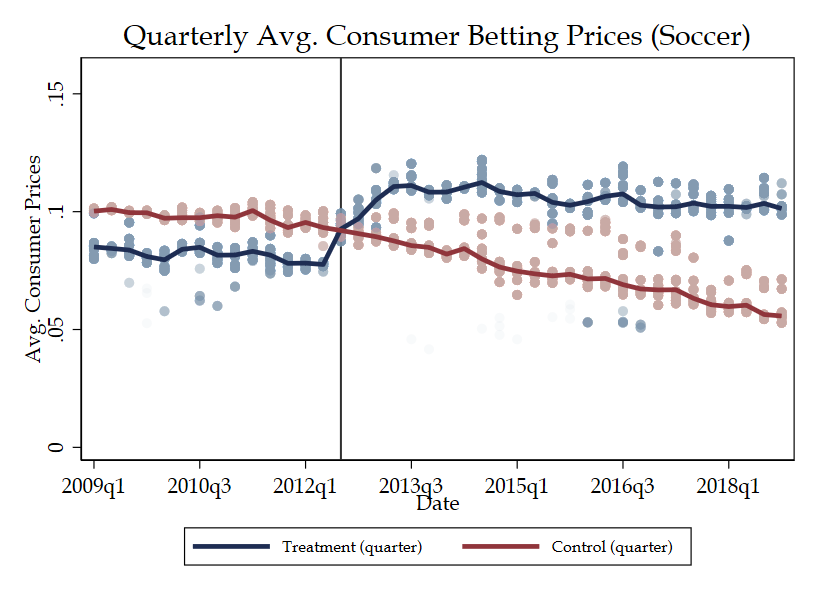} 
         \label{fig:timing2}
    \end{subfigure}
    
        \vspace{0.5cm}
    \begin{subfigure}[t]{0.48\textwidth}
        \centering
         \caption{ Soccer games }
        \includegraphics[width=\linewidth]{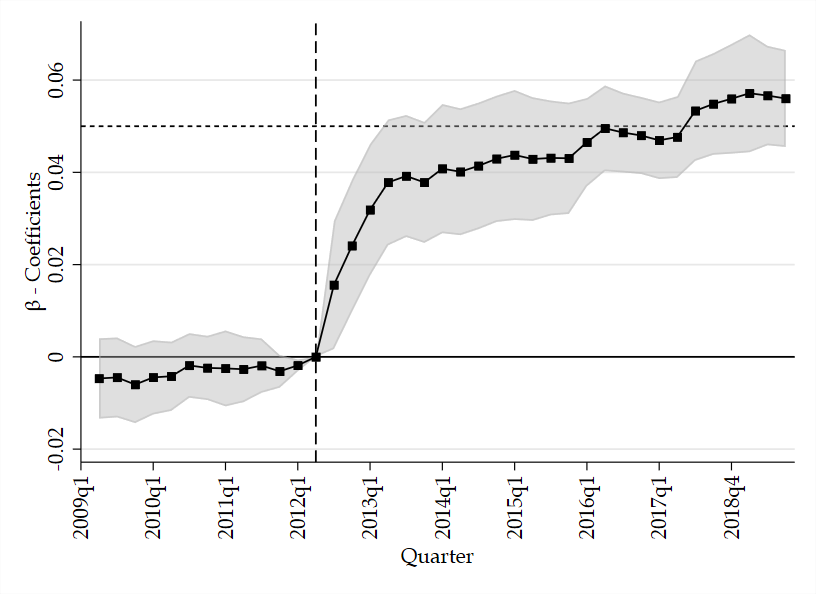} 
        \label{fig:timing2}
    \end{subfigure}
    \hfill
    \begin{subfigure}[t]{0.48\textwidth}
        \centering
        \caption{ Soccer games - excluding cross leagues}
        \includegraphics[width=\linewidth]{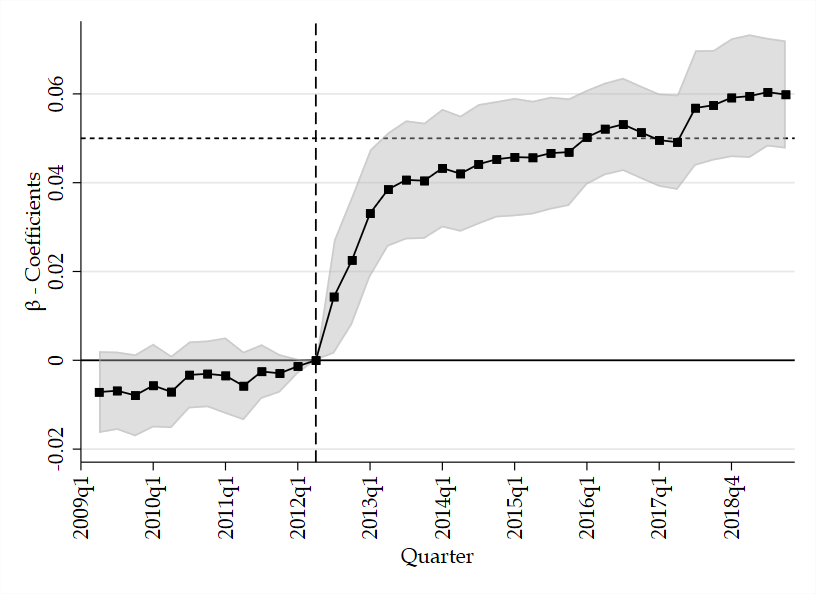} 
         \label{fig:timing2}
    \end{subfigure}
     		\begin{minipage}{15.8cm}  \footnotesize  \emph{Notes:} 
This figures shows the average quarterly and weekly betting prices for soccer events (Panel a and b) and corresponding estimated treatment effects (Panel c and d) of the German sports-betting tax on consumers’ betting prices over time. Only agencies that target the German market (Treatment) and agencies that have a foreign domain (Control) are considered. Unique agencies: N=32 (Treatment=10/Control=22). The vertical lines illustrate the introduction of the tax reform. For Panel a and b, the circles signify average weekly betting prices in each group, pooled and aligned by the respective quarter. The solid line illustrates the average quarterly betting prices. The odds considered for the graphs in Panel a) and b) are trimmed by quarter at the quarterly 1- and 99-percentiles, considering all events and all agencies. For Panel c) and d), the shaded area shows the 95\% confidence intervals of the estimated effects and the horizontal dotted line signifies a full-pass through of the tax on consumers.
		\end{minipage} 
		\label{fig:avgprices_all_robust}
  
\end{figure}

\begin{figure}[!ht]

    \centering
      \caption{Effect of tax on consumer betting prices over time - for all sports/all agencies and for agencies where data is available for all year}
          \begin{subfigure}[t]{0.48\textwidth}
        \centering
        \caption{ All sports }
        \includegraphics[width=\linewidth]{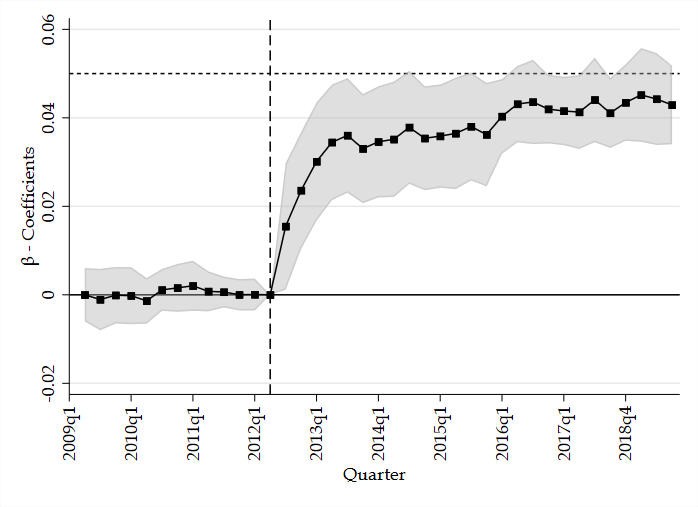} 
         \label{fig:timing1}
    \end{subfigure}
    \hfill
    \begin{subfigure}[t]{0.48\textwidth}
        \centering
         \caption{ All sports - excluding cross leagues }
        \includegraphics[width=\linewidth]{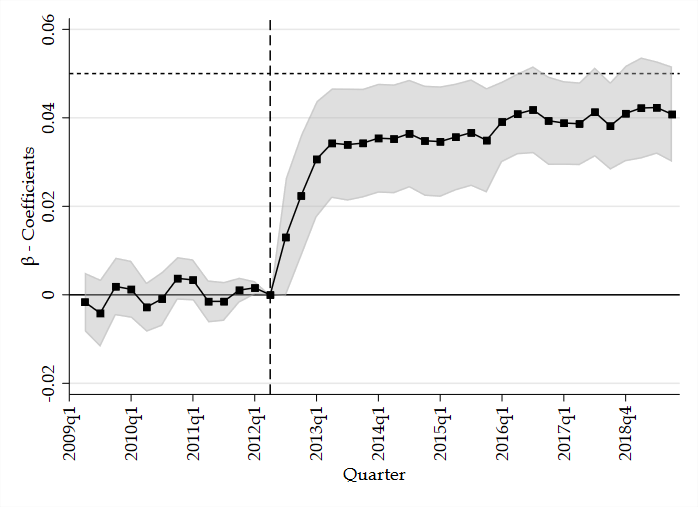} 
        \label{fig:timing2}
    \end{subfigure}
          \vspace{0.5cm}

     \begin{subfigure}[t]{0.48\textwidth}
        \centering
        \caption{Soccer games}
        \includegraphics[width=\linewidth]{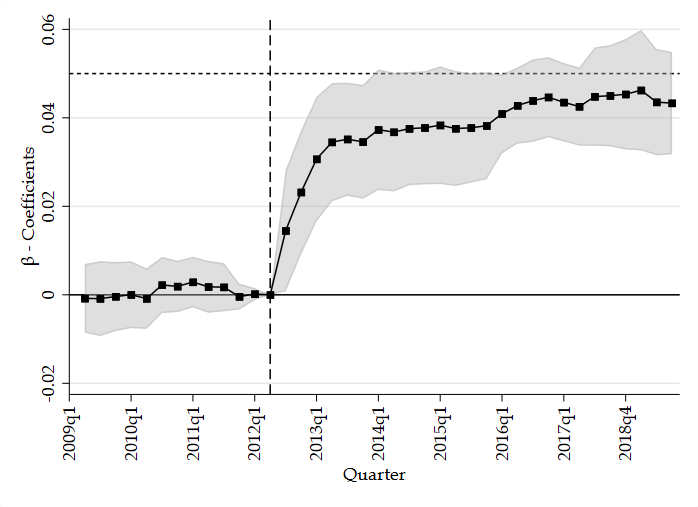} 
        \label{fig:timing1}
    \end{subfigure}
    \hfill
    \begin{subfigure}[t]{0.48\textwidth}
        \centering
        \caption{Soccer games - excluding cross leagues}
        \includegraphics[width=\linewidth]{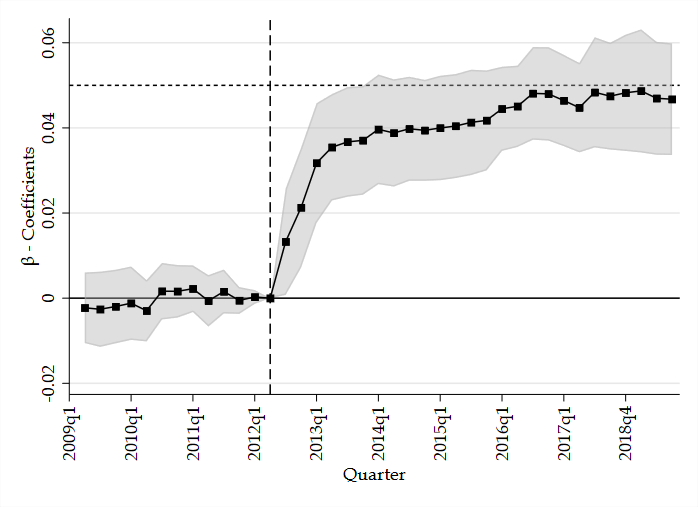} 
         \label{fig:timing2}
    \end{subfigure}
      \vspace{0.5cm}
  
    \begin{subfigure}[t]{0.48\textwidth}
        \centering
        \caption{ All sports }
        \includegraphics[width=\linewidth]{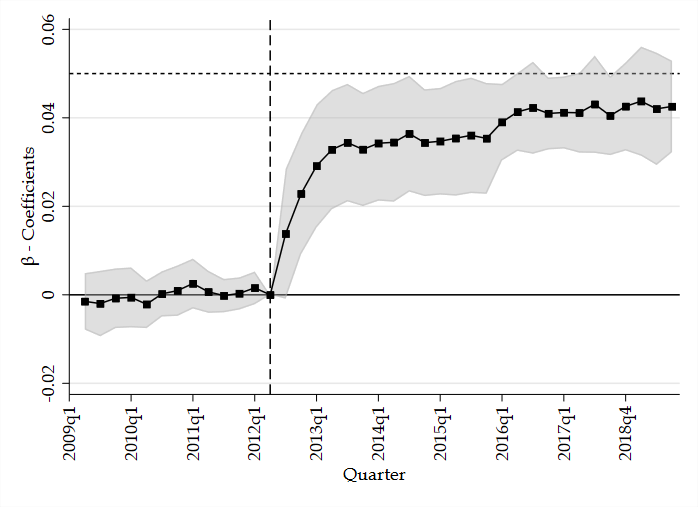} 
         \label{fig:timing1}
    \end{subfigure}
    \hfill
    \begin{subfigure}[t]{0.48\textwidth}
        \centering
         \caption{ All sports - excluding cross leagues }
        \includegraphics[width=\linewidth]{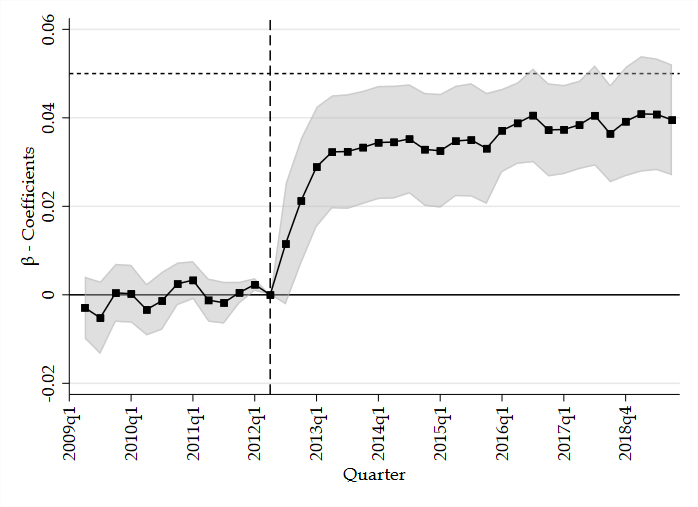} 
        \label{fig:timing2}
    \end{subfigure}

     		\begin{minipage}{15.8cm}  \footnotesize  \emph{Notes:} 
This figures shows the estimated effects and respective 95-\% confidence intervals of the German sports-betting tax on consumers' betting prices over time. Only agencies for which odds data in all years is available are considered, N=30 (Treatment=9/Control=21). Panel (a) and (b) consider all sports included in the data set, while Panel (c) and (d) only consider soccer events. Panel (b) and (d) excludes German leagues for the control group and Non-German leagues for the treatment group. The vertical dashed line illustrates the introduction of the tax reform. The horizontal dotted line signifies a full-pass through of the tax on consumers.
		\end{minipage} 
		\label{fig:coef robust}
  
\end{figure} 

\begin{figure}[!ht]

    \centering
      \caption{Pre-trends and tax effects on consumer betting prices over time - all agencies - trimmed sample}
        \begin{subfigure}[t]{0.48\textwidth}
        \centering
        \caption{ Soccer games}
        \includegraphics[width=\linewidth]{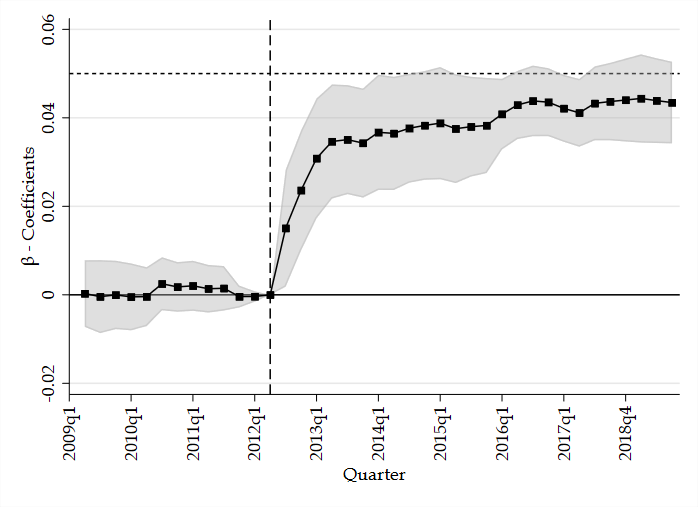} 
        \label{fig:timing1}
    \end{subfigure}
    \hfill
    \begin{subfigure}[t]{0.48\textwidth}
        \centering
        \caption{ Soccer games - excluding cross leagues}
        \includegraphics[width=\linewidth]{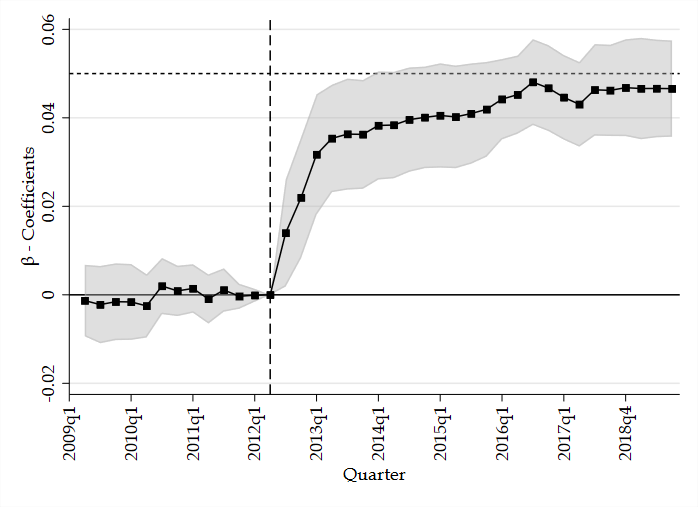} 
         \label{fig:timing2}
    \end{subfigure}

    \vspace{0.5cm}
\begin{subfigure}[t]{0.48\textwidth}
        \centering
        \caption{ All sports }
        \includegraphics[width=\linewidth]{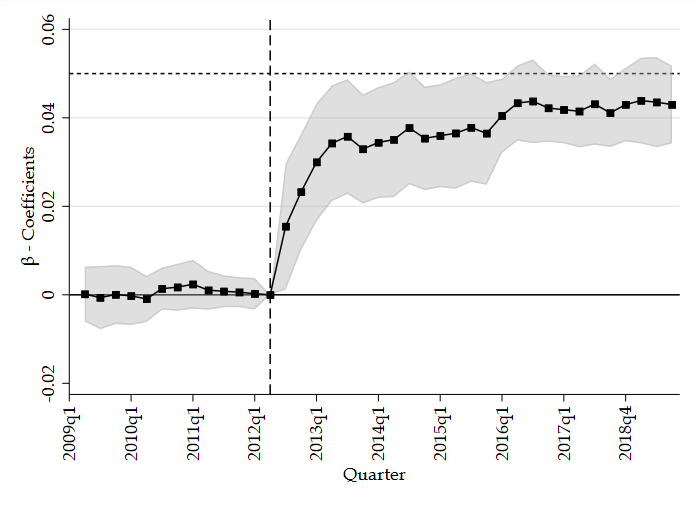} 
         \label{fig:timing1}
    \end{subfigure}
    \hfill
    \begin{subfigure}[t]{0.48\textwidth}
        \centering
         \caption{ All sports - excluding cross leagues }
        \includegraphics[width=\linewidth]{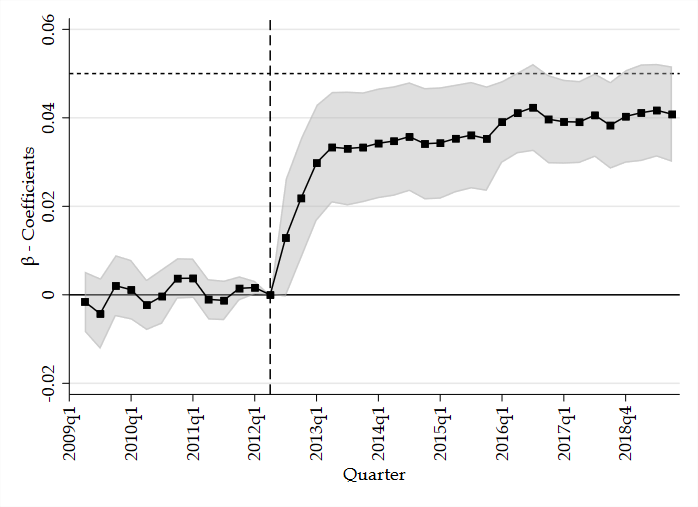} 
        \label{fig:timing2}
    \end{subfigure}
     		\begin{minipage}{13.8cm}  \footnotesize  \emph{Notes:} 
This figures shows the estimated effects and respective 95-\% confidence intervals of the German sports-betting tax on consumers' betting prices over time. All agencies are considered, N=65 (Treatment=10/Control=55). The odds considered for the graphs are trimmed by quarter at the quarterly 1- and 99-percentiles, considering all events and all agencies. Panel (a) and (b) consider soccer matches, while Panel (c) and (d) consider all sports events. Panel (b) and (d) excludes German leagues for the control group and Non-German leagues for the treatment group. The vertical dashed line illustrates the introduction of the tax reform. The horizontal dotted line signifies a full-pass through of the tax on consumers.
		\end{minipage} 
		\label{fig:coef_tr}
  
\end{figure} 

\begin{figure}[!ht]

    \centering
      \caption{Pre-trends and tax effects on consumer betting prices over time - agencies where data is available for all year}
          \begin{subfigure}[t]{0.48\textwidth}
        \centering
        \caption{ All sports }
        \includegraphics[width=\linewidth]{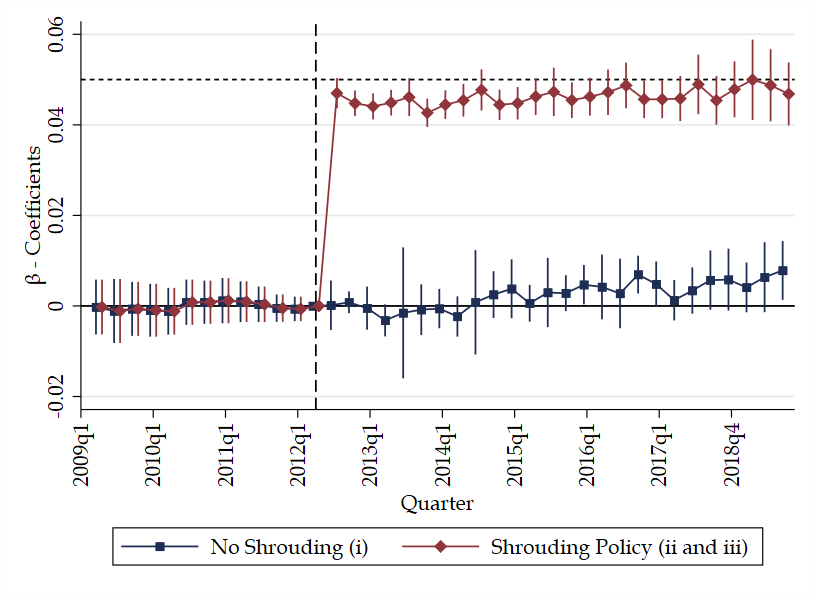} 
         \label{fig:timing1}
    \end{subfigure}
    \hfill
    \begin{subfigure}[t]{0.48\textwidth}
        \centering
         \caption{ All sports - excluding cross leagues }
        \includegraphics[width=\linewidth]{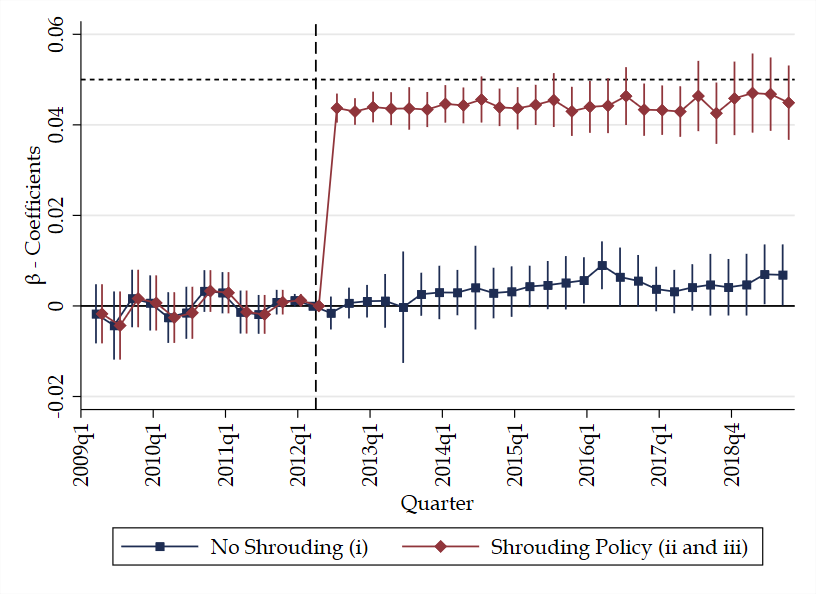} 
        \label{fig:timing2}
    \end{subfigure}
          \vspace{0.5cm}

     \begin{subfigure}[t]{0.48\textwidth}
        \centering
        \caption{ Soccer games}
        \includegraphics[width=\linewidth]{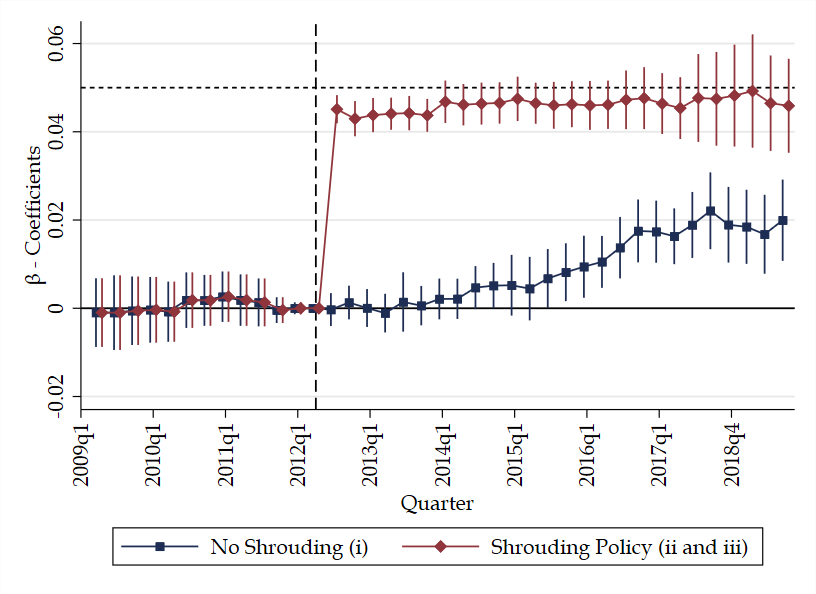} 
        \label{fig:timing1}
    \end{subfigure}
    \hfill
    \begin{subfigure}[t]{0.48\textwidth}
        \centering
        \caption{ Soccer games - excluding cross leagues}
        \includegraphics[width=\linewidth]{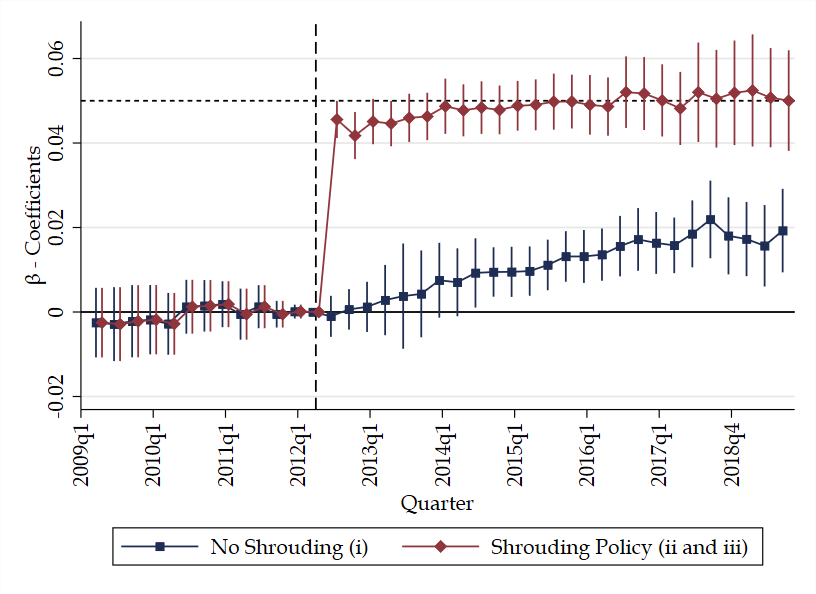} 
         \label{fig:timing2}
    \end{subfigure}
      \vspace{0.5cm}
  
    \begin{subfigure}[t]{0.48\textwidth}
        \centering
        \caption{ All sports }
        \includegraphics[width=\linewidth]{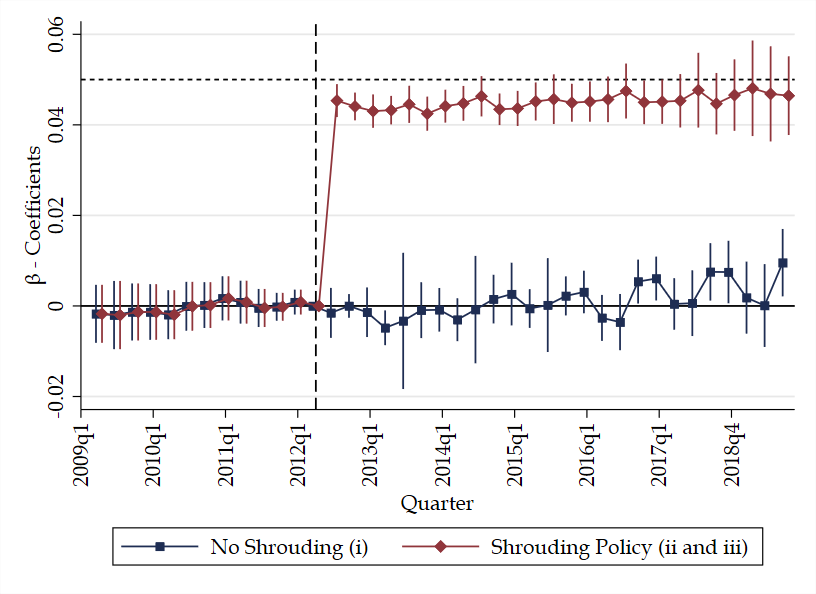} 
         \label{fig:timing1}
    \end{subfigure}
    \hfill
    \begin{subfigure}[t]{0.48\textwidth}
        \centering
         \caption{ All sports - excluding cross leagues }
        \includegraphics[width=\linewidth]{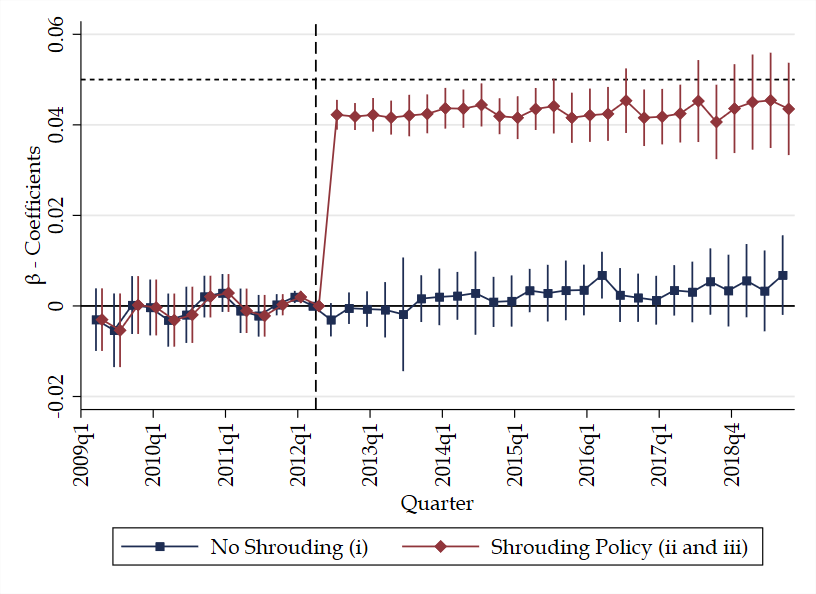} 
        \label{fig:timing2}
    \end{subfigure}

     		\begin{minipage}{15.8cm}  \footnotesize  \emph{Notes:} 
This figures shows the estimated effects and respective 95-\% confidence intervals of the German sports-betting tax on consumers' betting prices over time. Only agencies for which odds data in all years is available are considered, N=30 (Treatment=9/Control=21). Panel (a) and (b) consider all sports included in the data set, while Panel (c) and (d) only consider soccer events. Panel (b) and (d) excludes German leagues for the control group and Non-German leagues for the treatment group. The vertical dashed line illustrates the introduction of the tax reform. The horizontal dotted line signifies a full-pass through of the tax on consumers.
		\end{minipage} 
		\label{fig:coef_het_robust}
  
\end{figure}

\clearpage
\newpage

\section{Appendix tables}

\begin{table}[!htbp]\centering
\caption{List and of included competitions and number of unique matches}
\label{sum_policy}
\begin{threeparttable}

\begin{tabular}{ l|c|c } 

 Betting agency & Type of "shrouding" policy & Implementation month  \\  \hline
 Bet365 & Deduction of winnings & November 2012  \\ 
 Bet-at-home & Deduction of wager & August 2012  \\
 Betfair Sportsbook & No shrouding & Not applicable \\ 
 Betsson & Deduction of wager & February 2016 \\ 
 Betway & Deduction of winnings & March 2013 \\ 
  Bwin & Deduction of winnings & August 2012  \\ 
 Interwetten & Deduction of winnings & July 2012  \\ 
 Sportingbet & Deduction of winnings & October 2012   \\ 
 Tipico & No shrouding & Not applicable \\ 
 Unibet & Deduction of winnings & December 2012 \\ 

   \hline

\end{tabular}
\begin{tablenotes}[flushleft]
\item \emph{Notes:} The table lists all treated agencies and shows the respective shrouding policy and its implementation date. 
\end{tablenotes}
\end{threeparttable}
\end{table}

\begin{table}[ !ht]\centering
\def\sym#1{\ifmmode^{#1}\else\(^{#1}\)\fi}
\caption{Avg. effect of tax on consumer betting prices - all sports}
\label{diffindiff_allsports}
\scalebox{0.7}{
\begin{threeparttable}
\begin{tabular}{l*{7}{c}}
\hline\hline

                 &\multicolumn{3}{c}{All Leagues}     & &\multicolumn{3}{c}{Excl. "cross" leagues}     \\    \cmidrule{2-4} \cmidrule{6-8}
                    &\multicolumn{1}{c}{(1)}      &\multicolumn{1}{c}{(2)}      &\multicolumn{1}{c}{(3)}   &   &\multicolumn{1}{c}{(4)}      &\multicolumn{1}{c}{(5)}      &\multicolumn{1}{c}{(6)}      \\
\midrule
\textbf{Panel A: All agencies} &&&&&&& \\
[1em]
Tax effect on prices             &               0.037\sym{***}&               0.037\sym{***}&               0.038\sym{***}&  &             0.036\sym{***}&               0.035\sym{***}&               0.037\sym{***}\\
                    &             (0.005)         &             (0.005)         &             (0.005)         &   &          (0.005)         &             (0.005)         &             (0.005)         \\

\hline
Observations        &             3,057,547         &             3,057,547         &             3,057,547         &  &           2,027,511         &             2,027,511         &             2,027,511         \\
\(R^{2}\)           &               0.617         &               0.730         &               0.818         &    &           0.649         &               0.735         &               0.831         \\
\midrule

\textbf{Panel B: Compl. agencies} &&&&&&&\\
[1em]
Tax effect on prices            &               0.037\sym{***}&               0.036\sym{***}&               0.038\sym{***}&   &            0.036\sym{***}&               0.035\sym{***}&               0.036\sym{***}\\
                    &             (0.005)         &             (0.005)         &             (0.005)      &   &             (0.005)         &             (0.005)         &             (0.005)         \\

\hline
Observations        &             2,027,861         &             2,027,861         &             2,027,861         &  &           1,223,098         &             1,223,098         &             1,223,098         \\
\(R^{2}\)           &               0.468         &               0.656         &               0.760         &   &            0.445         &               0.614         &               0.742         \\

\midrule

Constant            &                 Yes         &                 Yes         &                 Yes       &  &                 Yes         &                 Yes         &                 Yes         \\

Time FE             &                 Yes         &                 Yes         &                 Yes    &     &                 Yes         &                 Yes         &                 Yes         \\

Agency FE           &                 Yes         &                 Yes         &                  No       &  &                 Yes         &                 Yes         &                  No         \\

League FE           &                  No         &                 Yes         &                  No     &    &                  No         &                 Yes         &                  No         \\

League-agency FE    &                  No         &                  No         &                 Yes      &   &                  No         &                  No         &                 Yes         \\
\hline \hline
\end{tabular}
\begin{tablenotes}[flushleft]
\item \emph{Notes:} This table reports the estimated average sports betting tax effects on consumer betting prices of all sports events in the sample according to Eq. \ref{eq:tax_avg_effect}, comparing changes in prices before and after the tax reform between German (treatment group) and Non-German agencies (control group). Each agency-event combination (unique observation) is equally weighted. Columns 1-3 consider all leagues, while Columns 4-6 exclude events in German leagues for the control group and Non-German leagues for the treatment group. All estimations include time-fixed effects, and the columns differ in the included agency and league fixed effects. Estimations in Panel A consider observations from all agencies, and Panel B considers agencies with observations from all years only. Robust standard errors clustered at the agency level are reported in the brackets. * denotes significance at the 10-\%, ** at the 5-\% and *** at the 1-\% level.
\end{tablenotes}
\end{threeparttable}
}
\end{table}

\begin{table}[ !ht]\centering
\def\sym#1{\ifmmode^{#1}\else\(^{#1}\)\fi}
\caption{Avg. effect of tax on consumer betting prices - control group with foreign country-specific domain}
\label{diffindiff_robust}
\scalebox{0.75}{
\begin{threeparttable}
\begin{tabular}{l*{7}{c}}
\hline\hline

                 &\multicolumn{3}{c}{All Leagues}     & &\multicolumn{3}{c}{Excl. "cross" leagues}     \\    \cmidrule{2-4} \cmidrule{6-8}
                    &\multicolumn{1}{c}{(1)}      &\multicolumn{1}{c}{(2)}      &\multicolumn{1}{c}{(3)}   &   &\multicolumn{1}{c}{(4)}      &\multicolumn{1}{c}{(5)}      &\multicolumn{1}{c}{(6)}      \\

\midrule

\textbf{Panel A: All agencies} &&&&&&& \\
[1em]
Tax effect on prices            &               0.047\sym{***}&               0.047\sym{***}&               0.047\sym{***}& &              0.051\sym{***}&               0.051\sym{***}&               0.051\sym{***}\\
                    &             (0.005)         &             (0.005)         &             (0.005)      &   &             (0.005)         &             (0.005)         &             (0.005)         \\

\hline
Observations        &              983,713         &              983,713         &              983,713     &    &              550,380         &              550380         &              550380         \\
\(R^{2}\)           &               0.579         &               0.728         &               0.796      &   &               0.589         &               0.741         &               0.810         \\
\midrule
\textbf{Panel B: Compl. agencies} &&&&&&& \\
[1em]
Tax effect on prices            &               0.049\sym{***}&               0.050\sym{***}&               0.050\sym{***}&  &             0.054\sym{***}&               0.054\sym{***}&               0.054\sym{***}\\
                    &             (0.004)         &             (0.004)         &             (0.004)       &  &             (0.005)         &             (0.005)         &             (0.005)         \\

\hline
Observations        &              644,326         &              644,326         &              644,326       &  &              288,854         &              288,854         &              288,854         \\
\(R^{2}\)           &               0.554         &               0.711         &               0.746         &   &            0.582         &               0.743         &               0.761         \\

\midrule

Constant            &                 Yes         &                 Yes         &                 Yes       &  &                 Yes         &                 Yes         &                 Yes         \\

Time FE             &                 Yes         &                 Yes         &                 Yes    &     &                 Yes         &                 Yes         &                 Yes         \\

Agency FE           &                 Yes         &                 Yes         &                  No       &  &                 Yes         &                 Yes         &                  No         \\

League FE           &                  No         &                 Yes         &                  No     &    &                  No         &                 Yes         &                  No         \\

League-agency FE    &                  No         &                  No         &                 Yes      &   &                  No         &                  No         &                 Yes         \\
\hline \hline

\end{tabular}
\begin{tablenotes}[flushleft]
\item \emph{Notes:} This table reports the estimated average sports betting tax effects on consumer betting prices of soccer events according to Eq. \ref{eq:tax_avg_effect}, comparing changes in prices before and after the tax reform between German agencies (treatment group) and agencies with a foreign country-specific domain (control group). Each agency-event combination (unique observation) is equally weighted. Columns 1-3 consider all leagues, while Columns 4-6 exclude events in German leagues for the control group and Non-German leagues for the treatment group. All estimations include time-fixed effects, and the columns differ in the included agency and league fixed effects. Estimations in Panel A consider observations from all agencies, and Panel B considers agencies with observations from all years only. Robust standard errors clustered at the agency level are reported in the brackets. * denotes significance at the 10-\%, ** at the 5-\% and *** at the 1-\% level.
\end{tablenotes}
\end{threeparttable}
}
\end{table}

\begin{table}[ !ht]\centering
\def\sym#1{\ifmmode^{#1}\else\(^{#1}\)\fi}
\caption{Avg. effect of tax on consumer betting prices - soccer - compl. agencies}
    \label{diffindiff_het_compl}
    \scalebox{0.75}{
\begin{threeparttable}
\begin{tabular}{l*{7}{c}}
\hline\hline

                 &\multicolumn{3}{c}{All Leagues}     & &\multicolumn{3}{c}{Excl. "cross" leagues}     \\    \cmidrule{2-4} \cmidrule{6-8}
                                  &\multicolumn{2}{c}{Subsamples}     & Interact. & &\multicolumn{2}{c}{Subsamples} &  Interact.  \\    \cmidrule{2-3} \cmidrule{6-7}
                    &\multicolumn{1}{c}{No shroud}      &\multicolumn{1}{c}{Shrouding}      &   &   &\multicolumn{1}{c}{No shroud}      &\multicolumn{1}{c}{Shrouding}      &   \\
                    &\multicolumn{1}{c}{(1)}      &\multicolumn{1}{c}{(2)}      &\multicolumn{1}{c}{(3)}   &   &\multicolumn{1}{c}{(4)}      &\multicolumn{1}{c}{(5)}      &\multicolumn{1}{c}{(6)}      \\
\midrule

Tax effect on prices           &               0.008\sym{**} &               0.041\sym{***}&                             &       &        0.008\sym{*}  &               0.045\sym{***}&                             \\
                    &             (0.003)         &             (0.003)         &                      &       &             (0.003)         &             (0.003)         &                             \\
[1em]
$T_{i,m}$           &                             &                             &               0.046\sym{***}& &                            &                             &               0.050\sym{***}\\
                    &                             &                             &             (0.003)         &  &                           &                             &             (0.004)         \\
[1em]
$T_{i,m}$ x $no Shroud_{i,m}$          &                             &                             &              -0.041\sym{***}&    &                         &                             &              -0.043\sym{***}\\
                    &                             &                             &             (0.002)         &    &                         &                             &             (0.002)         \\
[1em]
Constant            &               0.085\sym{***}&               0.084\sym{***}&               0.084\sym{***}&    &           0.092\sym{***}&               0.092\sym{***}&               0.092\sym{***}\\
                    &             (0.004)         &             (0.003)         &             (0.003)         &  &           (0.003)         &             (0.003)         &             (0.003)         \\
\midrule
Time FE             &                 Yes         &                 Yes         &                 Yes         &  &               Yes         &                 Yes         &                 Yes         \\
Agency FE           &                 Yes         &                 Yes         &                 Yes         &  &               Yes         &                 Yes         &                 Yes         \\
League FE           &                 Yes         &                 Yes         &                 Yes         &   &              Yes         &                 Yes         &                 Yes         \\
\hline
Observations        &              927,168         &             1,236,292         &             1,276,400         &  &            698,912         &              768,145         &              778,207         \\
\(R^{2}\)           &               0.577         &               0.713         &               0.741         &   &            0.579         &               0.668         &               0.681         \\
\hline\hline
\end{tabular}
\begin{tablenotes}[flushleft]
\item \emph{Notes:} This table reports the estimated average sports betting tax effects on consumer betting prices for soccer events according to Eq. \ref{eq:tax_avg_effect}, comparing changes in prices before and after the tax reform between German and Non-German agencies. Columns 3 and 6 only consider German agencies that do not shroud taxes. Likewise, columns 1 and 4 (2 and 5) only consider agencies that deduct a 5\% tax surcharge from the advertised winnings (from the advertised wager). Each agency-event combination (unique observation) is equally weighted. Columns 1-3 consider all leagues, while Columns 4-6 exclude events in German leagues for the control group and Non-German leagues for the treatment group. Only agencies for which odds data in all years is available are considered. All estimations include time-, agency- and league-fixed effects. Estimations in Panel A consider observations from soccer matches, and Panel B considers observations from all sports. Robust standard errors clustered at the agency level are reported in the brackets. * denotes significance at the 10-\%, ** at the 5-\%, and *** at the 1-\% level.
\end{tablenotes}
\end{threeparttable}
}
\end{table}

\begin{table}[ !ht]\centering
\def\sym#1{\ifmmode^{#1}\else\(^{#1}\)\fi}
\caption{Avg. effect of tax on consumer betting prices - all sports}
    \label{diffindiff_het_allsport}
    \scalebox{0.75}{
    \begin{threeparttable}
\begin{tabular}{l*{7}{c}}
\hline\hline

                 &\multicolumn{3}{c}{All Leagues}     & &\multicolumn{3}{c}{Excl. "cross" leagues}     \\    \cmidrule{2-4} \cmidrule{6-8}
                                  &\multicolumn{2}{c}{Subsamples}     & Interact. & &\multicolumn{2}{c}{Subsamples} &  Interact.  \\    \cmidrule{2-3} \cmidrule{6-7}
                    &\multicolumn{1}{c}{No shroud}      &\multicolumn{1}{c}{Shrouding}      &   &   &\multicolumn{1}{c}{No shroud}      &\multicolumn{1}{c}{Shrouding}      &   \\
                    &\multicolumn{1}{c}{(1)}      &\multicolumn{1}{c}{(2)}      &\multicolumn{1}{c}{(3)}   &   &\multicolumn{1}{c}{(4)}      &\multicolumn{1}{c}{(5)}      &\multicolumn{1}{c}{(6)}      \\
\midrule
Tax effect on prices            &               0.002         &               0.041\sym{***}&                             &  &             0.001         &               0.040\sym{***}&                             \\
                    &             (0.002)         &             (0.003)         &                             &   &          (0.002)         &             (0.003)         &                             \\
[1em]
$T_{i,m}$           &                             &                             &               0.046\sym{***}&    &                         &                             &               0.044\sym{***}\\
                    &                             &                             &             (0.002)         & &                            &                             &             (0.002)         \\
[1em]
$T_{i,m}$ x $no Shroud_{i,m}$          &                             &                             &              -0.045\sym{***}&    &                         &                             &              -0.042\sym{***}\\
                    &                             &                             &             (0.001)         &  &                           &                             &             (0.001)         \\
[1em]
Constant            &               0.084\sym{***}&               0.079\sym{***}&               0.081\sym{***}&    &           0.088\sym{***}&               0.088\sym{***}&               0.088\sym{***}\\
                    &             (0.003)         &             (0.003)         &             (0.003)         &   &          (0.003)         &             (0.003)         &             (0.003)         \\
\midrule
Time FE             &                 Yes         &                 Yes         &                 Yes         &  &               Yes         &                 Yes         &                 Yes         \\
Agency FE           &                 Yes         &                 Yes         &                 Yes         &  &               Yes         &                 Yes         &                 Yes         \\
League FE           &                 Yes         &                 Yes         &                 Yes         &   &              Yes         &                 Yes         &                 Yes         \\
\hline
Observations        &             2,466,796         &             2,959,638         &             3,057,547         &   &          1,901,791         &             2,006,335         &             2,027,511         \\
\(R^{2}\)           &               0.703         &               0.744         &               0.755         &   &            0.713         &               0.740         &               0.742         \\
\hline\hline
\end{tabular}
\begin{tablenotes}[flushleft]
\item \emph{Notes:} This table reports the estimated average sports betting tax effects on consumer betting prices according to Eq. \ref{eq:tax_avg_effect}, comparing changes in prices before and after the tax reform between German and Non-German agencies. Columns 3 and 6 only consider German agencies that do not shroud taxes. Likewise, columns 1 and 4 (2 and 5) only consider agencies that deduct a 5\% tax surcharge from the advertised winnings (from the advertised wager). Each agency-event combination (unique observation) is equally weighted. Columns 1-3 consider all leagues, while Columns 4-6 exclude events in German leagues for the control group and Non-German leagues for the treatment group. All estimations include time-, agency- and league-fixed effects. Estimations in Panel A consider observations from soccer matches, and Panel B considers observations from all sports. Robust standard errors clustered at the agency level are reported in the brackets. * denotes significance at the 10-\%, ** at the 5-\%, and *** at the 1-\% level.
\end{tablenotes}
\end{threeparttable}
}
\end{table}

\pagebreak

\section{Standard theory of tax pass-through and tax salience}\label{sec:app_standard}

This section presents a partial equilibrium model of taxation based on the textbook model by \citet{kotlikoff1987tax} and the tax salience workhorse model by \citet{chetty2009salience}. It derives the pass-through rates of an excise tax in this model and discusses the implications when market competition becomes imperfect following \citet{weyl2013pass} and \citet{kroft2020salience}.


\subsubsection*{Textbook model of taxation}

Let $q=p-t$ denote the suppliers' price net of taxes. $p$ is the effective tax-inclusive price faced by consumers, and $t$ is the per unit tax. The pass-through rate is defined as $\rho=\frac{dp}{dt}$, i.e., the rate at which consumer prices increase when the tax rises. In the standard model of taxation, based on a simple partial equilibrium setting under perfect competition, the equilibrium condition is given by: $D(p)=S(p-t)$.\footnote{Note that the partial equilibrium analysis presupposes that the market is small relative to the entire economy \citep{kotlikoff1987tax}, which applies to the market of sports betting.} In the textbook model \citep{kotlikoff1987tax}, the economic incidence of a tax is independent of who pays the tax (statutory incidence). Fully differentiating the equilibrium condition with respect to $t$ and solving for the pass-through rate gives us:

\begin{equation*}
\rho=\dfrac{\dfrac{\delta S}{\delta p}}{\left(\dfrac{\delta S}{\delta p} -\dfrac{\delta D}{\delta p}\right)}=\dfrac{1}{\left( 1+\dfrac{\eta_D}{\eta_S}\right)},
\end{equation*}

where $\eta_D=-\frac{\delta D}{\delta p} \frac{p}{D(p)}$ is the \emph{elasticity of demand} and $\eta_S=\frac{\delta S}{\delta p} \frac{p}{S(p-t)}$ is the \emph{elasticity of supply}. Accordingly, the pass-through rate $\rho$ only depends on the supply and demand elasticities. The pass-through is higher when demand is less elastic relative to supply. A full-pass through  ($\rho=1$), i.e., the demand side of the market bears the entire burden of the tax, implies that either supply is perfectly elastic or demand is perfectly inelastic. The total burden of the tax is shared between consumers and suppliers ($0\leq \rho\leq1$), i.e., there is no "over-shifting"  of the tax onto consumers. 

If competition is imperfect, "over-shifting" of a tax is possible in equilibrium as the convexity of the demand and supply curves becomes important for the pass-through of a tax. Suppose a monopolist who chooses quantity $w$ to maximize following profit function: $\Pi=(p(w)-t)w-c(w)$, where $p(w)$ is the inverse demand and  $c(w)$ the monopolist's cost function. In such a setting, the pass-through of a tax is given by \citep{weyl2013pass}: 

\begin{equation*}
\rho=\dfrac{1}{\left( 1+\dfrac{\eta_D-1}{\eta_S}+\dfrac{1}{\eta_{ms}}\right)},
\end{equation*}

where $ms=-\frac{\delta p(w)}{\delta w} w$ is the negative of the marginal consumer surplus and $\eta_{ms}= \frac{ms}{\frac{\delta ms}{\delta w} w}$ is the elasticity of the inverse marginal surplus function. Consequently, the smaller $\eta_{ms}$, which can be understood as the curvature (of the logarithm) of demand \citep{weyl2013pass}, the higher the pass-through onto consumers ceteris paribus. Pass-through is larger than one if and only if $\eta_{ms}$ is negative \citep{bulow1983note}, i.e. demand  is very convex \citep{pless2019pass}. This result in monopolies carries over to oligopolistic competition models, with the difference that the degree of market competition and market power becomes an additional determining factor for pass-through rates \citep[see][for a thorough discussion]{weyl2013pass}. In general, the pass-through of taxes is higher when markets are less competitive and vice versa. This connection to market competition and market power has important implications not only for questions in public finance but also for various topics in industrial organization \citep{ pless2019pass, miravete2020one}. For instance, over-shifting of taxes can be used as a test for market power in different markets \citep{pless2019pass}.


\subsubsection*{Tax salience}

The standard theory of taxation assumes that consumers are perfectly informed about taxes and that they optimize fully with respect to tax-inclusive effective prices. The tax (surcharge) elasticity is assumed to be equivalent to the elasticity with respect to the (posted) price. Thus, the tax incidence in the standard model, both under perfect and imperfect competition, is independent of whether consumer price changes are due to changes in the base price or the tax surcharge. Numerous studies suggest that this assumption is often violated and that the incidence also depends on the salience of a tax or the way taxes are presented \citep[e.g.][]{chetty2009salience}. Suppose consumers are subject to limited attention and the tax surcharge is not fully salient. In that case, consumers may react differently to changes in displayed prices compared to changes in the tax surcharge. Consequently, the tax incidence and theoretical welfare implication of the tax may change.
 
In relation to the "shrouding" policies by betting agencies presented in section \ref{sec:tax_shrouding}, one can decompose effective consumer prices in a tax surcharge set by suppliers, $\tau$, and advertised consumer prices, $\Tilde{p}$: $p=\Tilde{p}+\tau$. Accordingly, supplier prices can be rewritten to: $q=\Tilde{p}-t+\tau$. For the sake of simplicity, assume for now that the tax surcharge is always equal to the tax ($\tau=t$), implying $q=\Tilde{p}$ and $p=\Tilde{p}+t=q+t$.
 
Following the tax salience literature \citep{chetty2009salience}, I assume that the consumers only perceive a fraction $0\leq \psi$ of the actual tax surcharge ($\tau=t$). In contrast to the theoretical model in Section \ref{sec:theory_sin}, I follow the tax salience literature here and assume that the salience parameter is exogenously given and not subject to strategic "shrouding" considerations by firms. The perceived price from the consumers' perspective is thus given by: $p_{\psi}=q+\psi t$. $\psi$ defines the extent of underreaction to the tax, which can be interpreted as the degree of tax salience or inattention. As in the model of \citet{chetty2009salience}, $\psi$ is determined the ratio of the tax ($\eta_{D, p_{\psi}|t}$) and price elasticity ($\eta_{D, p_{\psi}|q}$). The equilibrium condition is given by $D(q + \psi t)=S(q)$. By the implicit function theorem, we can derive the following incidence on producers: 

\begin{equation*}
\dfrac{dq}{dt}=\dfrac{ \dfrac{ \delta D}{\delta p_{\psi}|t} }{\left( \dfrac{\delta S}{\delta q} - \dfrac{\delta D}{\delta p_{\psi}|q}  \right)}=\dfrac{\psi \dfrac{ \delta D}{\delta p_{\psi}|q} }{\left( \dfrac{\delta S}{\delta q} - \dfrac{\delta D}{\delta p_{\psi}|q}  \right)}=\dfrac{\psi   \eta_{D, p_{\psi}|q}}{\left( \dfrac{p}{q} \eta_{S, q} + \eta_{D, p_{\psi}|q}  \right)}
\end{equation*}

Accordingly, the incidence on consumers is equal to:

\begin{equation*}
\dfrac{dp}{dt}=1+\dfrac{dq}{dt}=\dfrac{\left( \dfrac{p}{q} \eta_{S, q} + (1-\psi)\eta_{D, p_{\psi}|q}  \right)}{\left( \dfrac{p}{q} \eta_{S, q} + \eta_{D, p_{\psi}|q}  \right)}
\end{equation*}

Consequently, the pass-through of a tax is determined not only by the supply and demand elasticities but also by the salience parameter $\psi$. The higher the salience of a tax, the lower the incidence on consumers. If the consumers are entirely inattentive to the tax ($\psi=0$), the incidence falls completely on consumers. We are back to the standard model if the tax is fully salient ($\psi=1$). 

\citet{bradley2020hidden} derives the tax incidence under tax salience in a monopoly and \citet{kroft2020salience} generalizes the tax salience model to imperfect competition models along the lines of \citet{weyl2013pass}. Under imperfect competition, tax salience's effect on pass-through rates depends on the market structure and the curvature of demand. In special cases, higher tax salience can even increase the incidence on consumers.

\section{Lemmas and Proofs}

\subsection{Perfect competition}\label{sec:app_proof_perfcomp}

\begin{lemma}\label{lem:app_proof_perfcomp}

Suppose $\gamma<1$ and the homogeneous consumers optimally chooses $\Tilde{x}^*(t)$ according to her decision utility Eq. \ref{eq:dec_util}. Then $t^*=\dfrac{(1-\gamma)c_x(x^*)}{\theta}$ financing a uniform lump sum transfer $W^*$ back to the consumers would implement the first-best solution $(x^*, z^*)$. $\theta=1$ represents the standard case with no underreaction.
\end{lemma}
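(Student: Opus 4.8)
The plan is to verify directly that the candidate tax rate, together with the budget-balanced rebate it finances, makes the consumer's privately optimal choice coincide with the first-best bundle $(x^*,z^*)$. First I would record the two characterizing first-order conditions. Under perfect competition with constant marginal cost $m$, the equilibrium producer price equals $m$ and the tax-inclusive consumer price is $p^*=m+t$ (as established in the text), so firms earn zero profit. Since the consumer underreacts to the tax and perceives a surcharge $\theta t$, her problem \eqref{eq:dec_util} is quasi-linear, so the transfer drops out of the interior first-order condition, and $\tilde{x}^*_\theta(t)$ is characterized by $v_x(\tilde{x}^*_\theta(t))-\gamma c_x(\tilde{x}^*_\theta(t))=m+\theta t$. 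The first-best quantity $x^*$ solves $v_x(x^*)-c_x(x^*)=m$, and the first-best composite consumption is $z^*=W-m x^*$.

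Second I would solve for the $t$ that equates the consumer's choice to $x^*$. Substituting $x=x^*$ into the consumer's first-order condition and using $v_x(x^*)=m+c_x(x^*)$ gives $m+c_x(x^*)-\gamma c_x(x^*)=m+\theta t$, i.e. $\theta t=(1-\gamma)c_x(x^*)$, hence $t^*=(1-\gamma)c_x(x^*)/\theta$ — exactly the claimed expression, collapsing to the textbook internality tax $(1-\gamma)c_x(x^*)$ when $\theta=1$. Because $v_{xx}-\gamma c_{xx}<0$ under the maintained regularity assumption (and $W$ is large enough for an interior solution), the consumer's objective is strictly concave in $x$, so this first-order condition is both necessary and sufficient, and $\tilde{x}^*_\theta(t^*)=x^*$ is the unique optimum.

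Third I would check the composite good. Per-capita tax revenue is $t^*\tilde{x}^*_\theta(t^*)=t^* x^*$, which pins down the uniform lump-sum transfer at $L^*=t^* x^*$ (denoted $W^*$ in the statement). Plugging into the budget constraint, $z=W+L^*-p^* x^*=W+t^* x^*-(m+t^*)x^*=W-m x^*=z^*$. Hence the scheme implements $(x^*,z^*)$, completing the proof.

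The main obstacle — really the only non-mechanical point — is making the budget-balance logic airtight: one must argue that the endogenously determined transfer $L^*=t^*\tilde{x}^*_\theta(t^*)$ is consistent with the consumer taking $L$ as parametric, which holds here precisely because of quasi-linearity (the transfer enters only the composite-good residual, never the marginal trade-off governing $x$), and one must confirm the induced producer price is $m+t^*$ rather than something reflecting market power, which holds because supply is perfectly elastic under perfect competition with constant returns. Everything else is substitution into the two first-order conditions.
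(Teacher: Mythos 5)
Your proposal is correct and follows essentially the same route as the paper: substitute $t^*$ into the consumer's first-order condition $v_x(\tilde{x})-\gamma c_x(\tilde{x})=m+\theta t$, use the first-best condition $v_x(x^*)-c_x(x^*)=m$ to conclude $\tilde{x}^*(t^*)=x^*$, and hence $z^*$ follows. You are somewhat more explicit than the paper on two points it leaves implicit — strict concavity of the decision objective guaranteeing the "if and only if," and the budget-balance check $z=W+t^*x^*-(m+t^*)x^*=z^*$ — but these are refinements of the same argument, not a different one.
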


\begin{proof}

$\Tilde{x}^*(t)$ satisfies $v_x(\Tilde{x}^*(t))-\gamma c_x(\Tilde{x}^*(t))=m+\theta t$. $t^*=\frac{(1-\gamma)c_x(x^*)}{\theta}$ implies, according to the first-order condition, that $v_x(\Tilde{x}^*(t))-\gamma c_x(\Tilde{x}^*(t))=m+\frac{\theta(1-\gamma)c_x(x^*)}{\theta}$, which can be rewritten to $v_x(\Tilde{x}^*(t))-c_x(x^*)-m=\gamma(c_x(\Tilde{x}^*(t))c_x(x^*))$ and is satisfied if and only if $\Tilde{x}^*(t)=x^*$ and consequently $\Tilde{z}^*(t)=z^*$

\end{proof}

\subsection{Imperfect competition}\label{sec:app_proof_imperfcomp}

\begin{lemma}\label{lem:nopure} In the case with both types of consumers ($0<\lambda<1$) and $\theta<1$, there is no pure strategy equilibrium where no firm shrouds. If $s>0$, there is no symmetric pure strategy equilibrium.
\end{lemma}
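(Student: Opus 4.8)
The plan is to handle the two assertions separately, each via a Bertrand-type argument: first pin down the only configuration a candidate equilibrium of the stated form could take (a zero-profit, ``marginal-cost'' configuration), then exhibit an explicit profitable deviation. Throughout I take $N\ge 2$, which is the relevant case for the price-competition model.

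\emph{No equilibrium in which no firm shrouds.} Suppose a pure-strategy equilibrium exists with $\tau_k=0$ for every $k$. Then every consumer---attentive or not---sees the true posted $=$ effective price of every firm, so the interaction reduces to homogeneous-good Bertrand competition with common marginal cost $m$ and per-unit tax $t$; hence each firm that sells charges $p^s_k=m+t$ and earns zero profit, and (by the standing assumptions on $v$, $c$ and $W$) the sin good is bought in strictly positive quantity at that price. Now let one firm deviate to $\tau_k=t$ and $p^s_k=m+(1-\theta)t-\varepsilon$ with $\varepsilon>0$ small. Since $\theta<1$ this posted price lies strictly below $m+t$, so the deviant has the strictly lowest posted price and attracts the whole mass $1-\lambda$ of inattentive consumers, while its per-unit margin is $p^s_k-m=(1-\theta)t-\varepsilon>0$. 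Its profit therefore jumps from $0$ to a strictly positive amount---a contradiction. (It loses the attentive consumers, on whom its margin was zero, so this is irrelevant.)

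\emph{No symmetric equilibrium when $s>0$.} A symmetric pure-strategy profile has all firms at a common $(p^s,\tau)$, so either $\tau=0$ for all---ruled out above---or $\tau=t$ for all. In the all-shroud case I first argue the common posted price must be $p^s=m$: if $p^s>m$, a firm shaving its posted price by $\varepsilon$ undercuts on the posted price (taking all inattentive consumers) and on the effective price among shrouders (taking all attentive consumers, whose hassle cost $s$ is unchanged), so it captures essentially the whole market at a margin near $p^s-m>0$, beating its $1/N$ share; and if $p^s<m$ every firm makes a loss and would rather post a prohibitive price and sell nothing. So the only candidate has $p^s=m$ and zero profits. Against it, let one firm deviate to $\tau_k=0$ and a posted price $p^s_k\in(m+t,\,m+t+\delta)$. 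Because $s>0$, an attentive consumer strictly prefers buying from this non-shrouding firm at effective price $p^s_k$ (no hassle) over a shrouding rival at effective price $m+t$ (hassle cost $s$) for every $p^s_k$ in a nondegenerate interval just above $m+t$; on that interval the deviant's margin $p^s_k-t-m$ is strictly positive, and it attracts no inattentive consumers (its posted price exceeds $m$), so it earns a strictly positive profit on the mass $\lambda$ of attentive consumers, contradicting zero equilibrium profits. Hence no symmetric pure-strategy equilibrium exists.

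The step I expect to require the most care is making the consumer-side comparisons rigorous rather than heuristic. In particular: (i) the tie-breaking/choice rule for inattentive consumers---I read ``lowest posted price'' literally, but one should check that the first-part deviation still dominates if they instead rank perceived surcharge-inclusive prices $p^s_k+\theta t\,\mathbf{1}[\tau_k=t]$, which it does because $m+(1-\theta)t-\varepsilon$ is lowest on both criteria; (ii) quantifying the slack that $s>0$ buys for the unshrouding deviation, which needs the attentive consumer's (decision-)surplus to be continuous and strictly decreasing in the effective price with bounded derivative $-\tilde{x}^*(\cdot)$, so that $s>0$ opens a strictly positive interval of deviation prices above $m+t$; and (iii) confirming the sin good is demanded in strictly positive quantity over the relevant price range, so the constructed deviations yield genuinely positive profit---this follows from $W$ being large and the no-tax optimum $x^*$ being interior.
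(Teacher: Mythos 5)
Your proof is correct and follows essentially the same route as the paper's: Bertrand undercutting pins each candidate down to zero-profit marginal-cost pricing ($p^s_k=m+t$ when no firm shrouds, $p^s_k=m$ when all shroud), and then a shrouding deviation with a lower posted price captures the inattentive consumers at a positive margin, while with $s>0$ an unshrouding deviation priced just above $m+t$ captures the attentive consumers. Your version is merely more explicit than the paper's—e.g., choosing the deviation price $m+(1-\theta)t-\varepsilon$ so it works under either reading of inattentive demand, and spelling out why the all-shroud symmetric candidate must have $p^s=m$—but the underlying argument is the same.
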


\begin{proof}
If all firms do not shroud taxes ($\tau_k=0$), Bertrand competition implies $p^*_k=p^s_k=m+t$. In this case, a firm can profitably deviate by shrouding taxes $\tau_k=t$ and setting a slightly lower posted price, implying that the firm serves all inattentive consumers for a profit because of underreaction to the tax ($\theta<1$). Suppose all firms shroud taxes $\tau_k=t$ and $s>0$. Bertrand competition implies that posted prices will equal $m$ and firms' profits are again equal to zero. In that case, a firm can make positive profits by unshrouding taxes and serving all attentive consumers' at higher than marginal cost prices if $s>0$. 
\end{proof}

\begin{lemma}\label{lem:segment}
If $N\geq4$, there is an asymmetric equilibrium in pure strategies with two market segments: i) at least two firms shroud taxes, choose $\{p^s_k=m, \tau_k=t\}$ and serve all inattentive consumer; and ii) at least two firms do not shroud taxes, choose $\{p^s_k=m+t, \tau_k=0\}$ and serve all attentive consumers.
\end{lemma}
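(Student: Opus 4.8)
The plan is to exhibit the equilibrium profile explicitly and then rule out profitable unilateral deviations by a one-line margin argument. Since $N\geq 4$, split the firms into a ``shrouding block'' $S$ with $|S|=n_1\geq 2$ and a ``transparent block'' $\bar S$ with $|\bar S|=N-n_1\geq 2$; set $(p^s_k,\tau_k)=(m,t)$ for $k\in S$ and $(p^s_k,\tau_k)=(m+t,0)$ for $k\in\bar S$. Under this profile every active firm has producer price $p_k-t=m$ and hence earns zero profit, so it suffices to show that no firm can earn strictly positive profit by deviating.

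First I would pin down the demand induced by the candidate profile. Inattentive consumers compare posted prices; the lowest posted price is $m$, offered precisely by the firms in $S$, so they split equally among the $n_1$ shrouding firms (this is consistent, since their perceived price $m+\theta t<m+t$ there is also the lowest perceived price because $\theta<1$). Attentive consumers face the same effective price $m+t$ at every firm but incur the extra disutility $s\geq 0$ at firms in $S$, so they weakly prefer $\bar S$ (strictly if $s>0$); under the usual tie-breaking they split equally among the $N-n_1$ transparent firms. With $W$ large enough, both groups buy the interior quantities implied by their respective first-order conditions, so all $N$ firms are active with zero profit.

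The core of the argument is a single observation about per-unit margins. A firm that attracts any inattentive consumers must have the weakly lowest posted price, i.e.\ $p^s_k\leq m$; since it remits $t$ regardless of $\tau_k$, its per-unit margin is $p^s_k+\tau_k-t-m\leq 0$, with equality only if it plays the equilibrium strategy $(m,t)$, and strictly negative as soon as it posts below $m$ or chooses $\tau_k=0$. A firm that attracts any attentive consumers must make its effective price plus the relevant disutility no larger than the transparent-block offer $m+t$: a shrouding deviator then needs $p^s_k+t+s\leq m+t$, i.e.\ margin $p^s_k-m\leq -s\leq 0$, while a transparent deviator needs $p^s_k\leq m+t$, i.e.\ margin $p^s_k-t-m\leq 0$. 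Any remaining deviation leaves the firm without demand and hence with zero profit. In every case the deviating firm's profit is at most zero, which equals its equilibrium profit, so the profile is a pure-strategy Nash equilibrium, and by construction it has the stated segmented form.

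I expect the delicate points to be bookkeeping rather than substance: verifying that the demand split in the candidate profile is exactly the one dictated by the ``lowest posted price'' rule and the tie-breaking convention (in particular when $s=0$, where attentive consumers are indifferent and must be assigned to the transparent block), and checking the knife-edge posted prices $p^s_k=m$ and $p^s_k=m+t$ in the deviation analysis, where a deviator ties for a segment but still only breaks even. The role of $N\geq 4$ is precisely that each block must contain at least two firms: with a singleton transparent (resp.\ shrouding) firm it could raise its posted price up to just below $m+t+s$ (resp.\ just below $m+(1-\theta)t$) and skim a strictly positive margin off its captive segment, which is why the statement restricts to $N\geq 4$.
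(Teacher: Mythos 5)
Your proposal is correct and follows essentially the same route as the paper's proof: exhibit the segmented profile, note that within each segment Bertrand-style marginal-cost pricing yields zero profits (which is exactly why $N\geq 4$, i.e.\ at least two firms per segment, is needed), and conclude that no unilateral deviation is profitable. You simply make explicit the demand allocation, tie-breaking (notably the $s=0$ case), and margin-by-margin deviation checks that the paper's two-line argument leaves implicit, which strengthens rather than changes the argument.
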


\begin{proof}
The equilibrium outcomes and prices in each market segment are equivalent to the boundary cases discussed above. The shrouding segment will only be populated by inattentive consumers and the non-shrouding segment will only be populated by attentive consumers. According to Bertrand competition, we have marginal cost pricing in both segments and profits are equal to zero in both segments, implying that no firm can profitably deviate.
\end{proof}

\end{document}